\newtheorem{proposition}{Proposition}
\newtheorem{lemma}{Lemma}
\newtheorem{remark}{Remark}
\newtheorem{definition}{Definition}
\newtheorem{theorem}{Theorem}
\newtheorem{example}{Example}
\title{Precoding Based Network Alignment using Transform Approach for Acyclic Networks with
Delay}
\author{ Teja Damodaram Bavirisetti, Abhinav Ganesan, K. Prasad and B. Sundar Rajan
\thanks{This work was supported partly by the DRDO-IISc program on Advanced Research in Mathematical Engineering through a research grant, and partly by the INAE Chair Professorship grant to B.~S.~Rajan. Parts of this paper appeared in the Proceedings of IEEE Information Theory Workshop (ITW) 2011, IEEE International Symposium on Information Theory (ISIT) 2012, and IEEE Global Communications Conference (GLOBECOM) 2012.}
\thanks{Teja Damodaram Bavirisetti is currently with Broadcom Technologies, Bangalore, India. Abhinav Ganesan, K. Prasad and B.~Sundar~Rajan are with the Department of Electrical Communication Engineering, Indian Institute of Science, Bangalore-560012, India (e-mail: dbaviris@broadcom.com, \{abhig\_88, prasadk5, bsrajan\}@ece.iisc.ernet.in).}
% \thanks{Digital Object Identiﬁer XXXXX}
}
\begin{document}

\maketitle
\thispagestyle{empty}	

\begin{abstract}
The algebraic formulation for linear network coding in acyclic networks with the links having integer delay is well known. Based on this formulation, for a given set of connections over an arbitrary acyclic network with integer delay assumed for the links, the output symbols at the sink nodes, at any given time instant, is a $\mathbb{F}_{p^m}$-linear combination of the input symbols across different generations where, $\mathbb{F}_{p^m}$ denotes the field over which the network operates ($p$ is prime and $m$ is a positive integer). We use finite-field discrete fourier transform (DFT) to convert the output symbols at the sink nodes, at any given time instant, into a $\mathbb{F}_{p^m}$-linear combination of the input symbols generated during the same generation {\em without making use of memory at the intermediate nodes}.  We call this as transforming the acyclic network with delay into {\em $n$-instantaneous networks} ($n$ is sufficiently large). We show that under certain conditions, there exists a network 
code satisfying sink demands in the usual (non-transform) approach if and only if there exists a network code satisfying sink demands in the transform approach. When the zero-interference conditions are not satisfied, we propose three Precoding Based Network Alignment (PBNA) schemes for three-source three-destination multiple unicast network with delays ($3$-S $3$-D MUN-D) termed as PBNA using transform approach and time-invariant local encoding coefficients (LECs), PBNA using time-varying LECs, and PBNA using transform approach and block time-varying LECs.  We derive sets of necessary and sufficient conditions under which throughputs close to $\frac{n'+1}{2n'+1}$, $\frac{n'}{2n'+1}$, and $\frac{n'}{2n'+1}$  are achieved for the three source-destination pairs in a $3$-S $3$-D MUN-D employing PBNA using transform approach and time-invariant LECs, and PBNA using transform approach and block time-varying LECs where, $n'$ is a positive integer. For PBNA using time-varying LECs, we obtain a sufficient condition 
under 
which a throughput demand of $\frac{n_1}{n}$, $\frac{n_2}{n}$, and $\frac{n_3}{n}$ can be met for the three source-destination pairs in a $3$-S $3$-D MUN-D where, $n_1$, $n_2$ and $n_3$ are positive integers less or equal to the positive integer $n$. This condition is also necessary when $n_1+n_3=n_1+n_2=n$ where, $n_1 \geq n_2 \geq n_3$.

\end{abstract}	
\begin{keywords}
  Acyclic network, Delays, Interference, Linear Network Coding, Network Alignment, Transform Approach. 
\end{keywords}
\section{Introduction}
\label{sec1}
The notion of Network Coding was introduced in \cite{ACLY} where the capacity of wireline multicast networks is characterized. Scalar linear network coding was found to achieve the capacity of multicast networks \cite{CLY}. The existence problem of scalar linear network coding for networks without delay (i.e., instantaneous networks) was converted into an algebraic problem in \cite{KoM}. In the meanwhile, it was shown that \cite{Leh} there exist solvable non-multicast networks where scalar linear network coding is insufficient. In addition, \cite{Leh} also showed that determining the existence of linear network coding solution for multiple unicast networks is NP-hard in general. In \cite{MEHK}, it was conjectured that vector linear network coding suffices to solve networks with arbitrary message demands.  Subsequently, Dougherty et al. \cite{DFZ}  disproved the conjecture by showing that there exists networks where vector linear network coding does not achieve network capacity and that nonlinear network 
coding are required in general. However, the practicality of linear network codes led to construction of suboptimal network codes for Multiple Unicast networks based on linear programming \cite{TRLKM}.  
 
The concept of interference alignment originally introduced in interference channels \cite{CaJ} was applied by Das et al. \cite{DVJM,ADVJM} in a three-source three-destination instantaneous multiple unicast network ($3$-S $3$-D I-MUN), where the zero interference conditions of Koetter et al. \cite{KoM} cannot be met, to achieve a rate close to half for each source-destination pair. Since precoding matrices are used at the sources for interference alignment and exploited for network coding in $3$-S $3$-D I-MUN, it came to be known as Precoding Based Network Alignment (PBNA) \cite{MRMJ}. Though PBNA is not optimal in general for a  $3$-S $3$-D I-MUN \cite{ADVJM}, it provides a simple and systematic manner of network code construction that can guarantee (under certain conditions) an asymptotic rate of half for every source-destination pair when the zero interference conditions cannot be met.

A set of sufficient conditions for feasibility of PBNA in a $3$-S $3$-D I-MUN were obtained in \cite{DVJM}. However, the set of conditions were infinite and hence, impossible to check. Moreover, the sufficient conditions were constrained by the use of  particular precoding matrices at the sources. These motivated the work of Meng et al. \cite{MRMJ} where, a finite set of conditions are obtained for feasibility of PBNA in a $3$-S $3$-D I-MUN that are both necessary and sufficient. We call these finite set of conditions as the {\em ``reduced feasibility conditions''}. The highlight of their result is that PBNA with arbitrary precoding matrices is feasible iff PBNA is feasible with the choice of precoding matrices as in \cite{DVJM} (with the number of symbol extensions being  greater than or equal to five). The derivation of the result involved taking into account graph related properties.

The case of acyclic networks with delays was abstracted in \cite{KoM} as acyclic networks where each link in the network has an integer delay associated with it. In the current work, we look at a technique similar to \cite{DVJM} for providing throughput guarantees in certain acyclic networks with delays where the zero-interference conditions cannot be satisfied while  {\em not making use of any memory at the intermediate nodes} (i.e., nodes other than the sources and sinks). The set of all $\mathbb F_{p^m}$-symbols generated by the sources at the same time instant are said to constitute a {\em generation} where, $\mathbb{F}_{p^m}$ denotes the field over which the network operates ($p$ is a prime number and $m$ is a positive integer). The output symbols at the sink nodes, at any given time instant, is a $\mathbb{F}_{p^m}$-linear combination of the input symbols across different generations. We convert the output symbols at the sink nodes, at any given time instant, into a $\mathbb{F}_{p^m}$-linear 
combination of the input symbols generated during the same generation, by using techniques similar to Multiple Input Multiple Output-Orthogonal Frequency Division Multiplexing (MIMO-OFDM) \cite{RaC}. We call this technique as the {\em transform technique}, since we use Discrete Fourier Transform (DFT) over a finite field towards achieving this instantaneous behaviour in the network. 

As a first step towards guaranteeing a minimum throughput when the zero-interference conditions cannot be satisfied in an acyclic network with delay, we consider a three-source three-destination multiple unicast network with delays ($3$-S $3$-D MUN-D) with the source-destination pair denoted by $S_i$-$T_i$, $i =1,2,3$. We also assume a min-cut of one between source $S_i$ and destination $T_i$. Under this set-up, we derive a sufficient condition under which {\em PBNA using  time-varying local encoding coefficients (LECs)} is feasible whenever the throughput demands for the three source-destination pairs are given by $\frac{n_1}{n}$, $\frac{n_2}{n}$ and $\frac{n_3}{n}$ where, $n_1$, $n_2$, $n_3$, and $n$ are positive integers with $n_1,n_2,n_3\leq n$. This condition is also necessary when $n_1+n_3=n_1+n_2=n$  where, it is assumed without loss of generality that $n_1 \geq n_2 \geq n_3$. The condition is purely algebraic. But, this condition is often difficult to verify in practice. However, when {\em time-
invariant} LECs are used, our transform technique aids in obtaining network alignment matrices of the form similar to \cite{DVJM}, and the derived set of necessary and sufficient conditions under which PBNA is feasible under the transform technique are simpler to verify for a given number of symbol extensions $n=2n'+1$. We term this PBNA scheme as {\em PBNA using transform approach and time-invariant LECs}. Under this PBNA scheme, throughputs of $\frac{n'+1}{2n'+1}$, $\frac{n'}{2n'+1}$, and $\frac{n'}{2n'+1}$ are achieved for $S_1-T_1$, $S_2-T_2$, and  $S_3-T_3$ respectively, where $n'$ is a positive integer. So, for large $n'$, each of the throughputs is close to half. However, these conditions are applicable only to the case of precoding over a fixed number of symbol extensions, i.e., if the feasibility test fails over a symbol extension of length $2n'+1$, it is not known if the test would fail for a symbol extension of length greater than $2n'+1$. Hence, on the look-out for an elegant set of conditions 
that 
would help check the 
feasibility of PBNA in a $3$-S $3$-D MUN-D over any number of symbol extensions (like in \cite{MRMJ}), we propose a PBNA scheme for $3$-S $3$-D MUN-D which is different from PBNA using transform approach and time-invariant LECs, and PBNA using time-varying LECs. This scheme is termed as {\em PBNA using transform approach and block time-varying LECs} and we show that its feasibility conditions are the same as the reduced feasibility conditions of Meng et al. The drawback in PBNA using transform approach and block time-varying LECs is that the decoding delay is higher compared to PBNA using time-varying LECs, and PBNA using transform approach and time-invariant LECs. Formally, we define block time-varying LECs as follows.
\begin{definition}
A $3$-S $3$-D MUN-D is said to use block time varying LECs when the LECs are varied with every time block of length $k>1$ and remain constant within each time block.
\end{definition}
%  Note that, for the case of instantaneous networks dealt with in \cite{DVJM}, using time-varying LECs was necessary for PBNA. But, for the case of networks with delay there exists networks where time-invariant LECs are sufficient for PBNA.

The contributions of this paper are summarized as follows. 
\begin{itemize}
\item  Given an acyclic network with delay, we convert the output symbols at the sink nodes at any given time instant into a $\mathbb{F}_{p^m}$-linear combination of the input symbols generated during the same generation, using finite-field DFT. We call this as transforming the acyclic network with delay into {\em $n$-instantaneous networks} where, $n$ is sufficiently large.

\item Using a constructive proof, we show that there exists a network code
(satisfying a certain property) that achieves the sink demands in the usual
(non-transform) approach if and only if there exists a network code satisfying
sink demands in the transform approach .

\item {\em PBNA with time-varying LECs:} For $3$-S $3$-D MUN-D, which do not satisfy the zero-interference conditions, we obtain a sufficient condition (called the {\em feasibility condition} for  PBNA with time-varying LECs) under which a throughput close to $\frac{n_1}{n}$, $\frac{n_2}{n}$, and $\frac{n_3}{n}$ are achieved for the source-destination pairs $S_1$-$T_1$, $S_2$-$T_2$, and $S_3$-$T_3$ respectively using time-varying LECs where, $n_1$, $n_2$, and $n_3$ are positive integers less than or equal to $n$. This condition is also necessary when $n_1+n_3=n_1+n_2=n$  where, without loss of generality, it is assumed that $n_1 \geq n_2 \geq n_3$.

\item {\em PBNA using transform approach and time-invariant LECs:} Assuming time-invariant LECs, for a given number of symbol extensions $n=2n'+1$, we use our transform technique to achieve throughputs close to $\frac{n'+1}{2n'+1}$, $\frac{n'}{2n'+1}$, and $\frac{n'}{2n'+1}$ for $S_1-T_1$, $S_2-T_2$, and  $S_3-T_3$ respectively under certain conditions, along with the use of alignment strategies. When $n'$ is large, the throughputs are close to half. The derived set of necessary and sufficient conditions under which PBNA using transform approach and time-invariant LECs is feasible are easier to verify when compared to the feasibility condition for PBNA with time-varying LECs. The set of necessary and sufficient conditions for this scheme can be derived as a special case of that of PBNA with time-varying LECs.

\item {\em PBNA using transform approach and block time-varying LECs:}  Using transform techniques and block time-varying LECs, a PBNA scheme different from the above two is proposed. The highlight of this scheme is that the derived set of necessary and sufficient conditions for feasibility of this PBNA scheme are shown to be the same as the reduced feasibility conditions for $3$-S $3$-D I-MUN which are independent of the number of symbol extensions $2n'+1\geq 5$ over which the independent symbols of each source are precoded. However, the decoding delay is higher in this scheme compared to the two other PBNA schemes proposed in this paper.
\end{itemize}

A comparison of the three proposed PBNA schemes is summarized in Table \ref{tab-comparison}.
%%%%%%%%%%%%
\begin{table*} \label{tab-comparison}
\centering
\normalsize
\caption{Comparison of the three proposed PBNA schemes where PBNA $1$ denotes PBNA using transform approach and time-invariant LECs, PBNA $2$ denotes PBNA using time-varying LECs, and PBNA $3$ denotes PBNA using transform approach and block time-varying LECs}
\begin{tabular}{|c|c|c|c|}
\hline
 &PBNA $1$ & PBNA $2$ & PBNA $3$\\
\hline
Decoding Delay for &  &  & \\
the first $p$ symbols, & $2n'+1$ & $n=2n'+1$ &  $k(2n'+1)$\\
$p \leq k$ for some positive integer $k$ & & & \\
\hline
Dependence of the derived  & Dependent & Dependent & Independent\\
feasibility conditions on $n'/n$ &(on $n'$) & (on $n$) & (of $n$)\\
\hline
Existence of $3$-S $3$-D MUN-D  & Cannot exist & Can exist& \\
where one PBNA scheme is feasible & (Proposition \ref{prop-TIL_inf},  & (Example \ref{eg_TVL_F},  & Not known\\
when the other two are not. & Section \ref{subsec4}) &  in Section \ref{subsec4}) & \\
\hline
\end{tabular}
\end{table*}
%%%%%%%%%%%%

The organization of this paper is as follows. In Section \ref{sec2}, after a brief overview of the system model for acyclic networks with delays using time-invariant LECs \cite{KoM}, we derive the system model for acyclic networks with delays, using time-varying LECs. Section \ref{sec3} presents the transform technique using which we convert the usual convolutional behaviour of the network with delay into instantaneous behaviour. In Section \ref{sec3}, we also prove the interchangeability of solving the usual (non-transform) network code existence problem and the counterpart in the transform technique. In Section \ref{subsec1},  PBNA using transform approach and time-invariant LECs is described for $3$-S $3$-D MUN-D where the zero-interference conditions cannot be satisfied. PBNA with time-varying LECs is described in Section \ref{subsec2}, and PBNA using transform approach and block time-varying LECs is described in Section \ref{sec5}. The feasibility conditions of the three PBNA schemes are compared in 
Section \ref{subsec4}. In Section \ref{sec6}, we discuss the potential of on-off schemes in achieving half-rate for every source-destination pair in a $3$-S $3$-D MUN-D. We conclude our paper in Section \ref{sec7} with discussion and directions for further research.

{\em Notations:} The cardinality of a set $E$ is denoted by $|E|$. A superscript of $t$ accompanying any variable (for example, $\epsilon^{(t)}$) or any matrix (for example, $M^{(t)}$) denotes that they are a function of time $t$, unless mentioned otherwise. The $i^\text{th}$ row, $j^\text{th}$ column element of a matrix $A$ is denoted by $[A]_{ij}$. The notation $\text{Col}(P) \subset \text{Col}(Q)$ denotes that the columns of the matrix $P$ are a subset of the columns of the matrix $Q$. The notation $\text{Span}(P)$ indicates the sub-space spanned by the columns of the matrix $P$. The notation $\text{Span}(P) \subset \text{Span}(Q)$ denotes that the space spanned by the columns of the matrix $P$ is a sub-space of the space spanned by the columns of the matrix $Q$. The determinant of a square matrix $A$ is denoted by $det(A)$. An identity matrix of size $\mu \times  \mu$ is denoted by $I_{\mu}$. For three-source three-destination unicast networks we shall use the term destination to denote sink. The Galois 
Field of cardinality $p^m$ is denoted by $GF(p^m)$ where, $p$ is a prime number and $m$ is a positive integer. The notation $a|b$ denotes that $a$ divides $b$ where, $a$ and $b$ belong to a ring. The notation $a \nmid b$ denotes that $a$ does not divide $b$. For positive integers $a$ and $b$, $\text{LCM}(a,b)$ denotes the least common multiple of $a$ and $b$. The notation $f(A)$ where, $A$ is a matrix, denotes that $f$ is a function of elements of the matrix $A$. The notation $diag(a_1,a_2,\cdots,a_n)$ denotes a diagonal matrix whose $i^{\text{th}}$ diagonal entry is given by $a_i$, for $i=1,2,\cdots,n$.
\section{System Model}
\label{sec2}
First, we shall briefly review the system model from \cite{KoM}. We consider a network represented by a Directed Acyclic Graph (DAG) $\cal G$ $=$ $(V, E)$, where $V$ is the set of nodes and $E$ is the set of directed links. We assume that every directed link between a pair of nodes represents an error-free link and has a capacity of one ${\mathbb F}_{p^m}$ symbol per link-use. Multiple links between two nodes are allowed and the $i^{th}$ directed link from $v_1$ $\in$ $V$ to $v_2$ $\in$ $V$ is denoted by $(v_1,v_2,i)$. The head and tail of a link $e$ $=$ $(v_1,v_2,i)$ are denoted by \mbox{$v_2$ $=$ $\mbox{head}(e)$} and $v_1$ $=$ $\mbox{tail}(e)$. Without loss of generality, we assume that a link between a pair of nodes has a unit delay (if the link has any other non-zero integer delay, we could introduce an appropriate number of dummy nodes in between the pair of nodes which are then connected by links of unit delays). Let ${\cal X}(v)=\{X(v,1),X(v,2),...,X(v,\mu_v)\}$ be the collection of discrete random 
processes that are generated at the node $v$. Let $\underline{X_v}$ $=$ $[X(v,1)~X(v,2)~...~X(v,\mu_v)]^T$. The random process transmitted through link $e$ is denoted by $Z(e)$. Communication is to be established between selected nodes in the network, i.e., we are required to
replicate a subset of the random process in ${\cal X}(v)$ at some different node $v^\prime$. This subset is denoted by ${\cal X}(v,v^\prime)$. A connection $c$ is defined as a triple \mbox{$(v,v^\prime,{X}(v,i))$ $\in$ $V \times V \times {\cal X}(v,v^\prime)$}, for some $i\in \{1,2, \cdots,\mu_v \}$\footnote{The definition of connection adopted here is different from that in \cite{KoM}.}. For the connection $c$, $v$ is called the source and $v^\prime$ is called the sink of $c$, i.e., $v$ $=$ $\mbox{source}(c)$ and $v^\prime$ $=$ $\mbox{sink}(c)$ ($\mbox{source}(c)$ $\neq$ $\mbox{sink}(c)$). The collection of $\nu_{v^\prime}$ random processes ${\cal Y}(v^\prime)$ $=$ $\{Y(v^\prime,1),Y(v^\prime,2),...,Y(v^\prime,\nu_{v^\prime})\}$ denotes the output at sink $v^\prime$. Let $\underline{Y_{v^\prime}}$ $=$ $[Y(v^\prime,1)~Y(v^\prime,2)~...~Y(v^\prime,\nu_{v^\prime})]^T$.

The input random processes $X(v,i)$, output random processes $Y(u,j)$ and  random processes $Z(e)$ transmitted on the link $e$ are considered as a power series in a delay parameter $D$, i.e., $X(v,i)$ $=$ $\sum_{t=0}^{\infty}X^{(t)}(v,i)D^t$, $Y(u,j)$ $=$ $\sum_{t=0}^{\infty}Y^{(t)}(u,j)D^t,$ and $Z(e)$ $=$ $\sum_{t=0}^{\infty}Z^{(t)}(e)D^t$.

Let $\cal G$ $=$ $(V, E)$ be an acyclic network with arbitrary finite integer
delay on its links. $\cal G$ is taken to be a $\mathbb{F}_{p^m}$-linear network \cite{KoM} where,
for all links the random process $Z(e)$ on a link $e=(v,u,i)$ $\in$ $E$
satisfies
%%%%%%%%%%%%%
\begin{align}
\nonumber
Z^{(t+1)}(e) = \sum_{j=1}^{\mu_v} \alpha_{j,e} X^{(t)}(v,j) +
\sum_{e^\prime:\text{head}(e^\prime)=\text{tail}(e)} \beta_{e^\prime,e}
Z^{(t)}(e^\prime)
\end{align}where, $\alpha_{j,e}$ and $\beta_{e^\prime,e}$ belong to
$\mathbb{F}_{p^m}$.
The output at any sink node $v^\prime$, is taken to be
%%%%%%%%%%%%%
\begin{align}
\label{eqnoutput}
Y^{(t+1)}(v^\prime,j) =\sum_{e^\prime:\mbox{{\small head}}(e^\prime)=v^\prime}
\epsilon_{e^\prime,j} Z^{(t)}(e^\prime)
\end{align}
%%%%%%
where $\epsilon_{e^\prime,j}$ $\in$ $\mathbb{F}_{p^m}$. The coefficients,
$\alpha_{j,e}$, $\beta_{e^\prime,e}$ and  $\epsilon_{e^\prime,j}$ are also
called {\em local encoding coefficients} (LECs). The vector consisting of all LECs is
denoted by $\underline{\varepsilon}.$ Note that in \cite{KoM}, the definition
for the output processes at any given time instant at any sink involves linear
combinations of the received processes and output processes across different previous time
instants, and hence the variables involved in such linear combinations together
performed the function of decoding the received processes at the sinks to the
demanded input processes. However, in (\ref{eqnoutput}), at every sink, we only define a
preprocessing of the received symbols corresponding to the
previous time instant alone. The outputs $Y^{(t+1)}(v^\prime,j)$ as $t$ varies,
will then be used by sink-$v^\prime$ to decode the demanded input processes using
sufficient delay elements for feed-forward and feedback operations.  The LECs
are time-invariant unless mentioned otherwise.

We assume some ordering among the sources so that the random process generated
by the sources can be denoted, without loss of generality, as
$\underline{X_1}(D)$, $\underline{X_2}(D)$, $...$, $\underline{X_s}(D)$, where
$s$ denotes the number of sources and $\underline{X_i}(D)$ is a $\mu_i \times 1$
column vector given by
%%%%%%%%%%%%%%
\begin{align*}
 \underline{X_i}(D)=[{X_{i1}}(D) ~{X_{i2}}(D) ~\ldots {X_{i\mu_i}}(D)]^T.
\end{align*}
%%%%%%%%%%%%%%
\noindent Similarly, we assume some ordering among the sinks so that the output
random process at the sinks can be denoted, without loss of generality, as
$\underline{Y_1}(D)$, $\underline{Y_2}(D)$, $...$, $\underline{Y_r}(D)$, where
$r$ denotes the number of sinks and $\underline{Y_i}(D)$ is a $\nu_i \times 1$
column vector given by
%%%%%%%%%%%%%%
\begin{align*}
 \underline{Y_i}(D)=[{Y_{i1}}(D) ~{Y_{i2}}(D) ~\ldots {Y_{i\nu_i}}(D)]^T.
\end{align*}
%%%%%%%%%%%%%%
Let
\begin{align*}
Y(D)&=[\underline{Y_1}(D)^T ~\underline{Y_2}(D)^T~... ~\underline{Y_r}(D)^T]^T\\
&=[{y_1}(D) ~{y_2}(D)~... ~{y_{\nu}}(D)]^T, \\
X(D)&=[\underline{X_1}(D)^T ~\underline{X_2}(D)^T~... ~\underline{X_s}(D)^T]^T\\
&=[{x_1}(D) ~{x_2}(D)~... ~{x_{\mu}}(D)]^T
\end{align*}
%%%%%%%%%%%%%%
where, ${x_{k_1}}(D)$ and ${y_{k_2}}(D)$ represent the input and output random process of some source-$i$ and sink-$j$ respectively, and $\mu=\sum_{i=1}^{s}\mu_i$ and $\nu=\sum_{i=1}^{r}\nu_i$. We now have \cite{KoM}
\begin{equation}
\label{Trmx}
Y(D)=M(D)X(D)
\end{equation}where, $M(D)$ denotes the {\em network transfer matrix} of size
${\nu \times \mu}$ with elements from $\mathbb{F}_{p^m}[D],$ the ring of polynomials
in the delay parameter $D$ with coefficients from $\mathbb{F}_{p^m}$. Now, $M(D)$ can also be
written as
\begin{align}\label{Mmtrx}
M(D)=\begin{bmatrix}
       M_{11}(D) & M_{21}(D) &  \cdots & M_{s1}(D) \\
       M_{12}(D) & M_{22}(D) &  \cdots & M_{s2}(D) \\
       \vdots & \vdots & \vdots &  \vdots \\
       M_{1r}(D) & M_{2r}(D) &  \cdots & M_{sr}(D) \\
     \end{bmatrix}.
\end{align}where $M_{ij}(D)$ denote the network transfer matrix from source-$i$ to sink-$j$ and is of size $\nu_{j}\times\mu_{i}$.
Let $d'_{max}$ and $d'_{min}$ denote the maximum and the minimum of all the path delays from source-$i$ to
sink-$j$, for all $(i,j)$, between which a path exists. Let
\begin{equation*}%\label{dmax}
d_{max}=d'_{max}-d'_{min}
\end{equation*}
Then, $M(D)$ can be written as
%%%%
\begin{equation*}
M(D)=\sum_{d=d'_{min}}^{d'_{max}}M^{(d)}D^{d}=\left(\sum_{d=0}^{d_{max}}M^{(d)}
D^{d}\right)D^{d'_{min}},
\end{equation*}
%%%%%
where $M^{(d)}\in\mathbb{F}_{p^m}^{\nu \times \mu}$ represents the matrix-coefficients of $D^d$.

Since $D^{d'_{min}}$ just adds a constant additional delay to all the outputs, without loss of generality, we can take $M(D)$ to be
\begin{equation}\label{Mdisp}
M(D)=\sum_{d=0}^{d_{max}}M^{(d)}D^{d}.
\end{equation}
Hence, $M_{ij}(D)$ can be alternatively written as
\begin{equation}\label{Mdisp1}
M_{ij}(D)=\sum_{d=0}^{d_{max}}M^{(d)}_{ij}D^{d}.
\end{equation}
For each sink-$j,$ we also define $M_{j}(D)$ to be the $\nu_j \times \mu$
submatrix of $M(D)$ that captures the transfer function between all the sources
and the sink-$j,$ i.e.,
%%%%%%
\begin{equation}
\label{eqnno3}
M_{j}(D)=\left[M_{1j}(D)~~M_{2j}(D)~~...~~M_{sj}(D)\right].
\end{equation}
%%%%%%
In the network $\cal{G}$, let ${\cal C}_{j}$ denote the set of all connections
to sink-$j$. Let ${\cal C}=\cup_{j=1}^{r} {\cal C}_{j}$. 

\begin{definition}
An acyclic network with delay is said to be solvable if the demands of the sinks, as specified by the set of connections, can be met. 
\end{definition}
The following lemma from \cite{KoM} states the conditions for solvability of acyclic networks with
delay.
\begin{lemma}[\cite{KoM}]
 \label{zero-int}
An acyclic network with delay is solvable iff there exists an assignment to the LECs $\underline \varepsilon$ such that the following conditions are
satisfied.
\begin{enumerate}
\item {\em Zero-Interference:} ${M}_{ij}^{(d)}(l_i)=0$, for all pairs (source-$
i$, sink-$ j$) of nodes such that (source-$ i$, sink-$j$,
$\underline{X_i}^{(l_i)}(D)$) $\not\in {\cal C}_{j}$ for all $0\leq d \leq
d_{max}$, where ${M}_{ij}^{(d)}(l_i)$ denotes the $l_i^{\text{th}}$ column of
${M}_{ij}^{(d)}$ and $\underline{X_i}^{(l_i)}(D)$ denotes the $l_i^{\text{th}}$
element of  $\underline{X_i}(D)$. 
\item {\em Invertibility:} For every sink-$j$, the square submatrix $M'_j(D)$ of
$M_j(D)$ formed by juxtaposition of the columns of $M_{ij}(D)$, for all $i$ other than those involved in the zero-interference conditions, is invertible over $\mathbb{F}_{p^m}(D)$, the field of rationals over
$\mathbb{F}_{p^m}$.
\end{enumerate}
 \end{lemma}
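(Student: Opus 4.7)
The plan is to prove both directions of the biconditional using the global input--output relation $Y(D) = M(D) X(D)$ of (\ref{Trmx}) as the central object, mirroring the standard Koetter--M\'edard proof template.

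For the forward (``if'') direction, suppose an LEC assignment $\underline{\varepsilon}$ satisfies both conditions. The zero-interference condition $M_{ij}^{(d)}(l_i) = 0$ holding for every $0\leq d\leq d_{max}$ means, via (\ref{Mdisp1}), that the entire polynomial column $M_{ij}(D)(l_i)$ vanishes, so at sink-$j$ the received vector $\underline{Y_j}(D)$ is an $\mathbb{F}_{p^m}[D]$-linear combination of \emph{only} those source columns whose associated connection lies in $\mathcal{C}_j$. Collecting these columns into the square matrix $M'_j(D)$ and stacking the demanded source processes in a vector $\underline{X'_j}(D)$ yields $\underline{Y_j}(D) = M'_j(D)\,\underline{X'_j}(D)$. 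Invertibility of $M'_j(D)$ over $\mathbb{F}_{p^m}(D)$ combined with Cramer's rule then gives $\det M'_j(D) \cdot \underline{X'_j}(D) = \operatorname{adj}(M'_j(D))\,\underline{Y_j}(D)$, a polynomial convolutional identity that sink-$j$ realizes with a finite number of feed-forward taps (from the adjugate) and feedback taps (from the determinant); the decoding-delay remark following (\ref{eqnoutput}) explicitly legitimizes exactly this kind of bounded memory at the sinks.

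For the reverse (``only if'') direction, assume the network is solvable and pick any LEC assignment achieving it. If zero-interference failed for some triple with $(\text{source-}i,\text{sink-}j,\underline{X_i}^{(l_i)}(D))\notin \mathcal{C}_j$ but $M_{ij}(D)(l_i)\neq 0$, then altering the single undemanded process $\underline{X_i}^{(l_i)}(D)$ while freezing every other input would change $\underline{Y_j}(D)$, so the fixed linear decoder at sink-$j$ either changes its output (violating correctness, since the demanded processes are unchanged) or can be driven to identify two distinct demanded assignments via a matching perturbation along a demanded column (again violating correctness). Similarly, if $M'_j(D)$ were singular over $\mathbb{F}_{p^m}(D)$, a non-zero rational kernel vector---cleared of denominators so as to live in $\mathbb{F}_{p^m}[[D]]$---would provide two distinct demanded input tuples producing the identical $\underline{Y_j}(D)$, contradicting unique recoverability at sink-$j$.

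The one genuine subtlety is ensuring that $M'_j(D)^{-1}$ is causally realizable, since a priori its entries lie in $\mathbb{F}_{p^m}(D)$ rather than in the causal power-series ring $\mathbb{F}_{p^m}[[D]]$. I would handle this by factoring out the largest power of $D$ dividing $\det M'_j(D)$ and absorbing it into a uniform finite decoding latency at sink-$j$; the residual denominator then has a non-zero constant term and admits the geometric expansion $(1 - D\,g(D))^{-1} = \sum_{k\geq 0}(D\,g(D))^k$, which produces a causal recursion with finite state. Once this realization step is in hand, the remaining bookkeeping---separating the block columns of $M(D)$ via (\ref{Mmtrx})--(\ref{eqnno3}) and absorbing the uniform $D^{d'_{\min}}$ delay of (\ref{Mdisp})---is routine.
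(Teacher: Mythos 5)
The paper never proves this lemma --- it is imported verbatim from Koetter--M\'edard (\cite{KoM}) --- so the only fair comparison is with that standard argument, and your proof is essentially it: sufficiency by inverting $M'_j(D)$ over $\mathbb{F}_{p^m}(D)$ and realizing $\mathrm{adj}(M'_j(D))/\det(M'_j(D))$ causally at the sink via feed-forward and feedback taps (which the remark after (\ref{eqnoutput}) permits), and necessity by exhibiting indistinguishable received sequences when either condition fails. The argument is sound; the one point worth tightening is the order of the necessity claims: your ``matching perturbation along a demanded column'' step needs the interference column to lie in the span of the demanded columns, i.e.\ it already uses the invertibility of $M'_j(D)$ (or, equivalently, the injectivity of the $\nu_j\times\nu_j$ decoder forced by $\mathcal{D}_j M'_j = D^k I$), so prove the invertibility part of necessity first --- it is independent of zero-interference since one may set all undemanded processes to zero --- and also note that the matching perturbation, like your kernel vector, must be cleared of denominators (e.g.\ scale by $\det(M'_j(D))$ and a power of $D$) to be a legitimate causal input process.
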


A network code for $({\cal G},{\cal C})$ is defined to be a \textit{feasible network code} if it achieves the given set of demands at the sinks i.e., if the above zero-interference and the invertibility conditions are satisfied.
%%%%

\subsection{System Model for time-varying LECs}
When the LECs are time-varying, we can't express the input-output relation as in
(\ref{Trmx}). Hence, first, we need to derive the input-output relation
involving transfer matrices which are dependent on varying LECs. Retaining the
notations as already introduced, we only point out the changes in the system
model here.

For a given DAG $\cal G$ with integer delay on its links, define the adjacency
matrix of $\cal G$ at time $t$ as the $|E|\times|E|$ matrix $K^{(t)},$  whose
elements are given by
\[
[K^{(t)}]_{ij}=\left\{
\begin{array}{ll}
\beta_{e_i,e_j}^{(t)}& ~\mbox{head}(e_i)=\mbox{tail}(e_j) \\
0 & ~\mbox{otherwise}.
\end{array}
\right.
\]
Let the entries of $\mu \times |E|$ matrix $A^{(t)}$, at time $t$, be given by

\vspace{-0.3cm}
{\small \[
[A^{(t)}]_{ij}=\left\{
\begin{array}{ll}
\alpha_{l,e_j}^{(t)}& ~x_i=X_{\text{tail}(e_j)l} \text{ for some $l$, $1 \leq l \leq \mu_{\text{tail}(e_j)}$}  \\
0 & ~\mbox{otherwise}.
\end{array}
\right.
\]}where, the tail of an edge originating from a source is identified by the source number. Also, let the entries of $\nu \times |E|$ matrix $B^{(t)}$, at time $t$, be
given by

\vspace{-0.3cm}
{\small \[
[B^{(t)}]_{ij}=\left\{
\begin{array}{ll}
\epsilon_{e_j,l}^{(t)}& ~y_i=Y_{\text{head}(e_j)l} \text{ for some $l$, $1 \leq l \leq \nu_{\text{head}(e_j)}$} \\
0 & ~\mbox{otherwise}.
\end{array}
\right.
\]}where, the head of an edge terminating at a sink is identified by the sink number. Denote the matrix of LECs from time instant $t_1$ to time instant $t_2$ $(t_2 \geq t_1)$  by $\underline{\varepsilon}^{(t_1,t_2)}$, i.e.,
%%%%%
\begin{align*}
\underline{\varepsilon}^{(t_1,t_2)} = \left[\underline{\varepsilon}^{(t_1)} 
~~\underline{\varepsilon}^{(t_1+1)}~ ~~\ldots~~ ~~\underline{\varepsilon}^{(t_2)}\right]
\end{align*}
%%%%%%%%
where $\underline{\varepsilon}^{(t_i)}$ denotes the LECs at time $t_i$. 
Since the LECs are time varying, define a time-varying network transfer matrix given by

{
% \footnotesize
\vspace{-0.3cm}
\begin{align}
\nonumber
&M(D,t)^T \hspace{-0.1cm}=\hspace{-0.1cm}  \left(A^{(t-1)}I~D\hspace{-0.1cm}+A^{(t-2)}K^{(t-1)}D^2 \hspace{-0.1cm}+A^{(t-3)}K^{(t-2)}K^{(t-1)}D^3\right.\\
\nonumber
& \left.+\ldots +A^{(t-d_{max})}K^{(t-(d_{max}-1))}..K^{(t-2)}K^{(t-1)}D^{d_{max}} \right){B^{(t)}}^T\\
\label{eqn-define-component_M}
&~~~~~~~~~\triangleq\sum_{d=0}^{d_{max}}{M^{(d)}}^T({\underline{\varepsilon}^{(t-d,t)}})D^{d}
\end{align}}where the matrices ${M^{(d)}}^T({\underline{\varepsilon}^{(t-d,t)}})$ are a matrix functions of ${\underline{\varepsilon}^{(t-d,t)}}$, and  ${M^{(0)}}^T=\bf{0}$, i.e., the zero matrix, as each link in the network is assumed to have a unit delay.
The matrix $M(D,t)^T$ can also be written as

{\small \vspace{-0.3cm}
\begin{align}
\label{eqn-M_matrix_TV}
M(D,t)^T=\begin{bmatrix}
       M_{11}(D,t) & M_{21}(D,t) &  \cdots & M_{s1}(D,t) \\
       M_{12}(D,t) & M_{22}(D,t) &  \cdots & M_{s2}(D,t) \\
       \vdots & \vdots & \vdots &  \vdots \\
       M_{1r}(D,t) & M_{2r}(D,t) &  \cdots & M_{sr}(D,t) \\
     \end{bmatrix}
\end{align}}where, $M_{ij}(D,t)$ defines a time-varying network transfer matrix of size $\nu_j \times \mu_i$ from source-$i$ to sink-$j$. The matrix $M_{ij}(D,t)$ can also be written in terms of the delay parameter $D$ as
\begin{align}
\label{eqn-Mij_matrix_TV}
 M_{ij}(D,t)=\sum_{d=0}^{d_{max}}{M_{ij}^{(d)}}({\underline{\varepsilon}^{(t-d,t)}})D^{d}.
\end{align}

We shall now derive the relation between the input and the output symbols. 
\begin{definition}
 The impulse response $h_{k_1,k_2}(t',d)$ of the network between a source generating $x_{k_1}(D)$ and a sink whose output is $y_{k_2}(D)$ is defined as the value of the output symbol  $y^{(t')}_{k_2}$ when 

\vspace{-0.1cm}
{\[
x^{(t)}_{k}=\left\{
\begin{array}{ll}
1& k=k_1~\text{ and }~t=t'-d\\
0 & k\neq k_1~\text{ or } ~t \neq t'-d
\end{array}
\right.
\]}where, $1$ and $0$ denote the multiplicative and additive identity of the field $\mathbb{F}_p$ respectively.
\end{definition}

So, for a given value of $d$, if 

\vspace{-0.1cm}
{\[
x^{(t)}_{k}=\left\{
\begin{array}{ll}
a_{k_1}& k=k_1~\text{ and }~t=t'-d \\
0 & k\neq k_1~\text{ or }~t \neq t'-d
\end{array}
\right.
\]}where $a_{k_1} \in \mathbb{F}_{p^m}$ then, the value of the output symbol  $y^{(t')}_{k_2}$ is given by $a_{k_1}h_{k_1,k_2}(t',d)$ as the intermediate nodes linearly combine the symbols on its incoming links. If 

\vspace{-0.1cm}
{\[
x^{(t)}_{k}=\left\{
\begin{array}{ll}
a_k& k=1,2,\cdots,\mu,~\text{ and }~t=t'-d \\
0 & ~t\neq t'-d
\end{array}
\right.
\]}then, the value of the output symbol  $y^{(t')}_{k_2}$ is given by $\sum_{k=1}^{\mu}a_{k}h_{k,k_2}(t',d)$ as the intermediate nodes linearly combine the symbols on its incoming links. Now, observe that for a given $d$ and $t'$, the values of $h_{k,k_2}(t',d)$ for $k=1,2,\cdots,\mu$, is given by the $k_2^{\text{th}}$-row of the matrix ${M^{(d)}}^T({\underline{\varepsilon}^{(t'-d,t')}})$ which directly follows from the definition of ${M^{(d)}}^T({\underline{\varepsilon}^{(t'-d,t')}})$ in (\ref{eqn-define-component_M}). Hence, from (\ref{eqn-M_matrix_TV}) and (\ref{eqn-Mij_matrix_TV}), the relation between the output and the input symbols follows as
\begin{equation}
\label{trmx_tvarleks}
\underline{Y_{j}}^{(t)}=\sum_{i=1}^{s}\sum_{d=0}^{d_{max}}
M_{ij}^{(d)}({\underline{\varepsilon}^{(t-d,t)}})\underline{X_{i}}^{(t-d)}.
\end{equation}

The above input-output relation can also be seen by observing that acyclic networks with delay employing time-varying LECs are analogous to multiple-transmitter multiple-receiver MIMO channel with linear time-varying impulse response between every transmitter and every receiver \cite{And}.

\section{Transform Techniques for Acyclic Networks with Delay}
\label{sec3}
In this section, we show that the output symbols at all the sinks which was
originally a $\mathbb{F}_{p^m}$-linear combination of the input symbols across the
different generations, at any given time instant, can be transformed into a
$\mathbb{F}_{p^m}$-linear combination of the input symbols across the same
generation.

Consider a matrix $A$ of size ${n\nu\times n\mu}$ given by
\begin{align*}
\begin{bmatrix}
  A^{(0)} & A^{(1)} & \cdots & A^{(L-1)} & A^{(L)} & 0 & 0 & \cdots & 0 \\
  0 & A^{(0)} & \cdots & A^{(L-2)} & A^{(L-1)} & A^{(L)} & 0 &\cdots & 0 \\
  \vdots & \vdots & \vdots & \vdots & \vdots & \vdots &\vdots & \vdots & \vdots
\\
  A^{(1)} & A^{(2)} & \cdots & A^{(L)} & 0 & 0& 0 & \cdots & A^{(0)} \\
\end{bmatrix}
\end{align*}
where $A^{(i)}$ for all $i$, $0\leq i \leq L$, are matrices of size $\nu\times\mu$ whose elements belong to $\mathbb{F}_{p^m}$ and $n>>L$. Note that the $(i+1)^{\text{th}}$ row of matrices is a circular shift of the $i^{\text{th}}$ row of matrices in $A$. We assume that $n$ divides $p^m-1$. The choice of $n$ is such that, there exists an $\alpha\in\mathbb{F}_{p^m}$ such that $n$ is the smallest integer for which $\alpha^{n}=1$. This is indeed possible \cite{Bla}. Define matrices $\hat{A}^{(j)}$ of size $\nu\times\mu$, for $0\leq j\leq n-1$, as
%%%%%%%%%%%%%%%
\begin{align*} 
% \label{eqn-define_hat_A}
\hat{A}^{(j)}=\sum_{i=0}^{L}\alpha^{(n-1-j)i}A^{(i)}.
\end{align*}Let $F$ be the finite-field DFT matrix given by
%%%%%%%%%%%%%%%

{\small
\begin{align} \label{eqn-F-matrix}
F=&\begin{bmatrix}
    1 & 1 & 1 & \cdots & 1 \\
    1 & \alpha & \alpha^{2} & \cdots & \alpha^{n-1} \\
    1 & \alpha^{2} & \alpha^{4} & \cdots & \alpha^{2(n-1)} \\
    \vdots & \vdots & \vdots & \vdots & \vdots \\
    1 & \alpha^{n-1} & \alpha^{2(n-1)} & \cdots & \alpha^{(n-1)(n-1)} \\
  \end{bmatrix}.
\end{align}}
%%%%%%%%%%%%%%%
\noindent Define the matrix $Q_{\mu}$ as
%%%%%%%%%%%%%%%

{\small
\begin{align}
\label{diagqmu}
Q_{\mu}=&\begin{bmatrix}
    I_{\mu} & I_{\mu} & I_{\mu} & \cdots & I_{\mu} \\
    I_{\mu} & \alpha{I_{\mu}} & \alpha^{2}I_{\mu} & \cdots & \alpha^{n-1}I_{\mu}
\\
    I_{\mu} & \alpha^{2}I_{\mu} & \alpha^{4}I_{\mu} & \cdots &
\alpha^{2(n-1)}I_{\mu} \\
    \vdots & \vdots & \vdots & \vdots & \vdots \\
    I_{\mu} & \alpha^{n-1}I_{\mu} & \alpha^{2(n-1)}I_{\mu} & \cdots &
\alpha^{(n-1)(n-1)}I_{\mu}
  \end{bmatrix}.
\end{align}}Similarly, we can define  matrix $Q_{\nu}$.
%%%%%%%%%%%%%%%
The following theorem will be useful in establishing the results subsequently.
\begin{theorem}\label{diagthm}
The matrix $A$ can be block diagonalized  as
\begin{equation*}
A=Q_{\nu}\hat{A}Q_{\mu}^{-1},
\end{equation*}
where, $\hat{A}$ is given by
\begin{equation*}
\hat{A}=\begin{bmatrix}
          \hat{A}^{(n-1)} & 0 & 0 & \cdots & 0 \\
          0 & \hat{A}^{(n-2)} & 0 & \cdots & 0 \\
          \vdots & \vdots & \vdots & \cdots & \vdots \\
          0 & 0 & \cdots & 0 & \hat{A}^{(0)} \\
        \end{bmatrix}.
\end{equation*}
\end{theorem}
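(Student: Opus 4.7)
The plan is to verify the identity $A Q_\mu = Q_\nu \hat{A}$ by direct block-by-block computation and then appeal to invertibility of $Q_\mu$. The only non-trivial ingredient is bookkeeping for the block-circulant structure of $A$ and an index substitution that uses $\alpha^n = 1$.

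First I would identify the block structure of $A$ in closed form. Indexing block rows and block columns by $0, 1, \ldots, n-1$, and extending the sequence $A^{(0)}, A^{(1)}, \ldots, A^{(L)}$ to length $n$ by setting $A^{(l)} = 0$ for $L < l \leq n-1$, the definition in the statement says that row $i+1$ is a right circular shift of row $i$. Hence the $(i,j)$-th block of $A$ is $A^{((j - i) \bmod n)}$. Next, since $Q_\mu$ has $(i,j)$-th block $\alpha^{ij} I_\mu$, I would compute
\begin{align*}
[A Q_\mu]_{ij} \;=\; \sum_{k=0}^{n-1} A^{((k-i) \bmod n)} \alpha^{kj} \;=\; \alpha^{ij} \sum_{l=0}^{n-1} A^{(l)} \alpha^{lj},
\end{align*}
where I have substituted $l = (k-i) \bmod n$ and used $\alpha^{nj} = 1$. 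Because $A^{(l)} = 0$ for $l > L$, the inner sum equals $\sum_{l=0}^{L} A^{(l)} \alpha^{lj}$, which by definition is exactly $\hat{A}^{(n-1-j)}$.

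On the other hand, $\hat{A}$ is block-diagonal with $(j,j)$-th block $\hat{A}^{(n-1-j)}$, so $[Q_\nu \hat{A}]_{ij} = [Q_\nu]_{ij} \, \hat{A}^{(n-1-j)} = \alpha^{ij} \hat{A}^{(n-1-j)}$. Comparing the two expressions block-by-block gives $A Q_\mu = Q_\nu \hat{A}$. It then remains to check that $Q_\mu$ is invertible. This follows from the fact that $Q_\mu$ is the block-Vandermonde matrix built from $I_\mu$ and powers of $\alpha$, which is invertible over $\mathbb{F}_{p^m}$ because $\alpha$ is a primitive $n$-th root of unity (the powers $1, \alpha, \ldots, \alpha^{n-1}$ are distinct, so the determinant of the underlying $n \times n$ Vandermonde factor is a product of nonzero differences). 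Multiplying $A Q_\mu = Q_\nu \hat{A}$ on the right by $Q_\mu^{-1}$ yields the claim.

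There is no deep obstacle here; the proof is essentially the classical block-diagonalization of circulant matrices by the DFT matrix, transplanted to the finite-field setting. The only care required is in tracking conventions: one must verify that (i) the row-shift convention used in the definition of $A$ really produces blocks $A^{((j-i) \bmod n)}$ rather than $A^{((i-j) \bmod n)}$, and (ii) the orientation of $\hat{A}^{(0)}, \ldots, \hat{A}^{(n-1)}$ on the diagonal of $\hat{A}$ matches the exponent flip $n-1-j$ appearing in the definition of $\hat{A}^{(j)}$. Once these two conventions are aligned, the index substitution above closes the argument cleanly.
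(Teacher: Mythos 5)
Your proposal is correct and follows essentially the same route as the paper: you verify the identity $AQ_{\mu}=Q_{\nu}\hat{A}$ by direct computation (the paper does it column-block by column-block, you do it entry-block by entry-block via the circulant index formula), and then invoke invertibility of $Q_{\mu}$, which the paper establishes through $Q_{\mu}=F\otimes I_{\mu}$ and you establish through the nonvanishing Vandermonde determinant of $F$ — the same fact in slightly different clothing. Your index bookkeeping, including the exponent flip matching $\hat{A}^{(n-1-j)}$ to the $(j,j)$ diagonal block, checks out against the paper's conventions.
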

\begin{proof}
 Proof is given in Appendix \ref{appen_thm1}.
\end{proof}

Now, consider an arbitrary acyclic network with delay. From (\ref{Trmx}) and
(\ref{Mmtrx}),
\begin{align}\label{yj}
\underline{Y_j}(D)=\sum_{i=1}^{s}M_{ij}(D)\underline{X_i}(D).
\end{align}

Consider a transmission scheme where, in order to transmit $n$ ($>>d_{max}$) generations of input symbols at each source, the last $d_{max}$ generations (which we call the {\em cyclic prefix}) is transmitted first followed by the $n$ generations of input symbols. Hence, $n+d_{max}$ time slots at each source are used to transmit $n$ generations. From (\ref{yj}) and (\ref{Mdisp}), the output symbols at any time instant $t$ can be written as
\begin{align*}
\underline{Y_{j}}^{(t)}=\sum_{i=1}^{s}\sum_{d=0}^{d_{max}} M_{ij}^{(d)}\underline{X_{i}}^{(t-d)}.
\end{align*}Evaluating $\underline{Y_{j}}^{(t)}$  at the time instants \mbox{$t=-d_{max}, \cdots,(n-1)$}, we have (\ref{bigeqn}).
%%%%%%%%%%%%%%%
\begin{figure*}
\scriptsize
\begin{align}\label{bigeqn}
\begin{bmatrix}
  \underline{Y_j}^{(n-1)} \\
  \underline{Y_j}^{(n-2)} \\
  \vdots \\
  \underline{Y_j}^{(0)} \\
  \underline{Y_j}^{(-1)} \\
  \vdots \\
  \underline{Y_j}^{(-d_{max})} \\
\end{bmatrix}
=\sum_{i=1}^{s}\begin{bmatrix}
  M_{ij}^{(0)} & M_{ij}^{(1)} & \cdots  & M_{ij}^{(d_{max})} & 0 & 0 & \cdots &
0 & 0  \\
  0 & M_{ij}^{(0)} & \cdots & M_{ij}^{(d_{max}-1)} & M_{ij}^{(d_{max})} & 0 &
\cdots & 0& 0 \\
  \vdots & \vdots & \vdots & \vdots & \ddots & \ddots & \ddots & \vdots & \vdots
\\
  0 & 0 & \cdots & 0 &  M_{ij}^{(0)} & M_{ij}^{(1)} & \cdots &
M_{ij}^{(d_{max}-1)} & M_{ij}^{(d_{max})}\\
0 & 0 & \cdots & 0 & 0 &  M_{ij}^{(0)} & \cdots &  M_{ij}^{(d_{max}-2)} &
M_{ij}^{(d_{max-1})}\\
\vdots & \vdots & \vdots & \vdots & \vdots & \vdots & \vdots & \vdots & \vdots
\\
0 & 0 & \cdots & 0 & 0 & 0 & 0 & 0  & M_{ij}^{(0)}\\
\end{bmatrix}
\begin{bmatrix}
                 \underline{X_i}^{(n-1)} \\
                 \underline{X_i}^{(n-2)} \\
                 \vdots \\
                 \underline{X_i}^{(0)} \\
                 \underline{X_i}^{(n-1)} \\
                 \vdots \\
                 \underline{X_i}^{(n-d_{max})} \\
               \end{bmatrix}
\end{align}
\hrule
\end{figure*}
%%%%%%%%%%%%%%%%
\begin{figure*}
\scriptsize
\begin{align}\label{bigeqn2}
\begin{bmatrix}
  \underline{Y_j}^{(n-1)} \\
  \underline{Y_j}^{(n-2)} \\
  \vdots \\
  \underline{Y_j}^{(0)} \\
\end{bmatrix}
=\sum_{i=1}^{s}\underbrace{\begin{bmatrix}
  M_{ij}^{(0)} & M_{ij}^{(1)} & \cdots & M_{ij}^{(d_{max}-1)} &
M_{ij}^{(d_{max})} & 0 & \cdots & 0 & 0 & 0  \\
  0 & M_{ij}^{(0)} & \cdots & M_{ij}^{(d_{max}-2)} & M_{ij}^{(d_{max}-1)} &
M_{ij}^{(d_{max})} & \cdots & 0 & 0& 0 \\
  \vdots & \vdots & \vdots & \vdots & \vdots & \vdots & \vdots & \vdots & \vdots
& \vdots \\
  M_{ij}^{(1)} & M_{ij}^{(2)} & \cdots & M_{ij}^{(d_{max})}& 0 & 0 & \cdots & 0
& 0 &  M_{ij}^{(0)} \\
\end{bmatrix}}_{M_{ij}}
\begin{bmatrix}
                 \underline{X_i}^{(n-1)} \\
                 \underline{X_i}^{(n-2)} \\
                 \vdots \\
                 \underline{X_i}^{(0)} \\
         \end{bmatrix}
\end{align}
\hrule
\end{figure*}
%%%%%%%%%%%%%%%%
Discarding the first $d_{max}$ outputs at sink-$j$, (\ref{bigeqn}) can be re-written as (\ref{bigeqn2}). Using Theorem \ref{diagthm}, (\ref{bigeqn2}) can be re-written as
%%%%%%%%%%%%%
\begin{align}\label{eqn-transform}
\underline{Y_j}^{n}=\sum_{i=1}^{s} Q_{\nu_{j}}\hat{M}_{ij} Q_{\mu_{i}}^{-1}\underline{X_i}^{n}
\end{align}
%%%%%%%%%%%%%
where,
\begin{align*}
\underline{Y_j}^{n}=\begin{bmatrix}
  \underline{Y_j}^{(n-1)} \\
  \underline{Y_j}^{(n-2)} \\
  \vdots \\
  \underline{Y_j}^{(0)} \\
\end{bmatrix};~~~~
\underline{X_i}^{n}=\begin{bmatrix}
                 \underline{X_i}^{(n-1)} \\
                 \underline{X_i}^{(n-2)} \\
                 \vdots \\
                 \underline{X_i}^{(0)} \\
         \end{bmatrix};\\
\hat{M}_{ij}=\begin{bmatrix}
          {\hat{M}_{ij}^{(n-1)}} & 0 & 0 & \cdots & 0 \\
          0 & {\hat{M}_{ij}^{(n-2)}} & 0 & \cdots & 0 \\
          \vdots & \vdots & \vdots & \cdots & \vdots \\
          0 & 0 & 0 & \cdots & {\hat{M}_{ij}^{(0)}} \\
        \end{bmatrix}.
\end{align*}where, {\small $\hat{M}_{ij}^{(t)} = \sum_{d=0}^{d_{max}}\alpha^{(n-1-t)d} M_{ij}^{(d)}$}. At each source-$i$, $\underline{X'_i}^{n}=Q_{\mu_{i}}\underline{X_i}^{n}$ is transmitted instead of $\underline{X_i}^{n}$. At each sink-$j$, the received symbols are denoted by $\underline{Y'_j}^{n}$. Let $\underline{Y_j}^{n}=Q_{\nu_{j}}^{-1}\underline{Y'_j}^{n}$. Then, from (\ref{eqn-transform}) we have,
\begin{align}
\nonumber
\underline{Y'_j}^{n}=&\sum_{i=1}^{s} Q_{\nu_{j}}\hat{M}_{ij}
Q_{\mu_{i}}^{-1}\underline{X'_i}^{n}\\
\nonumber
\Rightarrow ~\underline{Y_j}^{n}=&Q_{\nu_{j}}^{-1}\sum_{i=1}^{s} Q_{\nu_{j}}\hat{M}_{ij}
Q_{\mu_{i}}^{-1}Q_{\mu_{i}}\underline{X_i}^{n}\\
\label{eqnrefer1unicast}
\Rightarrow ~ \underline{Y_j}^{n}=&\sum_{i=1}^{s} \hat{M}_{ij}\underline{X_i}^{n}.
\end{align}
For $0\leq t \leq n-1$, (\ref{eqnrefer1unicast}) can be re-written as
\begin{equation}
\label{instant}
\underline{Y_j}^{(t)}=\sum_{i=1}^{s}\hat{M}_{ij}^{(t)}\underline{X_i}^{(t)}.
\end{equation}
Hence, each element of $\underline{Y_j}^{(t)}$ is a ${\mathbb F}_{p^m}$-linear
combination of the input symbols across the same generation. We now say that we have transformed the acyclic network with delay into {\em $n$-instantaneous networks}.

\begin{remark}
Note that the linear processing of multiplying by matrices $Q_{\mu_{i}}$ at source-$i$ and $Q_{\nu_{j}}^{-1}$ at sink-$j$ are done in a distributed fashion which is necessary because the sources and sinks are distributed in the actual network.
\end{remark}

\begin{remark}
One can observe that transmitting $\underline{X'_i}^{n}=Q_{\mu_{i}}\underline{X_i}^{n}$ implies taking DFT across $n$ generations of each of the $\mu_i$ random-processes generated at source-$i$. Similarly, the pre-multiplication by $Q_{\nu_{j}}^{-1}$ at sink-$j$ simply implies taking IDFT across $n$ generations of each of the $\nu_j$ random-processes received. The entire processing, including addition of cyclic prefix at source-$i$ and removal of cyclic prefix at sink-$j$ is shown in a block diagram in Fig. \ref{blk_diag}.
\end{remark}
%%%%%%%%
\begin{figure*}[htbp]
\hrule
\centering
\vspace{0.4cm}
\subfigure[Linear Processing at Source-$i$] {\includegraphics[totalheight=1.7in]{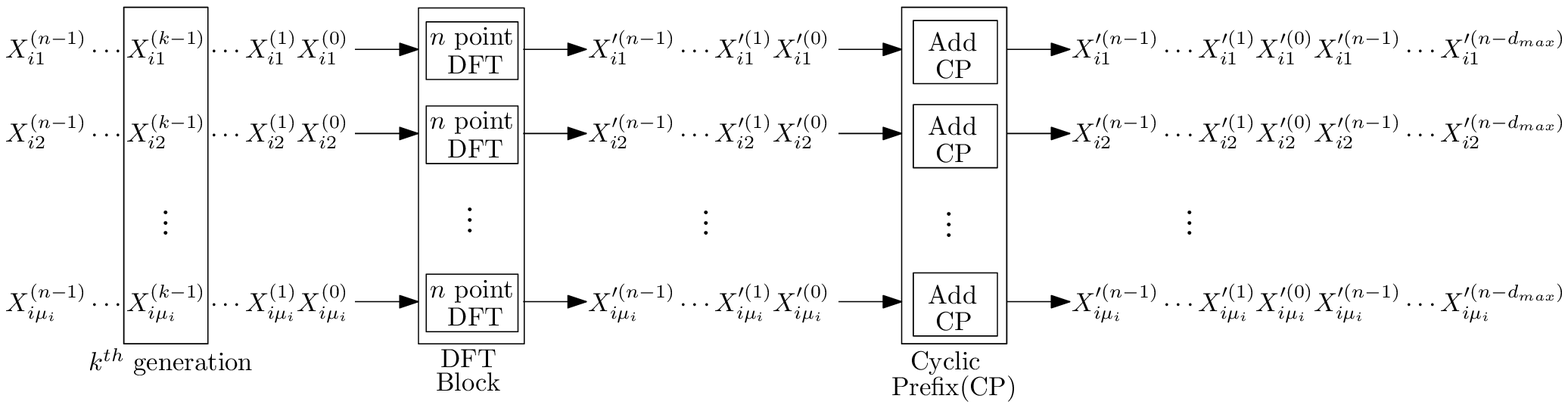}}\\
\subfigure[Linear Processing at Sink-$j$] {\includegraphics[totalheight=1.7in]{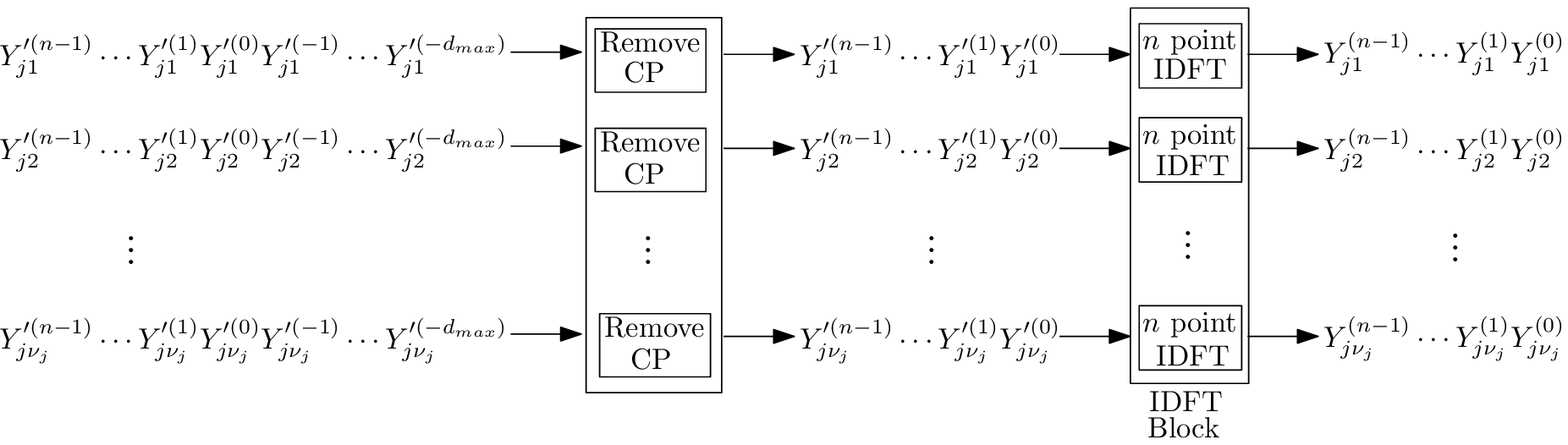}}
% \vspace{-0.2cm}
\caption{Block Diagram to illustrate linear processing at Source-$i$ and Sink-$j$.}	
\label{blk_diag}	
% \vspace{-0.2cm}
\end{figure*}
%%%%

Now, (\ref{instant}) is re-written as
\begin{equation*}%\label{instant2}
\underline{Y_j}^{(t)}=\sum_{i=1}^{s}\sum_{l_{i}=1}^{\mu_i}\hat{M}_{ij}^{(t)}
(l_i)\underline{X_i}^{(t)}(l_i)
\end{equation*}
where, $\hat{M}_{ij}^{(t)}(l_i)$ denotes the $l_{i}^{\text{th}}$ column of
$\hat{M}_{ij}^{(t)}$ and $\underline{X_i}^{(t)}(l_i)$ denotes the 
$l_{i}^{\text{th}}$ element of $\underline{X_i}^{(t)}$.

Similar to the zero-interference and invertibility conditions in Lemma \ref{zero-int}, we have the following theorem for solvability of the network implementing the transform technique.

\begin{theorem}
\label{thm2}
An acyclic network $({\cal G, C})$ with delay, incorporating the transform
techniques, is solvable iff there exists an assignment to $\underline{\varepsilon}$ such that the following conditions are satisfied.
\begin{enumerate}
  \item \textit{Zero-Interference: }$\hat{M}_{ij}^{(t)}(l_i)=0$ for all pairs (source-$ i$, sink-$j$) of nodes such that (source-$ i$, sink-$j$,$\underline{X_i}^{(t)}(l_i)$) $\notin {\cal C}_{j}$ for $0\leq t \leq n-1$.
  \item \textit{Invertibility: }If ${\cal C}_{j}$ comprises the connections 

\[\begin{array}{l}
\left\{\mbox{(source-$ i_1$, sink-$j$, {\small $\underline{X_{i_1}}^{(t)}(l_{i_1})$})},\right.\\
\mbox{ ~(source-$ i_2$, sink-$j$, {\small $\underline{X_{i_2}}^{(k)}(l_{i_2})$})},\\
\hspace{2cm}\vdots\\
\left. \mbox{ ~(source-$ i_{s'}$, sink-$j$, {\small $\underline{X_{i_{s'}}}^{(t)}(l_{i_{s'}})$})} \right\}
\end{array}
\]then, the sub-matrix $[\hat{M}_{i_1j}^{(t)}(l_{i_1})  ~\cdots ~\hat{M}_{i_{s'}j}^{(t)}(l_{i_{s'}}) ]$ is a nonsingular $\nu_j\times\nu_j$ matrix for $0\leq t \leq n-1$.
\end{enumerate}
\end{theorem}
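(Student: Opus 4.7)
The plan is to reduce the statement to Lemma \ref{zero-int} applied to each of the $n$ instantaneous networks produced by the transform technique. After the linear pre/post-processing of Section \ref{sec3}, equation (\ref{instant}) gives, for every sink $j$ and every $t \in \{0, 1, \ldots, n-1\}$, the relation
\begin{equation*}
\underline{Y_j}^{(t)} = \sum_{i=1}^{s} \hat{M}_{ij}^{(t)} \underline{X_i}^{(t)},
\end{equation*}
so each index $t$ behaves as a standalone instantaneous network whose transfer matrix from source-$i$ to sink-$j$ is $\hat{M}_{ij}^{(t)}$. The solvability of the original acyclic network with delay, incorporating the transform technique, is therefore equivalent to the simultaneous solvability, over $t \in \{0,\dots,n-1\}$, of these $n$ instantaneous networks subject to the connection set $\mathcal{C}$.

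For sufficiency, I would fix any assignment of $\underline{\varepsilon}$ meeting the two conditions for every $t$. The zero-interference condition kills every column of $\hat{M}_{ij}^{(t)}$ corresponding to an input symbol not demanded by sink-$j$, so $\underline{Y_j}^{(t)}$ reduces to $\hat{M}'_j(t)\,\underline{X}_{\text{dem}}^{(t)}$, where $\hat{M}'_j(t)$ is the $\nu_j\times\nu_j$ submatrix obtained by juxtaposing the columns $\hat{M}_{ij}^{(t)}(l_i)$ for the demanded symbols, and $\underline{X}_{\text{dem}}^{(t)}$ stacks those symbols. By the invertibility condition, sink-$j$ applies $\hat{M}'_j(t)^{-1}$ and recovers $\underline{X}_{\text{dem}}^{(t)}$ for every $t$; the subsequent IDFT stage in Fig.\ \ref{blk_diag} then recovers the demanded time-domain random processes over the $n$ generations.

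For necessity, I would argue contrapositively at a fixed $t$. If zero-interference fails at $t$, then $\underline{Y_j}^{(t)}$ depends non-trivially on at least one symbol not demanded by sink-$j$, and since sink-$j$ has access only to the vector $\underline{Y_j}^{(t)}$ (and, across all $t$, the IDFT of the received block), distinct values of the undemanded symbol together with a suitably chosen value of the demanded symbols produce identical received blocks but different demanded outputs, contradicting recoverability. If zero-interference holds but the $\nu_j\times\nu_j$ submatrix is singular, the system $\underline{Y_j}^{(t)}=\hat{M}'_j(t)\,\underline{X}_{\text{dem}}^{(t)}$ has a non-trivial kernel, so two distinct demand-vectors yield the same $\underline{Y_j}^{(t)}$, again precluding unique decoding.

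The main obstacle I expect is making the necessity direction airtight, since in the time (pre-transform) domain sink-$j$ could in principle exploit feed-forward and feedback delay elements across different $t$. The key point that legitimizes the reduction is that the pre-processing $Q_{\mu_i}$ at each source and the post-processing $Q_{\nu_j}^{-1}$ at each sink are invertible linear maps on $\mathbb{F}_{p^m}^{n\mu_i}$ and $\mathbb{F}_{p^m}^{n\nu_j}$ respectively (they are DFT/IDFT operations over a field of appropriate order), so no information is gained or lost by passing to the transformed description; consequently decodability of the original delayed network coincides with decodability of the block diagonal system in (\ref{eqn-transform}), which decouples across $t$, and the two stated conditions are exactly the instantaneous-network conditions of Lemma \ref{zero-int} (specialized to $d_{\max}=0$) applied to each block.
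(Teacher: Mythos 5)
Your proposal is correct and follows essentially the same route as the paper's own proof: sufficiency by noting that zero-interference reduces each per-$t$ relation in (\ref{instant}) to a square system that the sink inverts, and necessity by arguing that residual interference or a singular $\nu_j\times\nu_j$ submatrix precludes decoding. Your additional remarks on the invertibility of the $Q_{\mu_i}$ and $Q_{\nu_j}$ maps and the decoupling across $t$ only make explicit what the paper's terse argument leaves implicit.
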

\begin{proof}
Proof is given in Appendix \ref{appen_thm2}.
\end{proof}

The network code which satisfies the invertibility and the zero-interference
conditions for $({\cal G},{\cal C})$ in the transform approach using a suitable
choice of $\alpha$ for the DFT operations is defined as a \textit{feasible
transform network code} for $({\cal G},{\cal C}).$

\subsection{Existence of a network code in the transform approach}

In this section, we prove that under certain conditions there exists a feasible network code for a given $({\cal G},{\cal C})$ if and only if there exists a feasible transform network code. Towards that end, we prove Lemma \ref{nonzeros} which is given below. We first define the polynomial $f(D)$ which will be used henceforth throughout this paper.
%%%%%%
\begin{equation}
\label{eqnno1}
f(D)=\prod_{j=1}^{r}det\left(M_j'(D)\right)
\end{equation}where, $M_j'(D)$ is the square submatrix of $M_j(D)$ indicating the source processes that are demanded by sink-$j$.
%%%%%
%%%%%%
\begin{lemma}
\label{nonzeros}
% Suppose there exists a feasible network code for $({\cal G},{\cal C})$ over some field $\mathbb{F}_{p^m}.$ For some $\alpha \in \mathbb{F}_{p^a}$ (for some positive integer $a>m$), the LECs defined by the feasible network code for $({\cal G},{\cal C})$ (viewed in the extension field $\mathbb{F}_{p^a}$) result in a feasible transform network code for $({\cal G},{\cal C})$ if and only if $f(\alpha^t) \neq 0$ for all $0 \leq t \leq n-1.$ 
Suppose there exists a feasible network code for $({\cal G},{\cal C})$ over some field $\mathbb{F}_{p^m}.$ For some $\alpha \in \mathbb{F}_{p^a}$ (for some positive integer $a$), the LECs defined by the feasible network code for $({\cal G},{\cal C})$ (viewed in the extension field $\mathbb{F}_{p^b}$ where, $b=\text{LCM($a$,$m$)}$) result in a feasible transform network code for $({\cal G},{\cal C})$ if and only if $f(\alpha^t) \neq 0$ for all $0 \leq t \leq n-1.$ 
\end{lemma}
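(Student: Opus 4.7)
The plan is to reduce the lemma to the identity $\hat M_{ij}^{(t)} = M_{ij}(\alpha^{n-1-t})$ and then separately handle the zero‑interference and invertibility halves of Theorem \ref{thm2}. Indeed, comparing the definition $\hat M_{ij}^{(t)} = \sum_{d=0}^{d_{max}} \alpha^{(n-1-t)d} M_{ij}^{(d)}$ with (\ref{Mdisp1}) immediately gives $\hat M_{ij}^{(t)}=M_{ij}(\alpha^{n-1-t})$ and therefore, with the labelling of demanded columns as in Theorem \ref{thm2}, the full juxtaposition $[\hat M_{i_1 j}^{(t)}(l_{i_1})\;\cdots\;\hat M_{i_{s'} j}^{(t)}(l_{i_{s'}})]$ is precisely $M_j'(\alpha^{n-1-t})$. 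Viewing all LECs inside $\mathbb F_{p^b}$ with $b=\text{LCM}(a,m)$ is exactly what is needed to make the evaluation of polynomials with coefficients in $\mathbb F_{p^m}$ at powers of $\alpha\in\mathbb F_{p^a}$ well-defined.

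Next I would dispose of the zero‑interference condition essentially for free. Since the given network code is feasible for $({\cal G},{\cal C})$ in the non‑transform sense, Lemma \ref{zero-int} gives $M_{ij}^{(d)}(l_i)=0$ for every $0\le d\le d_{max}$ and every (source‑$i$, sink‑$j$, $\underline{X_i}^{(l_i)}(D))\notin {\cal C}_j$. Substituting into the definition of $\hat M_{ij}^{(t)}(l_i)$ yields a sum of zero vectors, so the transform zero‑interference condition in Theorem \ref{thm2}(1) holds for every $0\le t\le n-1$ without any hypothesis on $\alpha$. Thus the only condition that can possibly fail when passing from the non‑transform code to its transform counterpart is the invertibility condition.

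The invertibility half is then the whole content of the ``if and only if''. By the identification in the first paragraph, the submatrix appearing in Theorem \ref{thm2}(2) for sink‑$j$ at time $t$ is $M_j'(\alpha^{n-1-t})$, whose nonsingularity is equivalent to $\det(M_j'(\alpha^{n-1-t}))\neq 0$. Since feasibility requires this for every sink $j$ and every $t\in\{0,\dots,n-1\}$, and since the map $t\mapsto n-1-t$ is a bijection of $\{0,\dots,n-1\}$, the conjunction of these requirements is
\begin{equation*}
\prod_{j=1}^{r}\det\bigl(M_j'(\alpha^s)\bigr)=f(\alpha^s)\neq 0,\qquad 0\le s\le n-1,
\end{equation*}
which is exactly the stated condition. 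Note that $f(D)\not\equiv 0$ is guaranteed by the original feasibility (each factor $\det(M_j'(D))$ is a nonzero rational function, in fact a polynomial up to a power of $D$ associated with $d_{min}'$), so $f$ really is a nonzero polynomial whose value at $\alpha^s$ is meaningful.

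The only subtlety I expect to spend more than a line on is bookkeeping around the extension field and the fact that $f(D)$ is a product of determinants of rational, rather than polynomial, matrices; but since each $M_j'(D)$ is itself a matrix of polynomials in $\mathbb F_{p^m}[D]$ (up to the global $D^{d_{min}'}$ pulled out in the derivation of (\ref{Mdisp})), both $f(D)$ and its evaluation at $\alpha^s\in\mathbb F_{p^b}$ are unambiguous. No harder step should appear, and combining the two halves above immediately yields the desired equivalence.
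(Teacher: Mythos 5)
Your proposal is correct and follows essentially the same route as the paper: the paper simply packages your identity $\hat M_{ij}^{(t)}=M_{ij}(\alpha^{n-1-t})$ as a family of evaluation homomorphisms $\phi_t$ with $\phi_t(D)=\alpha^t$, observes exactly as you do that zero-interference is preserved under evaluation, and reduces invertibility at each time instant to $\phi_t(f(D))=f(\alpha^t)\neq 0$. No substantive difference or gap.
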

%%%%
\begin{proof}
Proof is given in Appendix \ref{appen_nonzeros}.
\end{proof}
%%%%%%

We now prove the following theorem which concerns with the relationship between
the existence of a feasible network code and a feasible transform network code
for $({\cal G},{\cal C}).$
%%%%%%%%
\begin{theorem}
\label{thmexistence}
Let $({\cal G},{\cal C})$ be the given acyclic delay network with the set of
connections ${\cal C}$ demanded by the sinks. There exists a feasible transform
network code for $({\cal G},{\cal C})$ if and only if there exists a feasible
network code for $({\cal G},{\cal C})$ such that $(D-1)\nmid f(D)$, i.e., $f(1) \neq 0$.
\end{theorem}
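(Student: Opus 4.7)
The plan is to invoke Lemma~\ref{nonzeros} in both directions, leveraging the fact that $f(D)$ has only finitely many roots in $\overline{\mathbb{F}_p}$.

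For necessity ($\Rightarrow$), suppose a feasible transform network code exists, implemented with some block length $n$ and some primitive $n$-th root of unity $\alpha$ in an extension field. By Lemma~\ref{nonzeros}, the underlying LEC assignment (viewed over the appropriate extension of $\mathbb{F}_{p^m}$) must already define a feasible network code for $({\cal G},{\cal C})$, and must additionally satisfy $f(\alpha^t) \neq 0$ for every $0 \leq t \leq n-1$. Specialising to $t=0$ gives $f(1) \neq 0$, i.e.\ $(D-1)\nmid f(D)$, so this direction is immediate.

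For sufficiency ($\Leftarrow$), I would start from a feasible network code over $\mathbb{F}_{p^m}$ whose polynomial $f(D)$ defined in \eqref{eqnno1} satisfies $f(1) \neq 0$, and then manufacture the data $(n,\alpha)$ that Lemma~\ref{nonzeros} needs. Let $r_1,\ldots,r_k \in \overline{\mathbb{F}_p}^{\,*}$ be the nonzero roots of $f(D)$; by hypothesis none of them equals $1$. Let $N_i$ be the multiplicative order of $r_i$ in $\overline{\mathbb{F}_p}^{\,*}$ and set $N=\mathrm{LCM}(N_1,\ldots,N_k)$. Choose $n$ large enough that $n \gg d_{\max}$ and $\gcd(n,N)=1$; for instance, any prime exceeding both $d_{\max}$ and $N$ works. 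Next, pick $b$ with $n \mid p^{b}-1$, so that $\mathbb{F}_{p^b}$ contains some $\alpha$ of multiplicative order exactly $n$. The coprimality $\gcd(n,N)=1$ forces any common element of $\{\alpha^t : 1\leq t \leq n-1\}$ and $\{r_1,\ldots,r_k\}$ to have order dividing both $n$ and $N$, hence order one; but every $r_i \neq 1$, so no common element can exist. Together with $f(1)\neq 0$ this yields $f(\alpha^t)\neq 0$ for all $0 \leq t \leq n-1$, and Lemma~\ref{nonzeros} then produces the desired feasible transform network code.

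The main obstacle is exactly the selection step in the sufficiency direction: since $\alpha^{0}=1$ is forced, the hypothesis $f(1)\neq 0$ cannot be dispensed with, and the remaining powers $\alpha,\alpha^{2},\ldots,\alpha^{n-1}$ must all be steered away from the root set of $f$. The coprimality-of-orders argument sketched above handles this uniformly, but one must verify that $n$ can be chosen simultaneously large enough for the transform construction (exceeding $d_{\max}$ and divisible into $p^{b}-1$ for a suitable $b$) \emph{and} coprime to the orders of every root of $f$ in $\overline{\mathbb{F}_p}^{\,*}$, which is why the extension-field bookkeeping already built into the statement of Lemma~\ref{nonzeros} is essential.
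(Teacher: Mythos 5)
Your necessity direction ($\Rightarrow$) has a genuine gap: you invoke Lemma \ref{nonzeros} in a direction it does not cover. That lemma carries the standing hypothesis ``suppose there exists a feasible network code for $({\cal G},{\cal C})$''; only under that hypothesis is the biconditional ``these LECs yield a feasible transform network code $\iff f(\alpha^t)\neq 0$ for all $t$'' asserted (indeed its proof uses the hypothesis explicitly, e.g.\ to say the zero-interference conditions already hold in $M_j(D)$ and hence in $\phi_t(M_j(D))$). So you cannot conclude from the mere existence of a feasible transform network code that ``the underlying LEC assignment must already define a feasible network code'' --- that is exactly the statement to be proven in this direction, and your argument is circular at this point. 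The paper closes this gap with a separate structural argument: (i) zero-interference transfers back because $M_{ij}=Q_{\nu_j}\hat{M}_{ij}Q_{\mu_i}^{-1}$ (see (\ref{eqnno9}) versus (\ref{bigeqn2})) shows that if the $k^{\text{th}}$ column of $\hat{M}_{ij}^{(t)}$ vanishes for every $t$, then the $k^{\text{th}}$ column of every coefficient matrix $M_{ij}^{(d)}$, $0\leq d\leq d_{max}$, vanishes, so the non-transform zero-interference conditions hold; (ii) invertibility transfers back because $\hat{M}_j^{(n-1)}=\sum_{d}M_j^{(d)}=M_j(D)|_{D=1}$, so nonsingularity of the transform submatrix at $t=n-1$ gives $det\left(M_j'(D)\right)|_{D=1}\neq 0$, whence $det\left(M_j'(D)\right)$ is a nonzero polynomial (invertibility over $\mathbb{F}_{p^m}(D)$) and simultaneously $f(1)\neq 0$. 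Without steps (i) and (ii), your ``immediate'' direction is unproven.

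Your sufficiency direction ($\Leftarrow$) is essentially sound and is the same idea as the paper's if part: the paper arranges, via the splitting field of $f(D)$ and a factorization of $p^{m''}-1$, that the order $n$ of $\alpha$ is coprime to the orders of all roots of $f$, which is exactly your $\gcd(n,N)=1$ argument phrased with less bookkeeping; both then conclude $\alpha^t$ ($1\leq t\leq n-1$) cannot be a root and use $f(1)\neq 0$ for $t=0$ before applying Lemma \ref{nonzeros}. One small repair: to guarantee an element $\alpha$ of order exactly $n$ in some $\mathbb{F}_{p^b}$ you need $\gcd(n,p)=1$, so choose your prime $n$ larger than $\max(d_{max},N,p)$ (or otherwise ensure $p\nmid n$), and note $f$ is a nonzero polynomial (so has finitely many roots) precisely because the given network code is feasible.
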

%%%
\begin{proof}
Proof is given in Appendix \ref{appen_thmexistence}.
\end{proof}

% %%
% \begin{remark}
We now present an example acyclic network in which there exists a feasible network code, using which we obtain a feasible transform network code for some choice of $n \geq 7.$
\begin{example}
\label{exm1}
%%%%%%%%
\begin{figure}[htbp]
\centering
\includegraphics[totalheight=2.7in]{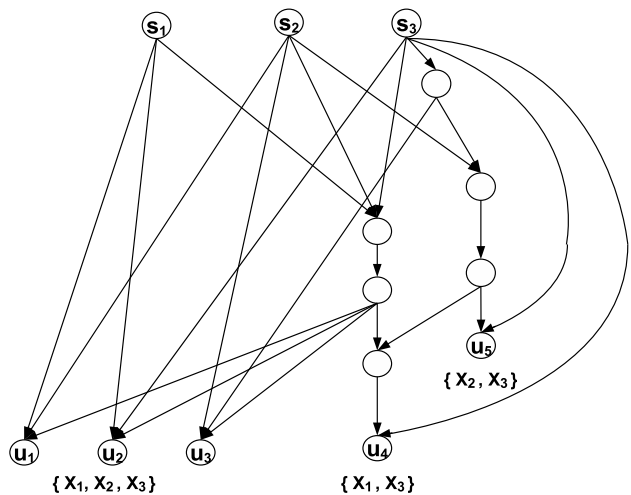}
%\caption{$\left(\begin{array}{c}6 \\3 \end{array}\right)$ network with $20$
% sinks}	
\caption{A unit-delay network with $3$ sources and $5$ sinks}	
\label{fig1}	
\end{figure}
%%%%

Consider the network $\cal G$ shown in Fig. \ref{fig1}. This is a unit-delay network (where each edges have a delay of one unit associated with it) taken from \cite{PrR2}. For $1\leq i \leq 3,$ each source $s_i$ has an information sequence $x_i(D).$ This network has non-multicast demands, with sinks $u_j$, $1\leq j \leq 3$, requiring all three information sequences, while sink $u_4$ requires $\left\{x_1(D),x_3(D)\right\}$ and $u_5$ demands $\left\{x_2(D),x_3(D)\right\}.$ Let $\cal C$ denote these set of demands. A feasible network code for $({\cal G},{\cal C})$ over $\mathbb{F}_2$ as obtained in \cite{PrR2} can be obtained by using $1$ as the local encoding coefficient coefficient at all non-sink nodes. The transfer matrix  $M_{u_j}(D),$ the invertible submatrix $M_{u_j}'(D)$ of $M_{u_j}(D),$ and their determinants for the sinks $u_j$, $1 \leq j \leq 5$, are tabulated in Table \ref{tab1}. 
%%%%%%%%%%%%
\begin{table*}
\label{tab1}
\centering
\normalsize
\caption{}
\begin{tabular}{|c|c|c|c|}
\hline
\textbf{Sink} & \textbf{Network transfer} & \textbf{Invertible submatrix} &
\textbf{Determinant of} $\boldsymbol{M_{u_j}'(D)},$ \\
&  \textbf{matrix} $\boldsymbol{M_{u_j}(D)}$ &
$\boldsymbol{M_{u_j}'(D)}$\textbf{ of }$\boldsymbol{M_{u_j}(D)}$ &  
$\boldsymbol{det(M_{u_j}'(D))}$ \\ 
\hline
$u_1$ & $\left(\begin{array}{ccc} D & 0 & 0 \\ 0 & D & 0 \\ D^3 & D^3 & D^3
\end{array}\right)$ & $M_{u_1}(D)$ & $D^5$ \\
\hline 
$u_2$ & $\left(\begin{array}{ccc} D & 0 & 0 \\ 0 & 0 & D \\ D^3 & D^3 & D^3
\end{array}\right)$ & $M_{u_2}(D)$ & $D^5$ \\
\hline
$u_3$ & $\left(\begin{array}{ccc} 0 & D & 0 \\ 0 & 0 & D^2 \\ D^3 & D^3 & D^3
\end{array}\right)$ & $M_{u_3}(D)$ & $D^6$ \\
\hline
$u_4$ &  $\left(\begin{array}{ccc} D^4 & 0 & D^4+D^5 \\ 0 & 0 & D
\end{array}\right)$ & $\left(\begin{array}{cc} D^4 & D^4+D^5 \\ 0 & D
\end{array}\right)$ & $D^5$\\
\hline
$u_5$ & $\left(\begin{array}{ccc} 0 & D^3 & D^4 \\ 0 & 0 & D \end{array}\right)$
& $\left(\begin{array}{ccc} D^3 & D^4 \\ 0 & D \end{array} \right)$ & $D^4$ \\
\hline
\end{tabular}
~~\\
~~\\
~\\
% \hrule
\end{table*}
%%%%%%%%%%

We therefore have $f(D) = D^{25}.$ Note that $f(1)\neq 0$ and $d_{max}=4$ for this network. Therefore, with $n=2^m-1$ for any positive integer $m \geq 3,$ i.e., $\alpha$ being the primitive element of $\mathbb{F}_{2^m},$ we will then have $f(\alpha^t) \neq 0$ for any $0 \leq t \leq n-1.$ By Lemma \ref{nonzeros}, we then have a feasible transform network code for $({\cal G},{\cal C}).$ 
\end{example}
%%%%%%

\subsection{Comparison of complexity of the proposed transform approach and the non-transform approach}
Based on the constructive proof of Theorem \ref{thmexistence}, a large field might be required for the existence of a suitable value for $\alpha$ that defines the necessary transform for the network, under the condition that the rate-loss $\left(\frac{d_{max}}{n}\right)$ due to the transform approach be less. The transformed network would then have to be operated over this large field, i.e., the matrices $\hat{M}_{ij}^{(t)}$ have elements from this large field (which is at least a degree $n$ extension over the base field over which the non-transform network code is defined). It is known that (see \cite{GV}, for example) inverting a $\nu_j\times \nu_j$ matrix (at some sink-$j$) takes  $O(\nu_j^3)$ computations, however over the extension field. In the process of computing these inverses, the information symbols corresponding to the $n$ generations are obtained by Gauss-Jordan elimination. In terms of base field computational complexity, the complexity of computing the inverse of the transfer matrix becomes $O 
\left(\nu_j^3n(\log n)(\log\log n)\right)$, as each multiplication in the extension field involves $O\left(n(\log n)(\log \log n)\right)$ computations over the base field  \cite{wiki} (it is equivalent to multiplying two polynomials of degree at least $n-1$ over the base field). The total complexity of recovering the input symbols at all the $n$ generations is then $O\left(n^2\nu_j^3(\log n)(\log \log n)\right)$.

On the other hand, if the non-transform network code is used as such, the transfer matrices $M'_j(D)$ consist of polynomials of degree upto $d_{max}$ in $D$ over the base field. Again, it is known (see \cite{GV}, for example) that finding the inverse of such a matrix has complexity $O(\nu_j^3d_{max})$. To do a fair comparison with the transform case, we consider decoding of $n$-generations ($n$ being large as in the transform case) of information.  Note that inversion of the matrix $M'_j(D)$ does not give us the information polynomials directly. A naive method of obtaining the each information polynomial would then require $\nu_j^2$ multiplications of polynomials over the base field (each of which has complexity $O\left(n(\log n)(\log \log n)\right)$, assuming that $\nu_jd_{max} < n.$) and one division between polynomials (again with complexity $O\left(n(\log n)(\log \log n)\right)$). Therefore, the total complexity involved in recovering the information sequences would then be $O\left(\nu_j^3n(\log n)(\log \
log n)\right) + O\left(\nu_jn(\log n)(\log \log n)\right) + O(\nu_j^3d_{max})$ computations.

Thus, we see that there is an advantage in the complexity of decoding in the non-transform network compared to the transform network (inspite of using the least possible size for the extension field). Therefore, based on the constructive proof of Theorem \ref{thmexistence}, complexity reduction is not an advantage of the transform process. We however note that the construction shown in the proof of Theorem \ref{thmexistence} need not be the only method to construct a transform network code. It remains an open problem to see if a suitable DFT matrix can be defined over a field of smaller size than that suggested by the proposed construction. In that case, the complexity of our transform technique could be lesser than the usual non-transform technique.

The transform technique is useful for PBNA in $3$-S $3$-D MUN-D which will be discussed in the subsequent sections. In the next section, we shall consider the feasibility problem of PBNA using time-varying LECs and PBNA using time-invariant LECs where we apply the transform techniques to obtain precoding matrices similar to the ones in \cite{DVJM}.
%  and then discuss PBNA with block time-varying LECs where reduced feasibility conditions similar to those in \cite{MRMJ} are obtained.
\section{PBNA Using Time-Invariant and Time-Varying LECs IN $3$-S $3$-D MUN-D}
\label{sec4}
In \cite{DVJM}, it was shown that for a class of $3$-S $3$-D I-MUN, it is possible to achieve a throughput close to $1/2$ for every source-destination pair via network alignment. In this section, we deal with $3$-S $3$-D MUN-D where each source-destination pair has a min-cut of $1$. In Section \ref{subsec1}, we employ the results from Section \ref{sec3} and show that, even when the zero-interference conditions of Lemma \ref{zero-int} (or Theorem \ref{thm2}) cannot be satisfied, for a class of $3$-S $3$-D MUN-D, we can achieve a throughputs of $\frac{n'+1}{2n'+1}$, $\frac{n'}{2n'+1}$ and $\frac{n'}{2n'+1}$ (for some positive integer $n'$) for the three source-destination pairs by making use of network alignment. The throughputs are close to half when $n'$ is large. This scheme is termed as PBNA using transform approach and time-invariant LECs. In Section \ref{subsec2}, we proceed to generalize the conditions for feasibility of network alignment using time-varying LECs, i.e., we obtain a sufficient condition under which throughputs of $\frac{n_1}{n}$, $\frac{n_2}{n}$ and $\frac{n_3}{n}$ can be achieved for the three source-destination pairs where, $n_1$, $n_2$ and $n_3$ are positive integers less than or equal to $n$. The condition is also a necessary one when $n_1+n_3=n_1+n_2=n$ where it is assumed, without loss of generality, that $n_1 \geq n_2 \geq n_3$. 

Let the random process injected into the network by source $S_i$, $i=1,2,3$, be $X_i(D)$. Source $S_i$ needs to communicate only with destination $T_i$. Here, $\mu_i = 1$ and $\nu_j = 1$, $i,j=1,2,3$.

We shall consider the following two cases separately.
\begin{enumerate}
 \item The min-cut between $S_i$ and $T_j$ is greater than or equal to
$1$, for all $i \neq j$.
 \item The min-cut between $S_i$ and $T_j$ is equal to $0$, for some $i
\neq j$.
\end{enumerate}
$\\$
\textbf{Case 1:} The min-cut between source-$i$ and sink-$j$ is greater than or
equal to $1$, for all $i \neq j$.

\subsection{PBNA using transform approach and time-invariant LECs}
\label{subsec1}
Consider a transmission scheme where, in order to transmit $2n'+1$ ($>> d_{max}$) generations of input symbols at each source, the cyclic prefix comprising $d_{max}$ generations is transmitted first, followed by the $2n'+1$ generations of input symbols. Let $Q_1X_i^{2n'+1}$ be the input symbols transmitted by source $i$, where,
%%%%%%%%%%%%%
\begin{align}
\nonumber
X_i^{2n'+1} ~= ~[X_i^{(2n')} ~X_i^{(2n'-1)} ~\cdots ~X_i^{(0)}]^T
\end{align}
%%%%%%%%%%%%%
Also, let 
\begin{align*}
 X_1^{2n'+1}=V_1{X_1^\prime}^{n'+1}, X_2^{2n'+1}=V_2{X_2^\prime}^{n'} \text{, and } X_3^{2n'+1}=V_3{X_3^\prime}^{n'}
\end{align*}
where, $V_1$ is a $(2n'+1)\times(n'+1)$ matrix,  $V_2$ is a $(2n'+1)\times n'$ matrix,  $V_3$ is a $(2n'+1)\times n'$ matrix, and
%%%%%%%%%%%%%
\begin{align*}
{X_1^\prime}^{n'+1}&=[{X_1^\prime}^{(0)} ~{X_1^\prime}^{(1)} ~\cdots
~{X_1^\prime}^{(n')}]^T\\
{X_2^\prime}^{n'} &=[{X_2^\prime}^{(0)} ~{X_2^\prime}^{(1)} ~\cdots
~{X_2^\prime}^{(n'-1)}]^T\\
{X_3^\prime}^{n'} &=[{X_3^\prime}^{(0)} ~{X_3^\prime}^{(1)} ~\cdots
~{X_3^\prime}^{(n'-1)}]^T.
\end{align*}
%%%%%
The quantities ${X_1^\prime}^{n'+1}$, ${X_2^\prime}^{n'}$ and ${X_3^\prime}^{n'}$ denote the $(n'+1)$, $n'$, and $n'$ independent input symbols generated by $S_1$, $S_2$ and $S_3$ respectively. From (\ref{eqnrefer1unicast}) we have, for $j=1,2,3$,
%%%%%%%%%%%%%
\begin{align}
\nonumber
{Y_j}^{2n'+1}=  \hat{M}_{1j} V_1 {X_1^\prime}^{n'+1} + \hat{M}_{2j} V_2
{X_2^\prime}^{n'} + \hat{M}_{3j} V_3 {X_3^\prime}^{n'},
\end{align}
where, ${Y_j}^{2n'+1}$ denotes the $(2n'+1)$ output symbols at sink-$j$.
%%%%%%%%%%%%%
The objective is to recover the $(n'+1)$ independent input symbols of $S_1$, $n'$ independent input symbols of $S_2$, and $n'$ independent input symbols of $S_3$ at $T_1$, $T_2$, and $T_3$ from $Y_1^{2n'+1}$, $Y_2^{2n'+1}$, and $Y_3^{2n'+1}$ respectively.

For acyclic networks without delay, the network alignment concept in \cite{DVJM} involved varying LECs at every time instant. But with delays it is possible, in some cases, to achieve network alignment even with time-invariant LECs. This is what we show in this sub-section.

First, note that the elements of the matrices $\hat{M}_{ij}$ are functions of $\underline{\varepsilon}$.
%%%%%%%%%%%%%%%%%%%%%%%%%%%%%%%%%%%%%%%%%%%%%%%%%%%%%%%%%%%%%%
\begin{lemma}
\label{lemma2}
Determinants of the matrices $\hat{M}_{ij}$, $i,j =1,2,3$, are non-zero polynomials in $\underline{\varepsilon}$.
\end{lemma}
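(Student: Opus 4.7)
The plan is to reduce the claim to a single-variable non-vanishing check on each diagonal block. Since we are in the $3$-S $3$-D MUN-D setting with $\mu_i = \nu_j = 1$, each block $\hat{M}_{ij}^{(t)}$ in the block-diagonalization of $\hat{M}_{ij}$ is a scalar, so $\hat{M}_{ij}$ is actually a diagonal $(2n'+1)\times(2n'+1)$ matrix and
\begin{equation*}
\det(\hat{M}_{ij}) \;=\; \prod_{t=0}^{2n'} \hat{M}_{ij}^{(t)} \;=\; \prod_{t=0}^{2n'} \sum_{d=0}^{d_{max}} \alpha^{(2n'-t)d}\, M_{ij}^{(d)}.
\end{equation*}
Observing that $\sum_{d=0}^{d_{max}} \alpha^{(2n'-t)d} M_{ij}^{(d)}$ is exactly $M_{ij}(D)$ evaluated at $D = \alpha^{2n'-t}$, it suffices to show that for every fixed $t$, the polynomial $M_{ij}(\alpha^{2n'-t}) \in \mathbb{F}_{p^m}[\underline{\varepsilon}]$ is non-zero, since the polynomial ring $\mathbb{F}_{p^m}[\underline{\varepsilon}]$ is an integral domain.

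To show non-vanishing of each factor, I would exhibit a single assignment of LECs for which $\hat{M}_{ij}^{(t)}$ is nonzero. Because we are in Case 1, the min-cut between $S_i$ and $T_j$ is at least one for all $i,j$ (including $i=j$), so there is at least one directed path $P_{ij}$ from $S_i$ to $T_j$ in $\cal G$. Assign the LECs $\alpha_{1,e}$, $\beta_{e',e}$, $\epsilon_{e',1}$ along $P_{ij}$ to $1$ and all LECs not on $P_{ij}$ to $0$; since each edge contributes a unit delay, this makes $M_{ij}(D) = D^{|P_{ij}|}$, and hence
\begin{equation*}
\hat{M}_{ij}^{(t)} \;=\; \alpha^{(2n'-t)\,|P_{ij}|} \;\neq\; 0,
\end{equation*}
because $\alpha$ is a primitive $n$-th root of unity in the extension field (in particular, $\alpha \neq 0$). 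Thus the polynomial $\hat{M}_{ij}^{(t)}(\underline{\varepsilon})$ is not identically zero.

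Combining, each of the $2n'+1$ diagonal factors in $\det(\hat{M}_{ij})$ is a non-zero polynomial, and since the polynomial ring over the (extension) field is an integral domain, their product $\det(\hat{M}_{ij})$ is a non-zero polynomial in $\underline{\varepsilon}$ as claimed. The main subtlety, really the only one, is ensuring that the evaluation $M_{ij}(D)\mapsto M_{ij}(\alpha^{2n'-t})$ does not collapse a non-zero polynomial in $\underline{\varepsilon}$ to zero; the path-based assignment above sidesteps this by producing a pure monomial $D^{|P_{ij}|}$ whose evaluation at a non-zero element is manifestly non-zero, so no real obstacle arises.
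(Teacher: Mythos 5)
Your proof is correct and follows essentially the same route as the paper: both observe that the diagonal entries of $\hat{M}_{ij}$ are the evaluations of $M_{ij}(D)$ at powers of $\alpha$ (equivalently, the DFT of the first row of the circulant $M_{ij}$), and both establish non-vanishing by assigning LECs equal to $1$ along a single directed path guaranteed by the min-cut (Menger) condition and $0$ elsewhere, which reduces $M_{ij}(D)$ to a single monomial in $D$ whose evaluation at a nonzero $\alpha^{q}$ is nonzero. No gaps; the argument matches the paper's Appendix proof.
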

\begin{proof}
Proof is given in Appendix \ref{appen_lemma2}.
\end{proof}
%%%%%%%%%%%%%%%%%%%%%%%%%%%%%%%%%%%%%%%%%%%%%%%%%%%%%%%%%%%%%%

Let
\begin{align}
\nonumber
\hat{U}&=\hat{M}_{12}^{-1}\hat{M}_{32}\hat{M}_{31}^{-1}\hat{M}_{21}\hat{M}_{23}^{-1}\hat{M}_{13},\\
\label{mat_T}
\hat{R}&=\hat{M}_{13}\hat{M}_{23}^{-1}, ~\hat{S}=\hat{M}_{12}\hat{M}_{32}^{-1}.
\end{align} Now, choose
%%%%%%%%%%%%%
\begin{align}
\label{thm2_pf7}
&V_1 = [W ~\hat{U}W ~\hat{U}^2W ~\cdots \hat{U}^{n'}W]\\
\label{thm2_pf8}
&V_2 = [\hat{R}W ~\hat{R}\hat{U}W ~\hat{R}\hat{U}^2W ~\cdots ~\hat{R}\hat{U}^{n'-1}W]\\
\label{thm2_pf9}
&V_3 = [\hat{S}\hat{U}W ~\hat{S}\hat{U}^2W ~\cdots ~\hat{S}\hat{U}^{n'}W]
\end{align}where, $W=[1 ~1 ~\cdots ~1]^T$ (all ones vector of size $(2n'+1) \times 1$). Since the transform approach requires that $2n'+1 | p^m-1$, we shall find it useful in stating the exact relationship between $2n'+1$ and $p$ which will be used in the result that follows.

\begin{lemma}\label{lemma_newly_added}
 The positive integer $2n'+1$ divides $p^m-1$ for some positive integer $m$ iff $p \nmid 2n'+1$.
\end{lemma}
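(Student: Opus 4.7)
The plan is to prove each direction by elementary number theory, exploiting only the primality of $p$ and Euler's theorem. There is no substantive obstacle here; the claim is essentially a restatement of when $p$ is a unit modulo $2n'+1$.

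For the forward direction, I would argue contrapositively. Suppose $p \mid 2n'+1$. Then any multiple of $2n'+1$ is divisible by $p$, in particular $2n'+1 \mid p^m - 1$ would force $p \mid p^m - 1$. But $p \mid p^m$, so $p \mid p^m - (p^m - 1) = 1$, which is impossible since $p \geq 2$. Hence $2n'+1 \nmid p^m - 1$ for every positive integer $m$.

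For the backward direction, suppose $p \nmid 2n'+1$. Since $p$ is prime, this is equivalent to $\gcd(p, 2n'+1) = 1$, so the residue of $p$ modulo $2n'+1$ lies in the multiplicative group $(\mathbb{Z}/(2n'+1)\mathbb{Z})^{*}$. By Euler's theorem (or, equivalently, because every element of a finite group has finite order dividing the group order),
\begin{equation*}
p^{\phi(2n'+1)} \equiv 1 \pmod{2n'+1},
\end{equation*}
so the choice $m = \phi(2n'+1)$ (where $\phi$ denotes the Euler totient function) gives $2n'+1 \mid p^m - 1$, as required.

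Combining the two directions yields the equivalence. The only point worth remarking on is that the $m$ produced in the backward direction need not be the smallest such exponent (that would be the multiplicative order of $p$ modulo $2n'+1$); but since the lemma only asserts existence of some $m$, Euler's theorem is already sufficient.
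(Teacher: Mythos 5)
Your proof is correct and follows essentially the same route as the paper: Euler's theorem with $m=\phi(2n'+1)$ for the "if" direction, and a coprimality/divisibility-of-$1$ argument (stated contrapositively rather than via $p^m-(2n'+1)r=1$) for the "only if" direction. No gaps.
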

\begin{proof}
Proof is given in Appendix \ref{appen_lemma_newly_added}.
\end{proof}

\begin{theorem}
\label{thm3}
The input symbols ${X_1^\prime}^{n'+1}$, ${X_2^\prime}^{n'}$ ,and ${X_3^\prime}^{n'}$ can be exactly recovered at $T_1$, $T_2$, and $T_3$ from the output symbols $Y_1^{2n'+1}$, $Y_2^{2n'+1}$, and $Y_3^{2n'+1}$ respectively subject to $p \nmid 2n'+1$, if the following conditions hold.
\begin{align}
\label{thm3_cond1}
&\mbox{Rank}[V_1 ~~\hat{M}_{11}^{-1}\hat{M}_{21}V_2]=2n'+1\\
\label{thm3_cond2}
&\mbox{Rank}[\hat{M}_{12}^{-1}\hat{M}_{22}V_2 ~~V_1]=2n'+1\\
\label{thm3_cond3}
&\mbox{Rank}[\hat{M}_{13}^{-1}\hat{M}_{33}V_3 ~~V_1]=2n'+1
\end{align}
\end{theorem}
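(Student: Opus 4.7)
The plan is to exploit a structural feature special to the delay-network setting: because $\mu_i = \nu_j = 1$, each polynomial $M_{ij}(D)$ is scalar, so each transform-domain block matrix $\hat M_{ij}$ is a $(2n'+1)\times(2n'+1)$ \emph{diagonal} matrix. Diagonal matrices commute pairwise, so all the products built from the $\hat M_{ij}$'s (in particular $\hat U$, $\hat R$, $\hat S$) can be freely reordered. This is the crucial algebraic fact that will make the alignment identities collapse exactly, as opposed to holding only generically as in the instantaneous-network setting of \cite{DVJM}.

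The first step is to verify interference alignment at each destination. At $T_1$, commutativity yields $\hat M_{31}\hat S\hat U = \hat M_{21}\hat R$ by direct cancellation of the diagonal factors in the definitions of $\hat U, \hat R, \hat S$; applying this column by column via the powers of $\hat U$ gives $\hat M_{31}V_3 = \hat M_{21}V_2$, so the interference from $S_2$ and $S_3$ coincides as a single $n'$-dimensional block. At $T_2$, one simplifies $\hat M_{32}\hat S = \hat M_{32}\hat M_{12}\hat M_{32}^{-1} = \hat M_{12}$, whence $\hat M_{32}V_3 = \hat M_{12}[\hat UW\ \cdots\ \hat U^{n'}W]$ equals the last $n'$ columns of $\hat M_{12}V_1$, so $\text{Span}(\hat M_{32}V_3)\subset\text{Span}(\hat M_{12}V_1)$. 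Symmetrically at $T_3$, $\hat M_{23}\hat R = \hat M_{13}$ forces $\hat M_{23}V_2$ to equal the first $n'$ columns of $\hat M_{13}V_1$, giving $\text{Span}(\hat M_{23}V_2)\subset\text{Span}(\hat M_{13}V_1)$.

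The second step is decoding. At $T_1$ the aligned model becomes $Y_1^{2n'+1} = \hat M_{11}V_1 X_1' + \hat M_{21}V_2(X_2'+X_3')$; pre-multiplying by $\hat M_{11}^{-1}$ (well-defined whenever the LEC assignment keeps $\det\hat M_{11}\neq 0$, which Lemma \ref{lemma2} guarantees is a generic condition and which is implicit in the hypothesis through the occurrence of $\hat M_{11}^{-1}$ in (\ref{thm3_cond1})) yields $V_1 X_1' + \hat M_{11}^{-1}\hat M_{21}V_2 (X_2'+X_3')$, and condition (\ref{thm3_cond1}) asserts precisely that these two blocks together span $\mathbb{F}_{p^m}^{2n'+1}$, recovering $X_1'$ uniquely. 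Parallel arguments at $T_2, T_3$ use conditions (\ref{thm3_cond2}) and (\ref{thm3_cond3}) after pre-multiplying by $\hat M_{12}^{-1}$ and $\hat M_{13}^{-1}$ respectively; here the ``signal'' block has $n'$ columns and the aligned interference $\text{Span}(V_1)$ has $n'+1$ columns, still summing to $2n'+1$. The side hypothesis $p\nmid 2n'+1$ is exactly what Lemma \ref{lemma_newly_added} needs to ensure that $2n'+1\mid p^m-1$ for some $m$, so that a primitive $(2n'+1)$-th root of unity $\alpha$ exists and the DFT underlying the transform approach is defined. The only genuinely non-trivial step is the identity $\hat M_{31}\hat S\hat U = \hat M_{21}\hat R$ at $T_1$; I expect the main writing effort to go into presenting this column-by-column alignment transparently, since the analogous identities at $T_2, T_3$ and the dimension count are routine once commutativity is in hand.
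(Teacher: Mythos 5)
Your structural verification of alignment is sound and in fact makes explicit what the paper dismisses as ``easily seen'': because $\mu_i=\nu_j=1$ the blocks $\hat M_{ij}$ are diagonal and commute, so $\hat M_{31}\hat S\hat U=\hat M_{21}\hat R$, $\hat M_{32}\hat S=\hat M_{12}$, $\hat M_{23}\hat R=\hat M_{13}$, giving $\hat M_{31}V_3=\hat M_{21}V_2$ and the two column-inclusion conditions, after which the rank hypotheses (\ref{thm3_cond1})--(\ref{thm3_cond3}) let each destination strip off its aligned interference. Up to that point you are following the same route as the paper's Appendix \ref{appen_thm3}.

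However, you have skipped the step that carries the real content of the theorem and that explains why $p\nmid 2n'+1$ is a hypothesis at all. In the paper the matrices $\hat M_{ij}$ are matrices of polynomials in the indeterminate LECs $\underline{\varepsilon}$ (cf.\ Lemma \ref{lemma2}), so the rank conditions are \emph{generic} conditions: they say that the determinants $f_1,f_2,f_3$ of the three $(2n'+1)\times(2n'+1)$ matrices are non-zero polynomials in $\underline{\varepsilon}$. The proof must then produce one concrete assignment of the LECs, over one field $\mathbb{F}_{p^m}$, under which (i) all three determinants are simultaneously non-zero, (ii) all the matrices $\hat M_{ij}$ whose inverses define $\hat U,\hat R,\hat S$, $V_2$, $V_3$ and the decoders are actually invertible (the paper folds this in via the extra factor $f_4=\prod_{i,j}\det(M_{ij})$ in the product polynomial $f=f_1f_2f_3f_4$), and (iii) the field still supports a $(2n'+1)$-point DFT, i.e.\ $2n'+1\mid p^m-1$. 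Items (i)--(ii) are handled by Lemma~1 of \cite{KoM} (a Schwartz--Zippel type argument) once $f$ is a non-zero polynomial, but that lemma only gives an assignment for \emph{sufficiently large} $m$; the role of $p\nmid 2n'+1$, through Lemma \ref{lemma_newly_added}, is precisely that the admissible degrees $m$ (those with $\phi(2n'+1)\mid m$) form an unbounded set, so the field can be enlarged as required without losing the transform. Your proposal treats the rank conditions as already holding numerically for a fixed LEC assignment and relegates $p\nmid 2n'+1$ to ``the DFT exists,'' which leaves the simultaneous choice of LECs and field --- the part the paper actually proves --- unaddressed; your parenthetical that invertibility of $\hat M_{11}$ is ``implicit in the hypothesis'' is likewise not a substitute for the $f_4$ factor and the joint non-vanishing argument.
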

\begin{proof}
Proof is given in Appendix \ref{appen_thm3}.
\end{proof}

When the conditions of the above Theorem are satisfied, we say that PBNA using transform approach and time-invariant LECs is feasible. When PBNA using transform approach and time-invariant LECs is feasible, throughputs of $\frac{(n'+1)}{(2n'+1)}$, $\frac{n'}{(2n'+1)}$, and $\frac{n'}{(2n'+1)}$ are achieved for the source-destination pairs $S_1-T_1$, $S_2-T_2$, and $S_3-T_3$ respectively. When $n'$ is large, the throughputs are close to half. The throughput loss due to the addition of cyclic prefix is not accounted for, since it is assumed that $2n'+1 >> d_{max}$. 

It will be shown in Section \ref{subsec4} that the conditions of Theorem \ref{thm3} are also necessary conditions for feasibility of PBNA using transform approach and time-invariant LECs, i.e., the choice of the precoding matrices in (\ref{thm2_pf7})-(\ref{thm2_pf9}) do not restrict the conditions for network alignment.

\begin{remark} 
In a $3$-S $3$-D I-MUN considered in \cite{DVJM}, it was not possible to achieve network alignment without changing the LECs with time.  When there is no delay, the matrices $\hat{U}$, $\hat{R}$, and $\hat{S}$ given in (\ref{mat_T}), would simply be equal to $f(\underline{\varepsilon})I_{2n'+1}$ (where, $f(\underline{\varepsilon})$ is some polynomial in $\underline{\varepsilon}$) and hence, the matrices $V_1$, $V_2$ and $V_3$ as given in (\ref{thm2_pf7})-(\ref{thm2_pf9}) are themselves not full-rank matrices. Hence, $\underline{\varepsilon}$ was varied with time in \cite{DVJM}. However, in the case of delay it is easy to see from the structure of the matrix $\hat{M}_{ij}$ that the matrices $\hat{U}$, $\hat{R}$, and $\hat{S}$ are not necessarily scaled identity matrices. 
\end{remark}

The following example, taken from \cite{DVJM} (but considered with delays), illustrates the existence of a network where network alignment is feasible with time-invariant LECs.

\begin{example} \label{eg-PBNA_TINV_LEC}
Consider the network shown in Fig. \ref{NA_eg}. Each link is taken to have unit-delay. 
%%%%%%%%
\begin{figure}[htbp]
\centering
\includegraphics[totalheight=2.7in]{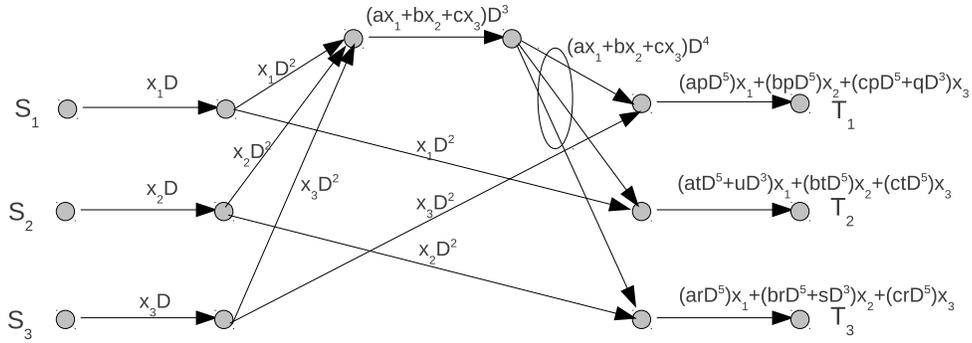}
%\caption{$\left(\begin{array}{c}6 \\3 \end{array}\right)$ network with $20$
% sinks}	
\caption{A $3$-S $3$-D MUN-D where PBNA using transform approach and time-invariant LECs is feasible.}	
\label{NA_eg}	
% \vspace{0.5cm}
% \hrule
\end{figure}
%%%%
In accordance with the LECs denoted as in the figure, the transfer matrices $M_{ij}(D)$ are as given below.
%%%%%%%%%%%%%%
\begin{align*}
&M_{11}(D)=M_{11}^{(5)}D^{5}=apD^{5},\\
&M_{12}(D)=M_{12}^{(3)}D^{3}+M_{12}^{(5)}D^{5}=uD^{3}+atD^{5},\\
&M_{13}(D)=M_{13}^{(5)}D^{5}=arD^{5},\\
&M_{21}(D)=M_{21}^{(5)}D^{5}=bpD^{5},\\
&M_{22}(D)=M_{22}^{(5)}D^{5}=btD^{5}\\
&M_{23}(D)=M_{23}^{(3)}D^{3}+M_{23}^{(5)}D^{5}=sD^{3}+brD^{5},\\
&M_{31}(D)=M_{31}^{(3)}D^{3}+M_{31}^{(5)}D^{5}=qD^{3}+cpD^{5},\\
&M_{32}(D)=M_{32}^{(5)}D^{5}=ctD^{5},\\
&M_{33}(D)=M_{33}^{(5)}D^{5}=crD^{5}.
\end{align*}
%%%%%%%%%%%%
\noindent As explained in Section \ref{sec2}, without loss, the network transfer function between $S_i-T_j$ can be taken to be equal to $M_{ij}D^{-2}$. Note that the network does not satisfy the zero-interference conditions of Lemma \ref{zero-int}. Here, $d_{max}=2$. Consider the following (random) assignment to the LECs.
%%%%%%%%%%%%%%
\begin{align*}
&a=b=c=p=r=t=1\\
&s=1+{\beta}^2+{\beta}^3+{\beta}^4+{\beta}^5\\
&q=1+{\beta}+{\beta}^2\\
&u=1+{\beta}^4
\end{align*}where, $\beta$ is a primitive element of $GF(2^6)$ whose minimal polynomial is given by $(1+x+x^6)$. The DFT parameter $\alpha$ is given by $\alpha=\beta^9$ and the number of symbol extensions is given by $2n'+1=7$. The transformed network transfer matrices are given by
\begin{align*}
&\hat{M}_{11}=diag\left(1,\alpha^2,\cdots,\alpha^{12} \right),\\
&\hat{M}_{12}=diag\left(\left(1+{\beta}^4\right)+1,\left(1+{\beta}^4\right)+\alpha^2,\cdots,\left(1+{\beta}^4\right)+\alpha^{12} \right),\\
&\hat{M}_{13}=diag\left(1,\alpha^2,\cdots,\alpha^{12} \right),\\
&\hat{M}_{21}=diag\left(1,\alpha^2,\cdots,\alpha^{12} \right),\\
&\hat{M}_{22}=diag\left(1,\alpha^2,\cdots,\alpha^{12}\right),\\
&\hat{M}_{23}=diag\left(\sum_{j=0}^5{\beta}^j+1,\sum_{j=0}^5{\beta}^j+\alpha^2,\cdots,\sum_{j=0}^5{\beta}^j+\alpha^{12} \right),\\
&\hat{M}_{31}=diag\left(\left(1+{\beta}+{\beta}^2\right)+1,\left(1+{\beta}+{\beta}^2\right)+\alpha^2,\cdots,\left(1+{\beta}+{\beta}^2\right)+\alpha^{12} \right),\\
&\hat{M}_{32}=diag\left(1,\alpha^2,\cdots,\alpha^{12} \right),\\
&\hat{M}_{33}=diag\left(1,\alpha^2,\cdots,\alpha^{12} \right).
\end{align*}

It can be verified using the software {\em Mathematica}
\footnote{A Galois Field package for {\em Mathematica} is available at \cite{mathematica_pack}.} that the rank conditions of Theorem \ref{thm3} (in (\ref{thm3_cond1})-(\ref{thm3_cond3})) are satisfied using the above assignment to the LECs and $\alpha$.
\end{example}

\subsection{PBNA with time-varying LECs}
\label{subsec2}
The feasibility problem for PBNA with time-varying LECs is stated as follows. Source $S_i$ demands a throughput of $\frac{n_i}{n}$ where, $n$ is a positive integer and $n_i$, $i = 1,2,3$, are positive integers less than or equal to $n$. Without loss of generality, we assume that $n_1 \geq n_2 \geq n_3$. We need to determine if the throughput demands can be met through a PBNA scheme similar to the one described in the previous sub-section while permitting the use of time-varying LECs. The solution to this problem will also generalize Theorem \ref{thm3}. Moreover, there can exist $3$-S $3$-D MUN-D where PBNA using transform approach and time-invariant LECs is infeasible for all $n'$ while PBNA using time-varying LECs is feasible for some positive integer tuple $(n_1,n_2,n_3,n)$. Example \ref{eg_TVL_F} in Section \ref{subsec4} is an instance of such a network.

We shall observe in this sub-section that, unlike in the case of time-invariant LECs, the network cannot be decomposed into instantaneous networks using the transform method. Throughout the sub-section we shall assume that the LECs and the other variables that we shall encounter belong to the algebraic closure of the field $\mathbb{F}_p$ which is denoted by $\overline{\mathbb{F}_{p}}$. Clearly, once an assignment to the LECs and variables are made, they 
belong to a finite extension of $\mathbb{F}_p$.

Consider a transmission scheme, where we take $n$ ($>> d_{max}$) generations of input symbols at each source and first transmit last $d_{max}$ generations (i.e., the cyclic prefix) followed by the $n$ generations of input symbols. Let $X_i^{n}$ be the input symbol that needs to be transmitted by $S_i$ where,
%%%%%%%%%%%%%
\begin{align}
\nonumber
X_i^{n} ~= ~[X_i^{(n-1)} ~X_i^{(n-2)} ~\cdots ~X_i^{(0)}]^T.
\end{align}
%%%%%%%%%%%%%
Let { $X_1^{n}=V_1{X_1^\prime}^{n_1}$}, {$X_2^{n}=V_2{X_2^\prime}^{n_2}$}, and {$X_3^{n}=V_3{X_3^\prime}^{n_3}$} where, $V_1$ is a $n\times n_1$ matrix,  $V_2$ is a $n\times n_2$ matrix,  $V_3$ is a $n\times n_3$ matrix, and
%%%%%%%%%%%%%
\begin{align}
\nonumber
{X_1^\prime}^{n_1}&=[{X_1^\prime}^{(0)} ~{X_1^\prime}^{(1)} ~\cdots
~{X_1^\prime}^{(n_1-1)}]^T,\\
\nonumber
{X_2^\prime}^{n_2}&=[{X_2^\prime}^{(0)} ~{X_2^\prime}^{(1)} ~\cdots
~{X_2^\prime}^{(n_2-1)}]^T,\\
\nonumber
{X_3^\prime}^{n_3}&=[{X_3^\prime}^{(0)} ~{X_3^\prime}^{(1)} ~\cdots
~{X_3^\prime}^{(n_3-1)}]^T.
\end{align}
%%%%%%55
The quantities ${X_1^\prime}^{n_1}$, ${X_2^\prime}^{n_2}$, and ${X_3^\prime}^{n_3}$ denote the $n_1$, $n_2$, and $n_3$ independent input symbols generated by $S_1$, $S_2$, and $S_3$ respectively. Thus, the independent input symbols are coded over $n$ time slots by the matrices $V_1$, $V_2$, and $V_3$ before they are transmitted over the network after the addition of cyclic prefix. Now, from (\ref{trmx_tvarleks}) and following the same steps as involved in writing (\ref{bigeqn}) and (\ref{bigeqn2}),
for $j =1,2,3$, we get
%%%%%%%%%%%%%
\begin{align}
\nonumber
{Y_j}^{n}=  {M}_{1j} V_1 {X_1^\prime}^{n_1} + {M}_{2j} V_2 {X_2^\prime}^{n_2} +{M}_{3j} V_3 {X_3^\prime}^{n_3}
\end{align}where, ${Y_j}^{n}$ denotes the $n$ output symbols at $T_j$ and ${M}_{ij}$ is as given in  (\ref{bigeqn1_unicast}) (at the top of the next page). The structure of $M_{ij}$ is such that it becomes a circulant matrix when the LECs are time-invariant, i.e., $\underline{\varepsilon}^{(-d_{max})}=\underline{\varepsilon}^{(-d_{max}+1)}
=\ldots= \underline{\varepsilon}^{(n-1)}$.
%%%%%%%%%%%%%
\begin{figure*}
\scriptsize
\begin{align}
\label{bigeqn1_unicast}
&\left[\begin{array}{ccccc}
  M_{ij}^{(0)}({\underline{\varepsilon}^{(n-1,n-1)}}) &
M_{ij}^{(1)}({\underline{\varepsilon}^{(n-2,n-1)}}) & \cdots &
M_{ij}^{(d_{max}-1)}({\underline{\varepsilon}^{(n-d_{max},n-1)}}) &
M_{ij}^{(d_{max})}({\underline{\varepsilon}^{(n-1-d_{max},n-1)}})\\
  0 & M_{ij}^{(0)}({\underline{\varepsilon}^{(n-2,n-2)}}) & \cdots &
M_{ij}^{(d_{max}-2)}({\underline{\varepsilon}^{(n-d_{max},n-2)}}) &
M_{ij}^{(d_{max}-1)}({\underline{\varepsilon}^{(n-1-d_{max},n-2)}})\\
  \vdots & \vdots & \vdots & \vdots & \vdots\\
  M_{ij}^{(1)}({\underline{\varepsilon}^{(-1,0)}}) &
M_{ij}^{(2)}({\underline{\varepsilon}^{(-2,0)}}) & \cdots &
M_{ij}^{(d_{max})}({\underline{\varepsilon}^{(-d_{max},0)}}) &  0\\
\end{array}\right.\\
\nonumber
&~\underbrace{\hspace{10.25cm}\left.\begin{array}{cccc}
	0 & 0  & \cdots & 0 \\
	M_{ij}^{(d_{max})}({\underline{\varepsilon}^{(n-2-d_{max},n-2)}}) & 0
& \cdots& 0 \\
	\vdots & \vdots  & \vdots & \vdots \\
	0 & 0  & \cdots &  M_{ij}^{(0)}({\underline{\varepsilon}^{(0,0)}})\\
      \end{array}\right]}_{M_{ij}}
\end{align}
\hrule
\end{figure*}
%%%%%%%%%%%%%
The objective is to determine if the $n_i$ independent input symbols of $S_i$ can be recovered at $T_i$, from $Y_i^{n}$.

Note that the matrices $M_{ij}$ are not a circulant matrices and therefore, cannot be simultaneously diagonalized in general. Let \mbox{$\underline{\varepsilon}'=\left[\underline{\varepsilon}^{(-d_{max})} ~~\underline{\varepsilon}^{(-d_{max}+1)} ~~\cdots ~~\underline{\varepsilon}^{(n-1)}\right]$}.
\begin{lemma}
 \label{lemma3}
Determinant of the matrix ${M}_{ij}$, for all $(i,j)$, is a non-zero polynomial in $\underline{\varepsilon}'$.
\end{lemma}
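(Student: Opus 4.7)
The plan is to exhibit a specific assignment of the time-varying local encoding coefficients $\underline{\varepsilon}'$ under which $\det(M_{ij})$ is non-zero; since the determinant is a polynomial in the entries of $\underline{\varepsilon}'$, producing even one evaluation that does not vanish suffices to conclude that $\det(M_{ij})$ is a non-zero polynomial. The natural choice is to specialize $\underline{\varepsilon}'$ to the time-invariant regime by setting $\underline{\varepsilon}^{(-d_{max})}=\underline{\varepsilon}^{(-d_{max}+1)}=\cdots=\underline{\varepsilon}^{(n-1)}=\underline{\varepsilon}$ for a single vector $\underline{\varepsilon}$. Under this specialization, every entry $M_{ij}^{(d)}(\underline{\varepsilon}^{(t-d,t)})$ appearing in (\ref{bigeqn1_unicast}) collapses to the time-invariant polynomial $M_{ij}^{(d)}(\underline{\varepsilon})$, and the matrix $M_{ij}$ reduces precisely to the $n\times n$ circulant matrix of (\ref{bigeqn2}).

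Next I would pass to an extension field of $\mathbb{F}_p$ in which $n$ divides $p^m-1$; such a field exists by Lemma \ref{lemma_newly_added}, using the hypothesis $p\nmid n$ that is implicit in the transform technique throughout the paper. Over this field, Theorem \ref{diagthm} yields the block diagonalization $M_{ij}=Q_{\nu_j}\hat{M}_{ij}Q_{\mu_i}^{-1}$. Because $\mu_i=\nu_j=1$ in the min-cut-one setting of Section \ref{sec4}, both $Q_{\mu_i}$ and $Q_{\nu_j}$ coincide with the $n\times n$ DFT matrix $F$ of (\ref{eqn-F-matrix}), which is non-singular since its determinant is a Vandermonde in the distinct powers $1,\alpha,\alpha^2,\ldots,\alpha^{n-1}$, and $\hat{M}_{ij}$ is diagonal with entries $\hat{M}_{ij}^{(t)}(\underline{\varepsilon})$. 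Consequently, after the time-invariant specialization, $\det(M_{ij})=\det(\hat{M}_{ij})=\prod_{t=0}^{n-1}\hat{M}_{ij}^{(t)}(\underline{\varepsilon})$, and Lemma \ref{lemma2} is precisely the assertion that this product is a non-zero polynomial in $\underline{\varepsilon}$. Picking any $\underline{\varepsilon}$ at which $\det(\hat{M}_{ij})\ne 0$ and using it uniformly for every time instant therefore yields an assignment of $\underline{\varepsilon}'$ at which $\det(M_{ij})\ne 0$, which is exactly what is required.

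The main subtlety is the field-theoretic bookkeeping: the diagonalization step uses a primitive $n$th root of unity and so requires $p\nmid n$, meaning the argument has to take place over a suitably chosen extension. This is harmless because non-vanishing of a polynomial in $\underline{\varepsilon}'$ is preserved under field extension, but it should be stated explicitly so that the specialize-and-evaluate step is legitimate. Beyond this, the proof is a clean reduction of the time-varying case to the already handled time-invariant case via Theorem \ref{diagthm} and Lemma \ref{lemma2}.
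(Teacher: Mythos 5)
Your opening move coincides with the paper's: specialize to time-invariant LECs, $\underline{\varepsilon}^{(-d_{max})}=\cdots=\underline{\varepsilon}^{(n-1)}=\underline{\varepsilon}$, so that $M_{ij}$ of (\ref{bigeqn1_unicast}) collapses to the circulant of (\ref{bigeqn2}), and then exhibit one evaluation point where the determinant is non-zero. The gap is in how you close the argument. Your route through Theorem \ref{diagthm} and Lemma \ref{lemma2} needs a primitive $n$-th root of unity in some extension of $\mathbb{F}_p$, i.e.\ $p \nmid n$. But Lemma \ref{lemma3} lives in Section \ref{subsec2} (PBNA with time-varying LECs), where $n$ is an arbitrary positive integer coming from the throughput demands $\frac{n_i}{n}$, with no divisibility constraint; indeed the paper explicitly notes there that the network \emph{cannot} be decomposed into instantaneous networks by the transform method, so no hypothesis of the form $n \mid p^m-1$ is ``implicit'' in this setting. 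When $p \mid n$ no extension of $\mathbb{F}_p$ contains the required $\alpha$, the block diagonalization is unavailable, and your proof does not establish the lemma in that case. (A secondary mismatch: Lemma \ref{lemma2} is stated for size $2n'+1$ as used in Section \ref{subsec1}; for general $n$ you would at least have to note that its proof carries over, again only under $n \mid p^m-1$.)

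The paper avoids all of this with a characteristic-free finish: having specialized to the time-invariant circulant, it invokes Menger's theorem to get a directed path from source-$i$ to sink-$j$, assigns the value $1$ to the LECs along that path and $0$ to all others, so that exactly one of $M_{ij}^{(0)},\ldots,M_{ij}^{(d_{max})}$ equals $1$; the circulant then becomes a permuted identity (cyclic-shift) matrix with determinant $\pm 1 \neq 0$, for every $n$ and every characteristic. Since Lemma \ref{lemma2} is itself proved by this same path assignment, your detour through the DFT adds no content but does add the unnecessary restriction $p\nmid n$; replacing the diagonalization step by the direct permuted-identity observation repairs the proof and recovers the full statement.
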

\begin{proof}
Proof is given in Appendix \ref{appen_lemma3}.
\end{proof}

As a direct consequence of the above lemma, the inverses of $M_{ij}$s exist. Now, let the elements of $V_1$ be given by
%%%%%%%%%%%%%
\begin{align}
\label{na2_V1}
[V_1]_{ij}=\theta_{ij}; ~i =1,2,\cdots,n, ~j =1,2,\cdots,n_1
\end{align}
%%%%%%%%%%%%%
where $\theta_{ij}$ is a variable that takes values from $\overline{{\mathbb F}_p}$. Let
%%%%%%%%%%%%%%%
\begin{align}
\label{na2_V2_and_V3}
 V_2 = M_{23}^{-1}M_{13}V_1A \text{ and }V_3 =M_{32}^{-1}M_{12}V_1B
\end{align}
%%%%%%%%%%%%%%%
\noindent where, the elements of the matrices $A$ and $B$, of sizes $n_1 \times n_2$ and $n_1 \times n_3$, are given by $[A]_{ij}=a_{ij}$ and $[B]_{ij}=b_{ij}$ respectively ($a_{ij}$ and $b_{ij}$ are variables that take values from $\overline{{\mathbb F}_p}$). Let
%%%%%%%%%%%%%%%
\begin{align} \label{eqn-na2_matU}
U=M_{12}^{-1}M_{32}M_{31}^{-1}M_{21}M_{23}^{-1}M_{13}.
\end{align} 
% Also, let $\underline{\theta}$ $=$ $\{\theta_{ij} ~\forall ~(i,j)\}$, $\underline{a}$ $=$ $\{a_{ij} ~\forall ~(i,j)\}$ and $\underline{b}$ $=$ $\{b_{ij} ~\forall ~(i,j)\}$. 
Let $f^{(k)}_1(V_1,\underline{\varepsilon}',A)$ and $f^{(k)}_2(V_1,\underline{\varepsilon}',A)$  denote the determinants of {\small $(n_1+n_2) \times (n_1+n_2)$} submatrices of {\small $[V_1 ~~{M}_{11}^{-1}{M}_{21}V_2]$} and {\small $[{M}_{12}^{
-1}{M}_{22}V_2 ~~V_1]$} respectively, for {\small $k = 1,2,\ldots,{n \choose n_1+n_2}$}. Similarly, let $f^{(k)}_3(V_1,\underline{\varepsilon}',B)$ denote the determinants of \mbox{\small $(n_1+n_3) \times (n_1+n_3)$} submatrices of {\small$[{M}_{13}^{-1}{M}_{33}V_3 ~~V_1]$}, for {\small $k =1,2,\ldots,{n \choose n_1+n_3}$}. Now, define

{\small \vspace{-0.3cm}
%%%%%%%%%%%%%%
\begin{align}
\nonumber
f_1(V_1,\underline{\varepsilon}',A)&=1-\prod_{k=1}^{{n \choose n_1+n_2}}\left(1-\delta_1^{(k)}f^{(k)}_1(V_1,\underline{\varepsilon}',A)\right)\\
\nonumber
f_2(V_1,\underline{\varepsilon}',A)&=1-\prod_{k=1}^{{n \choose n_1+n_2}}\left(1-\delta_2^{(k)}f^{(k)}_2(V_1,\underline{\varepsilon}',A)\right)\\
\label{newqn1}
f_3(V_1,\underline{\varepsilon}',B)&=1-\prod_{k=1}^{{n \choose n_1+n_3}}\left(1-\delta_3^{(k)}f^{(k)}_3(V_1,\underline{\varepsilon}',B)\right)\\
\nonumber
f_4(\underline{\varepsilon}')&=\prod_{(i,j)\in\{1,2,3\}}det(M_{ij})\\
\nonumber
f(V_1,\underline{\varepsilon}',A,B) &=
f_1(V_1,\underline{\varepsilon}',A)f_2(V_1,\underline{\varepsilon}',A)f_3(V_1,\underline{\varepsilon}',B) f_4(\underline{\varepsilon}')
\end{align}}where, $\delta_i^{(k)} \in \overline{\mathbb{F}_{q}}$, for all $(i,k)$.
%%%%%%%%%%%%%%
\noindent Denote the elements of a matrix $C$ of size $n_2 \times n_3$ by $[C]_{ij}=c_{ij}$ where, $c_{ij}$ is a variable that takes values from ${\mathbb{F}_{q}}$, for all $(i,j)$. For $i = 1,2,\cdots,n_2$ and $j = 1,2,\cdots,n_3$, let 
\begin{align} \label{eqn-gij}
&g_{ij}(V_1,\underline{\varepsilon}',A,B,C)=[UV_1AC]_{ij}-[V_1B]_{ij}.
\end{align}
%%%%%%%%%%%%%%%
\noindent Let
$g^{(nr)}_{ij}(V_1,\underline{\varepsilon}',A,B,C)$ and $g^{(dr)}_{ij}(V_1,\underline{\varepsilon}',A,B,C)$ denote the numerator and denominator respectively of the rational-polynomial $g_{ij}(V_1,\underline{\varepsilon}',A,B,C)$. Similarly, let $f^{(nr)}(V_1,\underline{\varepsilon}',A,B)$ and $f^{(dr)}(V_1,\underline{\varepsilon}',A,B)$ denote the numerator and denominator respectively of the rational polynomial $f(V_1,\underline{\varepsilon}',A,B)$. We shall denote  $g_{ij}(V_1,\underline{\varepsilon}',A,B,C)$ and $f(V_1,\underline{\varepsilon}',A,B)$ respectively as $g_{ij}$ and $f$ for short. Similar notation is used for the numerator and denominator of the respective rational polynomials.

\begin{theorem}
\label{thm4}
For an acyclic $3$-S $3$-D MUN-D, the input symbols ${X_1^\prime}^{n_1}$, ${X_2^\prime}^{n_2}$, and ${X_3^\prime}^{n_3}$
can be exactly recovered at $T_1$, $T_2$, and $T_3$ from the output symbols $Y_1^{n}$, $Y_2^{n}$, and $Y_3^{n}$  respectively if the ideal generated by the polynomials  $g^{(nr)}_{ij}$, $i =1,2,..,n$ and $j = 1,2,..,n_3$, and
$\left(1-\delta f^{(nr)}f^{(dr)} \prod_{(i,j)}g^{(dr)}_{ij}\right)$ does not include $1$, where $\delta$ is a variable that can take value from $\overline{{\mathbb F}_{q}}$. The condition is also necessary when $(n_1+n_2)=(n_1+n_3)=n$.
\end{theorem}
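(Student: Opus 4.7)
The plan is to reduce feasibility of PBNA to the simultaneous solvability of a system of polynomial equations (from alignment) together with a non-vanishing condition (from decodability and invertibility), and then to rephrase this via Hilbert's Nullstellensatz using the Rabinowitsch trick.

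First, I would characterize the three alignment conditions explicitly. With the parametric choices $V_2=M_{23}^{-1}M_{13}V_1A$ and $V_3=M_{32}^{-1}M_{12}V_1B$, the interference at $T_2$ from $S_3$, namely $M_{22}^{-1}M_{32}V_3 = M_{22}^{-1}M_{12}V_1B$, automatically lies in the span of the interference from $S_1$, which is $M_{22}^{-1}M_{12}V_1$; an analogous cancellation at $T_3$ shows alignment of the $S_2$-interference with the $S_1$-interference. At $T_1$, however, alignment demands $\mathrm{Span}(M_{31}V_3)\subseteq\mathrm{Span}(M_{21}V_2)$, i.e., $M_{31}V_3 = M_{21}V_2 C$ for some $n_2\times n_3$ matrix $C$. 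Substituting the parametric forms and using the definition of $U$ in (\ref{eqn-na2_matU}), this reduces exactly to $V_1B = UV_1AC$, i.e., $g_{ij}=0$ for all $(i,j)$ as defined in (\ref{eqn-gij}).

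Second, once alignment holds, decoding at each sink requires the desired columns and the aligned-interference columns to span a subspace of rank $n_1+n_2$ at $T_1,T_2$, and $n_1+n_3$ at $T_3$. Each such rank condition is equivalent to at least one $(n_1+n_j)\times(n_1+n_j)$ submatrix having non-zero determinant. The composite polynomials $f_1,f_2,f_3$ built from the $1-\prod(1-\delta_k^{(k')}f_k^{(k')})$ construction in (\ref{newqn1}) are, as polynomials in the auxiliary variables $\delta_k^{(k')}$, non-vanishing iff at least one such submatrix determinant is non-zero. The factor $f_4$ guards the invertibility of all $M_{ij}$'s (required to define $V_2,V_3$ and to form the decoding projections), which holds generically by Lemma \ref{lemma3}. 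Thus the composite $f = f_1 f_2 f_3 f_4$ being a non-zero rational polynomial captures simultaneously the decodability and invertibility requirements.

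Third, I would argue sufficiency via the Nullstellensatz. Feasibility of PBNA amounts to the existence of an assignment in $\overline{\mathbb{F}_p}$ of the entries of $\underline{\varepsilon}'$, $V_1$, $A$, $B$, $C$ (and the auxiliary $\delta$'s) at which every $g_{ij}$ vanishes while $f\neq 0$. Clearing denominators, $g_{ij}=0$ with $g_{ij}^{(dr)}\neq 0$ is equivalent to $g_{ij}^{(nr)}=0$ plus $g_{ij}^{(dr)}\neq 0$, and $f\neq 0$ is equivalent to $f^{(nr)}\neq 0$ with $f^{(dr)}\neq 0$. The Rabinowitsch trick then converts the joint system into: the ideal generated by the $g_{ij}^{(nr)}$ together with $1-\delta\, f^{(nr)}f^{(dr)}\prod_{(i,j)}g_{ij}^{(dr)}$ has a common zero over $\overline{\mathbb{F}_p}$, which by the Nullstellensatz is equivalent to that ideal not containing $1$.

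Fourth, for necessity under $n_1+n_2=n_1+n_3=n$, the key observation is that the dimension bookkeeping is now tight: at $T_2$ the interference is forced to occupy an exactly $n_1$-dimensional subspace (since $n_2=n-n_1$), and likewise at $T_3$. Any feasible precoders must therefore satisfy $\mathrm{Span}(M_{32}V_3)\subseteq\mathrm{Span}(M_{12}V_1)$ and $\mathrm{Span}(M_{23}V_2)\subseteq\mathrm{Span}(M_{13}V_1)$, which forces the parametric forms $V_3=M_{32}^{-1}M_{12}V_1B$ and $V_2=M_{23}^{-1}M_{13}V_1A$ for some $A,B$. The alignment at $T_1$ then reproduces $g_{ij}=0$, and the rank-$n$ constraints at the three sinks reproduce $f\neq 0$, completing the reverse direction. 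The main obstacle is the careful tracking of the rational-function denominators through the Nullstellensatz step, ensuring that the single polynomial $1-\delta\, f^{(nr)}f^{(dr)}\prod g_{ij}^{(dr)}$ correctly enforces both the non-vanishing of $f$ and the admissibility of each $g_{ij}$; a secondary subtlety is justifying that the parametric reduction is truly without loss of generality for necessity, which requires the boundary hypothesis $n_1+n_2=n_1+n_3=n$ and does not extend to general rate tuples.
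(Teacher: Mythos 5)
Your proposal follows essentially the same route as the paper's proof: the same parametric choices of $V_2,V_3$ that automatically satisfy the span conditions at $T_2,T_3$, the same reduction of the $T_1$ alignment condition to $UV_1AC=V_1B$ (i.e., $g_{ij}=0$), the same encoding of the rank/invertibility requirements in the composite polynomial $f=f_1f_2f_3f_4$, and the same Weak Nullstellensatz (Rabinowitsch) argument, with the same tight-dimension reasoning for necessity when $n_1+n_2=n_1+n_3=n$. No gaps; it matches the paper's argument, and is even slightly more explicit than the paper on why the span conditions become necessary in the boundary case.
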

\begin{proof}
Proof is given in Appendix \ref{appen_thm4}.
\end{proof}
When the conditions of the above Theorem are satisfied, we say that PBNA using time-varying LECs is feasible. When PBNA using time-varying LECs is feasible, as $n >> d_{max}$, throughputs close to $\frac{n_1}{n}$, $\frac{n_2}{n}$, and $\frac{n_3}{n}$ are achieved for the source-destination pairs $S_1-T_1$, $S_2-T_2$, and $S_3-T_3$ respectively. As seen from the proof of the above theorem, if the throughput demands are such that $n_1+n_3 > n$ or $n_1+n_2 > n$ then, PBNA using time-varying LECs is infeasible.

\begin{remark}
\label{remark2}
Theorem \ref{thm4} implies that network alignment is feasible if there exists an assignment to the LECs and the other variables such that $f^{(nr)}f^{(dr)}\prod_{(i,j)}g^{(dr)}_{ij}$ takes a non-zero value and $g^{(nr)}_{ij}$ take values of zero for network alignment to be feasible. Firstly, $f^{(nr)}$ has to be a non-zero polynomial which requires that $V_1$ be a full-rank matrix. This is true from the choice of $V_1$. Also, ${M}_{11}^{-1}{M}_{21}V_2$, ${M}_{12}^{-1}{M}_{22}V_2$ and ${M}_{13}^{-1}{M}_{33}V_3$ should also be full-rank. Since the matrices ${M}_{ij}$ are invertible, it is equivalent to checking if $V_2$ and $V_3$ are full-rank. This is also true because $V_1$ is a full-rank matrix, and by choosing $A$ and $B$ as matrices that select the first $n_2$ columns of $V_1$ and the first $n_3$ columns of $V_1$ respectively, $V_2$ and $V_3$ become full-rank. Hence, the determinants of all the $n_2 \times n_2$ and $n_3 \times n_3$ sub-matrices of $V_2$ and $V_3$ respectively are non-zero polynomials. So, 
we have at least ensured that, by proper choice, ${M}_{11}^{-1}{M}_{21}V_2$, ${M}_{12}^{-1}{M}_{22}V_2$ and ${M}_{13}^{-1}{M}_{33}V_3$ are full-rank matrices.
\end{remark}
%%%%%%
\begin{remark}
The network alignment matrices in Section \ref{subsec1} can be derived as a special case of the network alignment matrices in Section \ref{subsec2} and Theorem \ref{thm3} can be derived as a special case of Theorem \ref{thm4} as explained below. Choose
$\underline{\varepsilon}^{(-d_{max})}=\underline{\varepsilon}^{(-d_{max}+1)}
=...= \underline{\varepsilon}^{(2n')}=\underline{\varepsilon}$ and $n=2n'+1$, $n_1=n'+1$, $n_2=n'$, and $n_3=n'$. Also, choose the
variables $\theta_{ij}$ such that $V_1$ in (\ref{na2_V1}) takes the form of $V_1$ in (\ref{thm2_pf7}). Choose $A$ and $B$, respectively, to be the selection matrices which select the first $n'$ columns and the last $n'$ columns of the matrices pre-multiplying them. Let $C=I_{n'}$. Since the input symbols at the sources were precoded by $Q_1$ and the output symbols at the destinations were pre-multiplied by $Q_1^{-1}$, the effective transfer matrix between $S_i$ and $T_j$ is given by $\hat{M}_{ij}$. Hence, $(UV_1AC-V_1B)$ becomes equal to the zero matrix. It can also be easily seen that the full-rank conditions in Theorem \ref{thm3} are the same as stating that the ideal generated by $\left(1-\delta f^{(nr)}f^{(dr)}\prod_{(i,j)}g^{(dr)}_{ij}\right)$ should not include $1$.
\end{remark}

A systematic method of verifying the condition in Theorem \ref{thm4} is by computing the reduced Groebner basis for the given ideal with a chosen monomial ordering. The condition is satisfied iff $1$ is an element of the reduced Groebner basis \cite{CLO-book}. However, in general, Groebner basis algorithms are known to have large exponential complexity in the number of variables and solving multivariate polynomial equations is known to be NP-hard over any field \cite{CLO-book} \cite{CKPS}. Hence, the conditions of Theorem \ref{thm3} are easier to check than the condition of Theorem \ref{thm4}.

\textbf{Case 2:} The min-cut between source-$i$ and sink-$j$ is equal to $0$,
for some $i \neq j$. 

This means that at least one of the matrices $M_{ij}$, for $i \neq j$, is a zero-matrix. The choices of $V_1$ , $V_2$ and $V_3$ and the conditions for network alignment for this case are similar to the ones presented in Section \ref{subsec2}. The only major difference will be the absence of conditions on the lines that there must exist an assignment to the LECs and the other variables such that the rational polynomials $g_{ij}$ take values of zero for network alignment to be feasible. There are various possibilities in this case. We present feasibility conditions for one possibility (i.e. min-cut between $S_2$-$T_1$ is equal to $0$) and the rest are fairly straight-forward to derive. We assume the same set-up as in Section \ref{subsec2}.

{\em Min-cut between $S_2$-$T_1$ is equal to $0$}: This implies that $M_{21}=0$. Let the elements of $V_1$ be given by

%%%%%%%%%%%%%
{
% \small
\vspace{-0.3cm}
\begin{align}
\label{mincut0_poss1_V1}
[V_1]_{ij}=\theta_{ij}, ~ i = 1,2,\cdots,n, ~j =1,2,\cdots,n_1
\end{align}}where, $\theta_{ij}$ is a variable that takes values from $\overline{{\mathbb F}_{q}}$. Let

%%%%%%%%%%%%%
{
% \small
\vspace{-0.3cm}
\begin{align}
\label{mincut0_poss1_V2_and_V3}
 V_2 = M_{23}^{-1}M_{13}V_1A \text{ and} ~~V_3 =M_{32}^{-1}M_{12}V_1B
\end{align}}where, the elements of the matrices $A$ and $B$, of sizes $n_1 \times n_2$ and $n_1 \times n_3$, are given by $[A]_{ij}=a_{ij}$ and $[B]_{ij}=b_{ij}$ respectively ($a_{ij}$ and $b_{ij}$ are variables that take values from $\overline{{\mathbb F}_{q}}$). The following theorem provides the conditions under which network alignment is feasible.

\begin{theorem}
\label{thm5}
For an acyclic $3$-S $3$-D MUN-D, when the min-cut between $S_2$-$T_1$ is equal to $0$ and the min-cut between the other sources and destinations are not zero, the input symbols ${X_1^\prime}^{n_1}$, ${X_2^\prime}^{n_2}$, and ${X_3^\prime}^{n_3}$ can be exactly recovered at $T_1$, $T_2$, and $T_3$ from the output symbols $Y_1^{n}$, $Y_2^{n}$, and $Y_3^{n}$ respectively, if

%%%%%%%%%%%%%%
{
% \small
\vspace{-0.4cm}
\begin{align*}
&\mbox{Rank}[V_1 ~~{M}_{11}^{-1}{M}_{31}V_3]=n_1+n_2,\\
&\mbox{Rank}[{M}_{12}^{-1}{M}_{22}V_2 ~~V_1]=n_1+n_2,\\
&\mbox{Rank}[{M}_{13}^{-1}{M}_{33}V_3 ~~V_1]=n_1+n_3.
\end{align*}}The above conditions are also necessary when $(n_1+n_2)=(n_1+n_3)=n$.
%%%%%%%%%%%%%%
\end{theorem}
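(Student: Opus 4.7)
The plan is to mirror the alignment-and-decodability analysis used for Theorem~\ref{thm4}, with the simplification afforded by $M_{21}=0$. First I would substitute the chosen precoders in (\ref{mincut0_poss1_V1})--(\ref{mincut0_poss1_V2_and_V3}) into the sink equations. At $T_3$, using $M_{23}V_2 = M_{13}V_1 A$, the received vector reduces to
\[
Y_3^n = M_{13}V_1\bigl(X_1^{\prime n_1} + A\, X_2^{\prime n_2}\bigr) + M_{33}V_3\, X_3^{\prime n_3},
\]
so the contributions of $S_1$ and $S_2$ align in the column span of $V_1$. An analogous calculation at $T_2$, using $M_{32}V_3 = M_{12}V_1 B$, gives
\[
Y_2^n = M_{12}V_1\bigl(X_1^{\prime n_1} + B\, X_3^{\prime n_3}\bigr) + M_{22}V_2\, X_2^{\prime n_2}.
\]
At $T_1$, the hypothesis $M_{21}=0$ already eliminates the $S_2$-contribution, yielding
\[
Y_1^n = M_{11}V_1\, X_1^{\prime n_1} + M_{31}V_3\, X_3^{\prime n_3},
\]
so no explicit alignment needs to be engineered at $T_1$.

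Next I would establish sufficiency: since each $M_{ij}$ appearing in front of the stacked precoders is invertible by Lemma~\ref{lemma3}, left-multiplying $Y_1^n$ by $M_{11}^{-1}$, $Y_2^n$ by $M_{12}^{-1}$, and $Y_3^n$ by $M_{13}^{-1}$ converts each recovery problem into a linear system whose coefficient matrix is precisely the one appearing in the corresponding rank condition. Full column rank of these three matrices therefore makes all three systems uniquely solvable for the demanded symbols, which proves the sufficiency part.

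For necessity under $n_1+n_2 = n_1+n_3 = n$, each stacked matrix is $n \times n$, so full column rank coincides with non-singularity. If any of the three rank conditions fails, the associated sink's effective observation matrix is singular; then two distinct tuples of demanded symbols produce identical outputs and unique decoding is impossible. The main obstacle I expect is establishing that the alignment template (\ref{mincut0_poss1_V1})--(\ref{mincut0_poss1_V2_and_V3}) entails no loss of generality, i.e., that any PBNA scheme achieving these rates can be reduced to this form by a change of basis at the sources. This is exactly the argument from the necessity half of the proof of Theorem~\ref{thm4}, specialised to the present simpler setting where the alignment requirement at $T_1$ is vacuous because $S_2$ does not reach $T_1$ at all; I would invoke it here with only the minor modifications needed to accommodate that vacuity.
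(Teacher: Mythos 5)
Your overall route matches the paper's: the choice of $V_2$ and $V_3$ in (\ref{mincut0_poss1_V2_and_V3}) enforces the two span (alignment) conditions at $T_2$ and $T_3$, the alignment requirement at $T_1$ is vacuous because $M_{21}=0$, and the three rank conditions then give decodability, with necessity argued by a dimension count when $n_1+n_2=n_1+n_3=n$. However, your sufficiency argument stops one step short. The matrices in the rank conditions have entries that are rational functions of the time-varying LECs $\underline{\varepsilon}'$ and of the free variables $\theta_{ij},a_{ij},b_{ij}$, so ``full column rank'' is a statement over the rational function field; to conclude that the symbols \emph{can be exactly recovered} you must still produce an assignment over some finite field for which, simultaneously, (i) every $\det(M_{ij})$ used in forming $V_2,V_3$ and in the decoding is non-zero (Lemma~\ref{lemma3} only says these are non-zero polynomials), and (ii) a maximal minor of each of the three stacked matrices is non-zero. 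The paper closes exactly this gap: it takes $f$ to be the product of the numerators and denominators of the non-zero rational minors and invokes Lemma 1 of \cite{KoM} to guarantee such an assignment for a sufficiently large field. Your proposal never makes this descent from generic (symbolic) rank to an actual network code, and without it no concrete decodable scheme has been exhibited.

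On necessity, your one-line argument (``singular $\Rightarrow$ two distinct tuples of demanded symbols produce identical outputs'') is too quick: each sink demands only a sub-block of the stacked unknowns, and a kernel vector of the $n\times n$ matrix could in principle be supported entirely on the interference block, in which case singularity does not by itself confuse the demanded symbols; an extra argument is needed to rule this out or to trace it back to a failure at another sink. Also, the ``no-loss-of-generality of the alignment template'' step you plan to import from the necessity half of Theorem~\ref{thm4} is not actually spelled out there --- the paper's proofs of Theorem~\ref{thm4} and of this theorem merely assert that the span conditions become necessary when $n_1+n_2=n_1+n_3=n$ --- so you would have to supply that reduction yourself rather than cite it.
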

\begin{proof}
 Proof is given in Appendix \ref{appen_thm5}.
\end{proof}

When the conditions of the above Theorem are satisfied, throughputs close to $\frac{n_1}{n}$, $\frac{n_2}{n}$, and $\frac{n_3}{n}$ are achieved for the source-destination pairs $S_1-T_1$, $S_2-T_2$, and $S_3-T_3$ respectively.

In the next section, we introduce PBNA using transform approach and block time-varying LECs where we show that the reduced feasibility conditions of Meng et al. \cite{MRMJ} for feasibility of PBNA in 3-S 3-D I-MUN are also necessary and sufficient for feasibility of PBNA using transform approach and block time-varying LECs in $3$-S $3$-D MUN-D.

\section{PBNA using transform approach and block time-varying LECs}
\label{sec5}
In this section, we propose a PBNA scheme for $3$-S $3$-D MUN-D, different from those given in Section \ref{subsec1} and \ref{subsec2}. The min-cut between $S_i-T_i$, for all $i$, is assumed to be equal to $1$. We restrict ourselves to the field $\mathbb F_{2^m}$ in this section and also the following section. For the PBNA scheme proposed in this section, we shall show that the feasibility condition is the same as that proposed for instantaneous networks in \cite{MRMJ}. In addition, the feasibility condition is independent of the number of symbol extensions over which the independent input symbols are precoded unlike in the case of the other two proposed PBNA schemes.

Consider the following transmission where, every source $S_i$ is required to transmit a $k(2n'+1)$-length block of symbols $(k >>d_{max})$ given by 
{\small $
 [X^{(0)}_i ~X^{(1)}_i \cdots ~X^{(k(2n'+1)-1)}_i]^T
$} for some positive integer $n'>0$. Partition the block of symbols into $(2n'+1)$ blocks, each of length $k$ symbols. For each block of $k$ symbols, we add a cyclic prefix of length $d_{max}$. The partitioning of the input symbols and the addition of cyclic prefix (CP) are shown in Fig. \ref{fig:partition-CP-zeros}.

%%%%%%%%%
\begin{figure}[htbp]
\centering
\includegraphics[height=4.0in,width=5.5in]{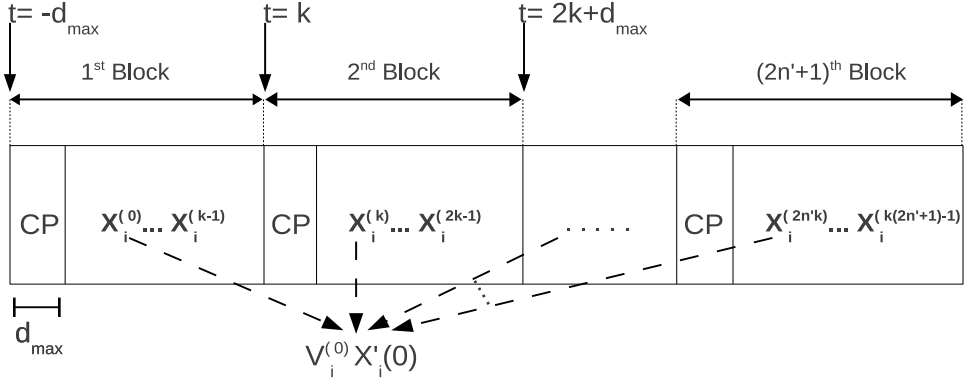}
\vspace{-2.2cm}
\caption{The figure demonstrates the transmission of $(2n'+1)$ blocks of symbols, involving addition of CP for every block at $S_i$. The pre-multiplication of each block of symbols by $F$ (not explicitly shown in the figure) is done after the precoding step and before the addition of CP.}
\label{fig:partition-CP-zeros}
\vspace{0.5cm}
\hrule
% \vspace{-0.5cm}
\end{figure}
%%%%%%%%%%
The LECs of the network are varied with every $(k+d_{max})$ time instants starting from the time instant $t$$=$$-d_{max}$. Therefore, when $S_i$ transmits its first block of data as shown in Fig. \ref{fig:partition-CP-zeros}, the LECs remain constant and when it starts the transmission of the second block of data, the LECs encountered in the network are different.

At each destination $T_i$, the first $d_{max}$ outputs in each received block of length $(k+d_{max})$ symbols, starting from time instant $t=-d_{max}$, is discarded. Denote the LECs during $l^{\text{th}}$-block transmission by $\underline \varepsilon_l$, for $1 \leq l \leq (2n'+1)$. Now, consider the second block of output symbols (i.e., $l=2$) at $T_j$ after discarding the cyclic prefix. Since the LECs remain constant during one block of transmission, from (\ref{trmx_tvarleks}) and (\ref{bigeqn}), we get (\ref{bigeqn3}). As in (\ref{bigeqn2}), (\ref{bigeqn3}) is re-written as (\ref{bigeqn4}).
%%%%%%%%%%%%%%%%
\begin{figure*}
\scriptsize
\begin{align}\label{bigeqn3}
\begin{bmatrix}
   {Y_j}^{(2k+d_{max}-1)} \\
   {Y_j}^{(2k+d_{max}-2)} \\
  \vdots \\
   {Y_j}^{(k+d_{max})} \\
  \end{bmatrix}
&=\sum_{i=1}^{s}\begin{bmatrix}
  M_{ij}^{(0)}(\underline \varepsilon_2) & M_{ij}^{(1)}(\underline \varepsilon_2) & \cdots  & M_{ij}^{(d_{max})}(\underline \varepsilon_2) & 0 & 0 & \cdots & 0 & 0  \\
  0 & M_{ij}^{(0)}(\underline \varepsilon_2) & \cdots & M_{ij}^{(d_{max}-1)}(\underline \varepsilon_2) & M_{ij}^{(d_{max})}(\underline \varepsilon_2) & 0 & \cdots & 0& 0 \\
  \vdots & \vdots & \vdots & \vdots & \ddots & \ddots & \ddots & \vdots & \vdots \\
  0 & 0 & \cdots & 0 &  M_{ij}^{(0)}(\underline \varepsilon_2) & M_{ij}^{(1)}(\underline \varepsilon_2) & \cdots & M_{ij}^{(d_{max}-1)}(\underline \varepsilon_2) & M_{ij}^{(d_{max})}(\underline \varepsilon_2)\\
% 0 & 0 & \cdots & 0 & 0 &  M_{ij}^{(0)} & \cdots &  M_{ij}^{(d_{max}-2)} & M_{ij}^{(d_{max-1})}\\
% \vdots & \vdots & \vdots & \vdots & \vdots & \vdots & \vdots & \vdots & \vdots \\
% 0 & 0 & \cdots & 0 & 0 & 0 & 0 & 0  & M_{ij}^{(0)}\\
\end{bmatrix}\\
\nonumber
&~~~~~~~\times \begin{bmatrix}
                  {X_i}^{(2k-1)}
                  ~~{X_i}^{(2k-2)} 
                 ~~\cdots 
                  ~~{X_i}^{(k)} 
                  ~~{X_i}^{(2k-1)} 
                 ~~\cdots
                  ~~{X_i}^{(2k-d_{max})}
               \end{bmatrix}^T
\end{align}
\vspace{-0.2cm}
\hrule
\vspace{-0.6cm}
\end{figure*}
%%%%%%%%%%%%%%%%
\begin{figure*}
\scriptsize
\begin{align}\label{bigeqn4}
\begin{bmatrix}
    {Y_j}^{(2k+d_{max}-1)} \\
   {Y_j}^{(2k+d_{max}-2)} \\
  \vdots \\
   {Y_j}^{(k+d_{max})} \\
\end{bmatrix}\hspace{-0.1cm}
=\hspace{-0.1cm}\sum_{i=1}^{s}\hspace{-0.1cm}\underbrace{\begin{bmatrix}
  M_{ij}^{(0)}(\underline \varepsilon_2) & M_{ij}^{(1)}(\underline \varepsilon_2) & \cdots & M_{ij}^{(d_{max}-1)}(\underline \varepsilon_2) & M_{ij}^{(d_{max})}(\underline \varepsilon_2) & 0 & \cdots & 0 & 0 & 0  \\
  0 & M_{ij}^{(0)}(\underline \varepsilon_2) & \cdots & M_{ij}^{(d_{max}-2)}(\underline \varepsilon_2) & M_{ij}^{(d_{max}-1)}(\underline \varepsilon_2) & M_{ij}^{(d_{max})}(\underline \varepsilon_2) & \cdots & 0 & 0& 0 \\
  \vdots & \vdots & \vdots & \vdots & \vdots & \vdots & \vdots & \vdots & \vdots & \vdots \\
  M_{ij}^{(1)}(\underline \varepsilon_2) & M_{ij}^{(2)}(\underline \varepsilon_2) & \cdots & M_{ij}^{(d_{max})}(\underline \varepsilon_2)& 0 & 0 & \cdots & 0 & 0 &  M_{ij}^{(0)}(\underline \varepsilon_2) \\
\end{bmatrix}}_{M_{ij}(\underline \varepsilon_2)}
\hspace{-0.2cm}\begin{bmatrix}
                  {X_i}^{(2k-1)} \\
                  {X_i}^{(2k-2)} \\
                 \vdots \\
                  {X_i}^{(k)} \\
         \end{bmatrix}
\end{align}
\vspace{-0.25cm}
\hrule
\vspace{-0.55cm}
\end{figure*}
%%%%%%%%%%%%%%%% 
Using Theorem \ref{diagthm}, ${M_{ij}(\underline \varepsilon_2)}$ can be diagonalized to $\hat{M}_{ij}(\underline \varepsilon_2)$, where $k$ is chosen so that $k|2^m-1$.  Similarly, the $l^{\text{th}}$-block of output symbols, after discarding the cyclic prefix, can be written in terms of the matrix $\hat{M}_{ij}(\underline \varepsilon_l)$, for $1 \leq l \leq (2n'+1)$. We note that 
%%%%%%%%%%%%
{
% \vspace{-0.3cm}
% \footnotesize 
\begin{align}
 \label{eqn-struc-trans-mat}
\hat{M}_{ij}(\underline \varepsilon_l) = &diag\left( M_{ij}(\underline \varepsilon_l,1), ~M_{ij}(\underline \varepsilon_l,\alpha),\cdots, ~M_{ij}(\underline \varepsilon_l,\alpha^{k-1}) \right).
\end{align}}where, $M_{ij}(\underline \varepsilon_l,\alpha^q)$ denotes the transfer function $M_{ij}(D)$ evaluated at $D=\alpha^q$ and $\varepsilon=\varepsilon_l$, for $q=0,1,\cdots,(k-1)$ . Let $X'^{(n'+1)k}_1$, $X'^{n'k}_2$, and $X'^{n'k}_3$  denote the $(n'+1)k$-length, $n'k$-length, and $n'k$-length independent symbols generated by $S_1$, $S_2$, and $S_3$ respectively. Partition each of the independent input symbols into $k$ blocks of equal length. Denote the $q^{\text{th}}$-block of independent input symbols of $S_i$ by $X'_i{(q)}$, for $0\leq q \leq k-1$, which is a column vector of lengths $(n'+1)$ for $S_1$,  $n'$ for $S_2$, and  $n'$ for $S_3$. The symbols $X'_i{(q)}$ are precoded onto $X^{k(2n'+1)}_i$ as follows. Define {\small $X^{(q \oplus k)}_i = \left[ X^{(q)}_i ~X^{(q+k)}_i ~X^{(q+2k)}_i ~\cdots ~X^{(q+2nk)}_i\right]^T$}, for $0 \leq q \leq k-1$. Let $V^{(q)}_i$ denote the precoding matrices at $S_i$, for $0 \leq q \leq k-1$. These matrices, for all $q$, are of size $(2n'+1)\times(n'+1)$, $(2n'
+1)\times n'$ and $(2n'+1)\times n'$ for $i=1,2,$ and $3$ respectively. Now, the symbols to be transmitted by $S_i$, before the pre-multiplication of each block by $F$ (where, the matrix $F$ is 
the DFT matrix defined in (\ref{eqn-F-matrix})) and the addition of CP to every block, are given by $X^{(q \oplus k)}_i=V^{(q)}_iX'_i{(q)}$. In brief, the $q^{\text{th}}$ element of every block to be transmitted by $S_i$, before the pre-multiplication of each block by $F$ and the addition of CP to every block, are obtained by precoding the $q^{\text{th}}$ block of independent symbols $X'_i{(q)}$. The instance of $q=0$ is shown in Fig. \ref{fig:partition-CP-zeros}.

After discarding the CP and pre-multiplying by $F^{-1}$ at $T_j$, we obtain $(2n'+1)k$-output symbols. These are partitioned into $k$-blocks, each of length $(2n'+1)$-symbols. Each block is given by {\small$Y^{(q \oplus k)}_i = \left[ Y^{(q)}_i ~Y^{(q+k)}_i ~Y^{(q+2k)}_i ~\cdots ~Y^{(q+2n'k)}_i\right]^T$}, for  $0 \leq q \leq k-1$. The input-output relation is now given by

%%%%%%%%%%%%
{\small \vspace{-0.3cm}
\begin{align}
\label{ip-op-main}
Y^{(q \oplus k)}_i& = \sum_{i=1}^{3} diag\left( M_{ij}(\underline \varepsilon_1,\alpha^q), ~M_{ij}(\underline \varepsilon_2,\alpha^q), \cdots, ~M_{ij}(\underline \varepsilon_{2n'},\alpha^q), ~M_{ij}(\underline \varepsilon_{(2n'+1)},\alpha^q) \right)V^{(q)}_iX'_i{(q)}.
\end{align}}For  $0 \leq q \leq k-1$, define the matrix
{
% \small \vspace{-0.3cm}
\begin{align*}
 M_{{ij}}^{q}=diag\left( M_{ij}(\underline \varepsilon_1,\alpha^q), ~\cdots, ~M_{ij}(\underline \varepsilon_{(2n'+1)},\alpha^q) \right).
\end{align*}}

\subsection{Feasibility of PBNA using Transform Approach and Block Time Varying LECs }
We assume that the min-cut between $S_i-T_j$ is not zero for all $i \neq j$. The proof technique for feasibility of PBNA in the case of min-cut between $S_i-T_j$ being zero for some $i \neq j$ will be similar to that used for non-zero min-cut.  

PBNA using transform approach and block time-varying LECs requires that the following conditions be satisfied for $0 \leq q \leq k-1$.
%%%%%%%%%%%%%
{
% \footnotesize \vspace{-0.3cm}
\begin{align}
\nonumber
&\text{Span}({M}_{31}^{q}V^{(q)}_3) \subset \text{Span}({M}_{21}^{q}V^{(q)}_2),\text{Span}({M}_{32}^{q}V^{(q)}_3) \subset \text{Span}({M}_{12}^{q}V^{(q)}_1),\\
\nonumber
&\text{Span}({M}_{23}^{q}V^{(q)}_2) \subset \text{Span}({M}_{13}^{q}V^{(q)}_1),\\
\label{eqn-span-cond}
&\mbox{Rank}[{M}_{11}^{q}V^{(q)}_1 ~~{M}_{21}^{q}V^{(q)}_2]=\mbox{Rank}[V^{(q)}_1 ~~{{M}_{11}^{q}}^{-1}{M^{q}}_{21}V^{(q)}_2]=2n'+1 \\
\nonumber
&\mbox{Rank}[{M}_{22}^{q}V^{(q)}_2 ~~{M}_{12}^{q}V^{(q)}_1]=\mbox{Rank}[{{M}_{12}^{q}}^{-1}{M}_{22}^{q}V^{(q)}_2 ~~V^{(q)}_1]=2n'+1\\
\nonumber
&\mbox{Rank}[{M}_{33}^{q}V^{(q)}_3 ~~{M}_{13}^{q}V^{(q)}_1]=\mbox{Rank}[{{M}_{13}^{q}}^{-1}{M}_{33}^{q}V^{(q)}_3 ~~V^{(q)}_1]=2n'+1.
\end{align}}From Lemma $1$ in \cite{Yeu-book}, $m$ can always be chosen large enough so that the above rank conditions are satisfied, if the corresponding determinants are non-zero polynomials. 
% Let $m'$ be a choice for $m$ which is large enough. Since $2n'+1\mid 2^m-1$ for some positive integer $m=m''$, $m$ could be chosen to be equal to $\text{LCM}(m',m'')$.

We first note that recovering  $X'_i{(0)}$, for all $i$, represents the feasibility problem of PBNA in the instantaneous version of the original $3$-S $3$-D MUN-D. Suppose that we cannot recover $X'_i{(0)}$, for all $i$. But, if we can recover $X'_i{(q)}$, for all $q \neq 0$ and for all $i$, we can still achieve throughputs of $\frac{(n'+1)(k-1)}{(2n'+1)k}$, $\frac{n'(k-1)}{(2n'+1)k}$, $\frac{n'(k-1)}{(2n'+1)k}$ for  $S_1-T_1$, $S_2-T_2$ and $S_3-T_3$ respectively. This means that as $n$ and $k$ become arbitrarily large, a throughput close to $\frac{1}{2}$ can be achieved for every source-destination pair. However, in this section we show that if $X'_i{(0)}$, for some $i=i_1$, cannot be recovered then, $X'_{i_1}{(q)}$ is not recoverable for any $q$. Conversely, we also show that if $X'_i{(0)}$, for all $i$, can be recovered then $X'_{i}{(q)}$ is recoverable for all $q$ and $i$. 

\begin{definition}
PBNA in $3$-S $3$-D MUN-D using Transform Approach and Block Time Varying LECs is said to be feasible if $X'_i{(q)}$ can be recovered from $Y^{(q \oplus k)}_i$, for $i=1,2,3$, {$q = 1,2,\cdots, k-1$}, and  for every $n'>1$.
\end{definition} 

Henceforth in this section, PBNA in $3$-S $3$-D MUN-D using transform approach and block time-varying LECs shall be simply referred to as PBNA in $3$-S $3$-D MUN-D. We now proceed to prove that the reduced feasibility conditions of Meng et al. for feasibility of PBNA in $3$-S $3$-D I-MUN are also necessary and sufficient for PBNA in $3$-S $3$-D MUN-D.

PBNA in $3$-S $3$-D MUN-D is feasible iff there exists a choice of $(n'+1) \times n'$ matrices $A^{(q)}$ and $B^{(q)}$, $V^{(q)}_1$,  and a $n' \times n'$ matrix $C^{(q)}$, for $0 \leq q \leq k-1$, all with entries from $\mathbb F_{2^m}$, such that
% 
%%%%%%%%%%%%%%
{
% \footnotesize
% \vspace{-0.3cm}
\begin{align}
\nonumber
&\mbox{det}[V^{(q)}_1 ~~~~{M^q_{11}}^{-1}{M^q_{21}}{M^q_{23}}^{-1}M^q_{13}V^{(q)}_1A^{(q)}] \neq 0,\\
\nonumber
&\mbox{det}[{M^q_{12}}^{-1}M_{22}{M^q_{23}}^{-1}M^q_{13}V^{(q)}_1A^{(q)} ~~~~V^{(q)}_1] \neq 0,\\
\label{eqn-NA-nec-suff}
&\mbox{det}[{M^q_{13}}^{-1}M^q_{33}{M^q_{32}}^{-1}M^q_{12}V^{(q)}_1B^{(q)} ~~~~V^{(q)}_1] \neq 0,\\
\nonumber
&U^{(q)}V^{(q)}_1AC=V^{(q)}_1B
\end{align}}where, {\small $U^{(q)}={M^q_{12}}^{-1}M^q_{32}{M^q_{31}}^{-1}M^q_{21}{M^q_{23}}^{-1}M^q_{13}$}. The above conditions are obtained from the network alignment conditions in (\ref{eqn-span-cond}) and following the same steps as in the proof of \mbox{Theorem \ref{thm4}}.
For $0 \leq q \leq k-1$ , define 
% 
%%%%%%%%%%%%%%
{
% \footnotesize
% \vspace{-0.3cm}
\begin{align}
\nonumber
&\eta(q)=\frac{M_{21}(\underline\varepsilon,\alpha^q)  M_{32}(\underline\varepsilon,\alpha^q)M_{13}(\underline\varepsilon,\alpha^q)}{{M_{31}(\underline\varepsilon,\alpha^q)}{M_{23}(\underline\varepsilon,\alpha^q)}{M_{12}(\underline\varepsilon,\alpha^q)}},\\
\nonumber
&b_1(q)=\frac{M_{21}(\underline\varepsilon,\alpha^q) M_{13}(\underline\varepsilon,\alpha^q)}{M_{11}(\underline\varepsilon,\alpha^q) M_{23}(\underline\varepsilon,\alpha^q)}, \\
\label{eqn_defn_b}
&b_2(q)=\frac{M_{22}(\underline\varepsilon,\alpha^q) M_{13}(\underline\varepsilon,\alpha^q)}{M_{12}(\underline\varepsilon,\alpha^q) M_{23}(\underline\varepsilon,\alpha^q)},\\
\nonumber
&b_3(q)=\frac{M_{33}(\underline\varepsilon,\alpha^q) M_{12}(\underline\varepsilon,\alpha^q)}{M_{13}(\underline\varepsilon,\alpha^q) M_{32}(\underline\varepsilon,\alpha^q)}.
\end{align}}As in \cite{MRMJ}, we shall consider the two cases of $\eta(0)$ not being a constant\footnote{The terminology of $\eta(q)$ or $b_i(q)$ being a constant or not is understood to be with respect to $\underline \varepsilon$ and henceforth, this shall not be explicitly mentioned.} and being a constant, separately.

\textbf{Case 1:} $\eta(0)$ is not a constant.

The precoding matrices which are similar to those in \cite{DVJM} \cite{MRMJ} are given by
% 
%%%%%%%%%%%%%%%
{
% \footnotesize
% \vspace{-0.3cm}
\begin{align}
 \nonumber
&V^{(q)}_1 = [W ~~U^{(q)}W ~~{U^{(q)}}^2W ~~\cdots ~~{U^{(q)}}^{n'}W],\\
\nonumber
&V^{(q)}_2 = [R^{(q)}W ~~R^{(q)}U^{(q)}W ~~R^{(q)}{U^{(q)}}^2W ~~\cdots ~~R^{(q)}{U^{(q)}}^{n'-1}W],\\
\label{eqn-precod-mats}
&V^{(q)}_3 = [S^{(q)}{U^{(q)}}W ~~S^{(q)}{U^{(q)}}^2W ~~\cdots ~~S^{(q)}{U^{(q)}}^{n'}W]
\end{align}}where, {\small $R=M^{q}_{13}{M^{q}_{23}}^{-1}$, $S=M^{q}_{12}{M^{q}_{32}}^{-1}$}, for $0 \leq q \leq k-1$, and {\small $W=[1 ~1 ~\cdots ~1]^T$} (all ones vector of size $(2n'+1) \times 1$). The above choice of precoding matrices satisfy the last condition in (\ref{eqn-NA-nec-suff}) though not necessarily the other conditions in (\ref{eqn-NA-nec-suff}). 

% \begin{lemma} \label{lem1}
% PBNA in $3$-S $3$-D MUN-D is feasible iff, for $1 \leq q \leq k-1$ and for all $i$,
% % 
% %%%%%%%%%%%%%%%
% {
% % \footnotesize
% % \vspace{-0.4cm}
% \begin{align*}
% & b_i(q)  \notin {\cal S}^{(q)}_{n'}  = \left\{\left. \frac{f(\eta(q))}{g(\eta(q))} \right|f(x), g(x) \in \mathbb F_{2^m}[x],f(x)g(x) \neq 0,\right.\\
% &\hspace{5.0cm}\left. gcd(f(x),g(x))=1, ~deg(f) \leq n', ~deg(g) \leq n'-1 \right\}
% \end{align*}}for any $n'>1$.
% \end{lemma}
% \begin{proof}
% Proof for sufficiency, with the choice of precoding matrices as given in (\ref{eqn-precod-mats}), is similar as that for instantaneous network (i.e., $q=0$) as in \cite{DVJM1}. Proof for necessity, taking into account other possible choices of precoding matrices satisfying (\ref{eqn-NA-nec-suff}), is the same as that for $q=0$ case as in \cite{MRMJ}.
% \end{proof}

The following theorem of Meng et al. gives the reduced feasibility conditions for $3$-S $3$-D I-MUN.

\begin{theorem}[\cite{MRMJ} (Reduced Feasibility Conditions)] \label{thm_MRMJ}
 $X'_i{(0)}$ can be recovered from $Y^{(0 \oplus k)}_i$, for all $i$, iff
% 
%%%%%%%%%%%%%%%
{
% \footnotesize
% \vspace{-0.4cm}
\begin{align}
\label{eqn-Meng}
b_i(0) \notin {\cal S}^{(0)} = \left\{ 1, \eta(0), \eta(0)+1, \frac{\eta(0)}{\eta(0)+1} \right\}.
\end{align}}
\end{theorem}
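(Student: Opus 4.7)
The plan is to follow the approach of \cite{MRMJ}, specialized to the $q=0$ case where every $M^{0}_{ij}$ is a $(2n'+1) \times (2n'+1)$ diagonal matrix whose $l$-th diagonal entry is $M_{ij}(\underline\varepsilon_l, 1)$. Since these are diagonal, $U^{(0)}$ is also diagonal with $l$-th entry $u_l := \eta(0)|_{\underline\varepsilon_l}$, and the Krylov-structured precoding matrices of (\ref{eqn-precod-mats}) reduce to Vandermonde-like matrices in the $u_l$. By construction, these precoders satisfy the alignment equation $U^{(0)}V^{(0)}_1 A^{(0)} C^{(0)}=V^{(0)}_1 B^{(0)}$ automatically, so only the three invertibility conditions in (\ref{eqn-NA-nec-suff}) remain to be analyzed.

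For sufficiency, I would first show that the three invertibility conditions reduce to non-singularity of $(2n'+1) \times (2n'+1)$ matrices whose $l$-th row has the form $[1,\,u_l,\,\ldots,\,u_l^{n'},\,c_l,\,c_l u_l,\,\ldots,\,c_l u_l^{n'-1}]$, where $c_l$ equals $b_i(0)|_{\underline\varepsilon_l}$ times a factor coming from $R^{(0)}$ or $S^{(0)}$. Because the LEC vectors $\underline\varepsilon_1,\ldots,\underline\varepsilon_{2n'+1}$ of the $2n'+1$ blocks are independent, a Schwartz–Zippel argument lets me work one row at a time. A direct computation (by row reduction against the Vandermonde block $[1,u_l,\ldots,u_l^{n'}]$) shows that the determinant of this hybrid matrix, viewed as a rational function of $\underline\varepsilon$, is not identically zero exactly when $b_i(0)$ is not identically equal to any of $1$, $\eta(0)$, $\eta(0)+1$, or $\eta(0)/(\eta(0)+1)$; these four values arise as the precise obstructions making the second block of columns coincide with (or lie in the span of) Krylov iterates of the first.

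For necessity with the specific precoding matrices, observing that $b_i(0)$ equal to any of the four values forces a polynomial identity among the columns, so the determinant vanishes. The real work is to extend necessity to \emph{arbitrary} choices of $V^{(0)}_i$, $A^{(0)}$, $B^{(0)}$, $C^{(0)}$ satisfying (\ref{eqn-NA-nec-suff}). Here I would import the graph-theoretic machinery of \cite{MRMJ}: the four forbidden values are exactly the algebraic degeneracies invariant under the Möbius-type relabelings of the alignment data induced by replacing $V^{(0)}_1$ by an arbitrary basis of its column space, and a min-cut analysis of the original $3$-S $3$-D MUN-D (with each $M_{ij}(D)$ evaluated at $D=1$) shows that the rational functions $\eta(0)$ and $b_i(0)$ inherit the same algebraic independence/dependence structure as in an instantaneous $3$-S $3$-D multiple unicast network with the same graph.

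The main obstacle is precisely this last reduction from arbitrary precoders to the Krylov-structured ones, since it requires showing that no non-generic choice of precoding matrices can ``rescue'' a $b_i(0)$ that lies in the forbidden set. In our setting this reduction goes through essentially verbatim from \cite{MRMJ}, because the diagonal structure of $M^{0}_{ij}$ lets me identify the $q=0$ feasibility problem with the instantaneous feasibility problem for a $3$-S $3$-D I-MUN whose LECs are drawn from $2n'+1$ independent copies of $\underline\varepsilon$; the transfer-function polynomials $M_{ij}(\underline\varepsilon,1)$ satisfy the same algebraic relations as in the instantaneous case, so the invariance argument of \cite{MRMJ} applies without modification and yields (\ref{eqn-Meng}).
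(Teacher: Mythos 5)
First, note what the paper actually does with this statement: Theorem \ref{thm_MRMJ} is quoted from \cite{MRMJ} and is not proved in this paper at all. The only step supplied here is the observation, made just before the theorem, that recovering $X'_i(0)$ from $Y^{(0\oplus k)}_i$ is exactly the PBNA feasibility problem for the instantaneous version of the network, since each $M^{0}_{ij}$ is diagonal with entries $M_{ij}(\underline\varepsilon_l,1)$, i.e., instantaneous transfer functions under the block-$l$ LECs. Your opening reductions (diagonality of $U^{(0)}$, the Vandermonde form of the Krylov precoders of (\ref{eqn-precod-mats}), automatic satisfaction of the alignment equation) are correct and match this identification, but beyond that you are re-deriving a cited result rather than anything the paper itself establishes.

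The genuine gap is in your sufficiency step. The row-reduction against the Vandermonde block does \emph{not} directly show that the relevant determinants vanish identically exactly when $b_i(0)\in{\cal S}^{(0)}$. What that computation yields is the weaker characterization that identical vanishing is equivalent to $b_i(0)$ being expressible as $f(\eta(0))/g(\eta(0))$ with $\deg f\leq n'$ and $\deg g\leq n'-1$ (precisely the condition that resurfaces as (\ref{eqn-bi-TIL-cond1}) in this paper). Collapsing that bounded-degree family of obstructions to the four-element set ${\cal S}^{(0)}$ is the central graph-theoretic contribution of \cite{MRMJ}, and it is needed already for sufficiency with the specific Krylov precoders, not only for extending necessity to arbitrary precoders, which is the only place your sketch invokes it. As written, the ``direct computation'' claim silently assumes exactly the lemma being proved. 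Since you defer to \cite{MRMJ} in any case, the cleanest repair is the paper's own route: observe that the $q=0$ subproblem is literally an instance of the instantaneous $3$-S $3$-D problem of \cite{MRMJ} (transfer functions evaluated at $D=1$, independent LEC blocks) and cite their theorem; a self-contained proof would have to reproduce their reduction lemmas, which your outline does not.
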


The following theorem shows that PBNA in $3$-S $3$-D MUN-D is feasible iff $b_i(0) \notin {\cal S}^{(0)}$.
 
\begin{theorem} \label{thm_GC}
When $\eta(0)$ is not a constant, $X'_i{(q)}$ can be recovered from $Y^{(q \oplus k)}_i$, for $q=1,2,\cdots,k-1$, iff  $X'_i{(0)}$ can be recovered from $Y^{(0 \oplus k)}_i$.
\end{theorem}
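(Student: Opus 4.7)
The plan is to reduce the claim to a slice-by-slice equivalence with Theorem~\ref{thm_MRMJ}. The $q$-th frequency slice of the block-time-varying system is itself a $3$-S $3$-D I-MUN: its transfer coefficient from $S_i$ to $T_j$ in block $l$ is the scalar $M_{ij}(\underline\varepsilon_l,\alpha^q)$, so (\ref{ip-op-main}) is precisely the input-output relation of an instantaneous three-unicast network with $2n'+1$ symbol extensions and block-time-varying LECs. Hence Theorem~\ref{thm_MRMJ} will apply to slice $q$ as soon as the corresponding $\eta(q)$ is not a constant in $\underline\varepsilon$, and my task reduces to showing that this non-constancy, as well as the identity membership ``$b_i(\cdot)\in{\cal S}^{(\cdot)}$,'' transfer losslessly between slice $0$ and slice $q$.

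The driving structural fact is a bihomogeneity of $M_{ij}(\underline\varepsilon,D)$. Because every link has unit delay and a source-to-sink path with $\ell$ edges contributes a monomial equal to the product of exactly $\ell+1$ LECs (one $\alpha$ at the source, $\ell-1$ coupling coefficients $\beta$, and one $\epsilon$ at the sink) multiplied by $D^{\ell+1}$, every monomial of $M_{ij}(\underline\varepsilon,D)$ has equal total degree in $\underline\varepsilon$ and in $D$. Consequently, for any scalar $\lambda$,
\begin{equation*}
M_{ij}(\lambda\underline\varepsilon,D) \;=\; M_{ij}(\underline\varepsilon,\lambda D),
\end{equation*}
and specializing $D=1$, $\lambda=\alpha^q$ yields $M_{ij}(\underline\varepsilon,\alpha^q)=M_{ij}(\alpha^q\underline\varepsilon,1)$ for every $(i,j)$. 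Since $b_i$ and $\eta$ in (\ref{eqn_defn_b}) are ratios with equally many $M_{ij}$-factors in numerator and denominator, the common $\lambda$-factor cancels, giving the substitution identities
\begin{equation*}
b_i(q)(\underline\varepsilon)=b_i(0)(\alpha^q\underline\varepsilon),\qquad \eta(q)(\underline\varepsilon)=\eta(0)(\alpha^q\underline\varepsilon),
\end{equation*}
as rational-function identities over $\mathbb F_{p^m}(\underline\varepsilon)$.

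The substitution $\underline\varepsilon\mapsto\alpha^q\underline\varepsilon$ is an invertible automorphism of the ring $\mathbb F_{p^m}[\underline\varepsilon]$, hence of its fraction field, so it preserves constancy and every rational-function identity. In particular $\eta(q)$ is non-constant because $\eta(0)$ is, so Theorem~\ref{thm_MRMJ} applies at slice $q$; and for each $s\in\{1,\eta,\eta+1,\eta/(\eta+1)\}$ the identity $b_i(0)=s(0)$ holds iff $b_i(q)=s(q)$ holds, i.e., $b_i(q)\in{\cal S}^{(q)}$ iff $b_i(0)\in{\cal S}^{(0)}$. Chaining Theorem~\ref{thm_MRMJ} at slice $0$ and at slice $q$ then yields recoverability of $X_i'(q)$ iff recoverability of $X_i'(0)$, and ranging over $q=1,\ldots,k-1$ delivers the theorem. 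The one point needing careful verification—though it is essentially combinatorial bookkeeping over the enumeration of paths, their LEC factors, and their delay contributions—is the bihomogeneity of $M_{ij}(\underline\varepsilon,D)$, since that property is what makes the scaling $\underline\varepsilon\mapsto\alpha^q\underline\varepsilon$ a clean change of variables linking the two slices.
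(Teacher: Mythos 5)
Your proposal is correct and takes essentially the same route as the paper: the paper's proof likewise rests on the observation that evaluating $M_{ij}(\underline\varepsilon,D)$ at $D=\alpha^q$ is the same as scaling every LEC by $\alpha^q$ and evaluating at $q=0$ (the bihomogeneity you verify via unit-delay path counting, which the paper merely asserts as a "simulation" remark), and it then transfers the slice-$q$ feasibility conditions to slice $0$ by exactly this substitution-and-contradiction argument. The only packaging difference is that the paper transports nonvanishing of the determinant polynomials in (\ref{eqn-NA-nec-suff}) for the precoders of (\ref{eqn-precod-mats}) and closes with Lemma $1$ of \cite{KoM} to obtain a single LEC assignment valid for all slices simultaneously (a point worth making explicit in your chaining as well, since the same block-varying LECs serve every $q$), whereas you transport the non-constancy of $\eta$ and the memberships $b_i\in{\cal S}$ and invoke Theorem \ref{thm_MRMJ} slice-wise.
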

\begin{proof}
Proof is given in Appendix \ref{appen_thm_GC}.
\end{proof}

In brief, the above theorem proves that the reduced feasibility conditions of Meng et al. for feasibility of PBNA in $3$-S $3$-D I-MUN are also necessary and sufficient for feasibility of PBNA in $3$-S $3$-D MUN-D when $\eta(0)$ is not a constant.

\textbf{Case 2:} $\eta(0)$ is a constant. 

When $\eta(0)$ is a constant, Theorem $1$ of \cite{MRMJ} states that $X'_i{(0)}$ can be recovered from $Y^{(0 \oplus k)}_i$ iff $b_i(0)$ is not a constant, for $i =1,2,3$. Similar to Theorem $1$ of \cite{MRMJ} we have the following lemma.

\begin{lemma} \label{lem2}
PBNA in $3$-S $3$-D MUN-D is feasible iff $b_i(q)$ is not a constant, for $i =1,2,3$, and $1 \leq q \leq k-1$.
\end{lemma}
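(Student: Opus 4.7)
The plan is to mirror the strategy used by Meng et al.\ in establishing the constant-$\eta$ half of their feasibility theorem, but adapted to the decoupled structure induced by the block-DFT described in (\ref{eqn-struc-trans-mat})--(\ref{ip-op-main}). First, I would observe that because each $M^q_{ij}$ is already block-diagonal in the block index $l\in\{1,\ldots,2n'+1\}$, the input-output relation (\ref{ip-op-main}) splits the joint recovery problem into $k$ separate sub-problems indexed by $q$. For each fixed $q$, the $q$-th sub-problem is formally identical to a $3$-S $3$-D I-MUN PBNA instance in which the effective scalar transfer function from $S_i$ to $T_j$ equals $M_{ij}(\underline{\varepsilon}_l,\alpha^q)$ on the $l$-th transmission block and the role of ``time'' is played by the block index $l$. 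This reduction is the same one that yielded the alignment conditions (\ref{eqn-NA-nec-suff}) for a single $q$.

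Next, I would apply Theorem~1 of \cite{MRMJ} (its constant-$\eta$ clause) to each per-$q$ sub-problem. That argument shows that when $\eta(q)$ is constant in $\underline{\varepsilon}$, the matrix $U^{(q)}$ is a scalar multiple of the identity, so the Vandermonde precoders of (\ref{eqn-precod-mats}) degenerate; alignment is then achievable iff $b_i(q)$ is non-constant, because the three determinant conditions in (\ref{eqn-NA-nec-suff}) reduce to the requirement that $b_i(q)$ not coincide with specific scalar values dictated by the interference ratios at the three sinks. The necessity direction is obtained by observing that if some $b_i(q)$ is a constant, the corresponding determinant is forced to vanish identically in $\underline{\varepsilon}$, so no feasible choice of LECs can satisfy the invertibility condition at $T_i$ for the $q$-th sub-problem.

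The subtle step is justifying that the Case~2 hypothesis ``$\eta(0)$ is a constant'' propagates to ``$\eta(q)$ is a constant for every $q\in\{1,\ldots,k-1\}$''. I would approach this via the graph-theoretic characterization in \cite{MRMJ}: $\eta(\underline{\varepsilon},D)$ being independent of $\underline{\varepsilon}$ is equivalent to a structural condition on the network (relating edge-disjoint source-sink paths and the coupling of the three unicast sessions) which makes no reference to the value of $D$. Algebraically, this condition yields a polynomial identity of the form $M_{21}M_{32}M_{13}=c(D)\cdot M_{31}M_{23}M_{12}$ whose scalar ratio $c(D)$ is $\underline{\varepsilon}$-independent. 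Evaluating at $D=\alpha^q$ then gives $\eta(q)=c(\alpha^q)\in\mathbb{F}_{2^m}$, independent of $\underline{\varepsilon}$, placing each per-$q$ sub-problem in the constant-$\eta$ regime of Step~2 and completing the lemma.

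I expect this last propagation step to be the main obstacle, since it requires invoking (or reproducing) the graph-theoretic side of \cite{MRMJ} in a form that cleanly separates the $D$-dependence from the $\underline{\varepsilon}$-dependence. A backup argument, should the graph-theoretic route prove cumbersome, is a contrapositive one: if $\eta(q)$ were a non-constant rational function of $\underline{\varepsilon}$ for some $q\geq 1$, a pigeonhole/Schwartz--Zippel-type specialization of $\underline{\varepsilon}$ would cause $\eta(\underline{\varepsilon},D)$ to vary with $\underline{\varepsilon}$ at infinitely many $D$-values, contradicting the hypothesis via polynomial interpolation that $\eta(\underline{\varepsilon},1)$ is identically a scalar.
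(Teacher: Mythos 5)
Your Steps 1--2 are exactly what the paper does: its proof of this lemma is a one-line appeal to Meng et al., i.e., for each block index $q$ the sub-problem extracted from (\ref{ip-op-main}) is formally a $3$-S $3$-D I-MUN PBNA instance with effective transfer functions $M_{ij}(\underline{\varepsilon}_l,\alpha^q)$ (the block index $l$ playing the role of time), and the constant-$\eta$ clause of Theorem~1 of \cite{MRMJ} is repeated verbatim for each $q$. So the skeleton of your proposal is the paper's own argument, including the way alignment and rank conditions enter through (\ref{eqn-NA-nec-suff}).

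The divergence is in your third step, the propagation ``$\eta(0)$ constant $\Rightarrow$ $\eta(q)$ constant,'' and that is where your write-up has a genuine hole. Your backup argument does not work: knowing that $\eta(\underline{\varepsilon}_1,D)-\eta(\underline{\varepsilon}_2,D)$ is a nonzero rational function of $D$ only tells you it is nonzero outside a finite set of $D$-values, and that finite set may well contain $D=1$; no Schwartz--Zippel or interpolation argument can force a contradiction at the single specialization $D=1$ without additional structure. Your primary route asserts, rather than proves, the factorization $M_{21}M_{32}M_{13}=c(D)\,M_{31}M_{23}M_{12}$ with $\underline{\varepsilon}$-independent $c(D)$, and the structural characterization in \cite{MRMJ} is stated for instantaneous networks, so importing it with the $D$-dependence cleanly separated is precisely what would need an argument. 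The missing structure is the homogeneity/scaling identity the paper already uses in the proof of Theorem~\ref{thm_GC} and in the Proposition immediately following this lemma: every monomial contributing to the coefficient of $D^{d}$ in $M_{ij}(\underline{\varepsilon},D)$ has LEC-degree tied to $d$ (up to a common shift), so replacing every LEC $\varepsilon$ by $\alpha^{q}\varepsilon$ converts the evaluation at $D=1$ into the evaluation at $D=\alpha^{q}$, up to a common nonzero factor that cancels in the ratios $\eta$ and $b_i$. Hence $\eta(q)(\underline{\varepsilon})=\eta(0)(\alpha^{q}\underline{\varepsilon})$ and $b_i(q)(\underline{\varepsilon})=b_i(0)(\alpha^{q}\underline{\varepsilon})$, so constancy of $\eta(0)$ as a rational function of $\underline{\varepsilon}$ transfers immediately (with the same constant) to every $\eta(q)$, and your Step~2 then closes the proof. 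With that one-line substitution in place of your graph-theoretic detour and the flawed backup, your proposal coincides with the paper's proof.
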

\begin{proof}
Proof is the same as for $q=0$ case in \cite{MRMJ}.
\end{proof}
The following proposition in combination with Theorem $1$ of \cite{MRMJ} and Lemma \ref{lem2} shows that PBNA in a $3$-S $3$-D MUN-D is feasible iff PBNA in the $3$-S $3$-D I-MUN is feasible.

\begin{proposition}
 $b_i(q)$, for  $1 \leq q \leq k-1$, is a constant iff $b_i(0)$ is a constant.
\end{proposition}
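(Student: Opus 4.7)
The plan is to recognize $b_i(q)$ as $b_i(0)$ composed with an invertible rescaling of the LEC vector $\underline{\varepsilon}$, from which constancy (or nonconstancy) transfers in both directions.

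First I would unpack each transfer polynomial via the path-sum expansion $M_{ij}(\underline{\varepsilon},D) = \sum_p m_p(\underline{\varepsilon})\,D^{d(p)}$ coming from the algebraic formulation of \cite{KoM}, where the sum runs over directed paths $p$ from $S_i$ to $T_j$, $d(p)$ denotes the delay along $p$, and $m_p(\underline{\varepsilon}) = \prod_{\lambda \in p}\lambda$ is the product of the LECs collected along $p$. The key structural remark is that every link of $p$ contributes \emph{both} one LEC factor \emph{and} one unit of delay, so the total degree of $m_p$ in the LEC variables equals $d(p)$. Consequently, replacing each LEC $\lambda$ by $\alpha^q\lambda$ multiplies $m_p$ by $(\alpha^q)^{d(p)}$, which is exactly the factor produced by evaluating $D^{d(p)}$ at $D=\alpha^q$. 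Summing over $p$ and absorbing the single global shift $D^{-d'_{\min}}$ that the paper factors out of every entry of $M(D)$, I would obtain the identity
\[ M_{ij}(\underline{\varepsilon},\alpha^q) \;=\; \alpha^{-qd'_{\min}}\,M_{ij}(\alpha^q\underline{\varepsilon},\,1), \]
valid for every source--sink pair $(i,j)$.

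Substituting this into the definitions in \eqref{eqn_defn_b}, the scalar $\alpha^{-qd'_{\min}}$ appears twice in the numerator and twice in the denominator of each $b_i(q)$ and therefore cancels, leaving
\[ b_i(q)(\underline{\varepsilon}) \;=\; b_i(0)(\alpha^q\underline{\varepsilon}), \qquad i=1,2,3, \]
for every $0 \le q \le k-1$. Since $\alpha^q \in \mathbb{F}_{2^m}^{\times}$, the map $\underline{\varepsilon}\mapsto\alpha^q\underline{\varepsilon}$ is a bijection of LEC space with inverse $\underline{\varepsilon}\mapsto\alpha^{-q}\underline{\varepsilon}$.

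From here the bi-implication is immediate: if $b_i(0)$ is identically $c$ as a rational function of $\underline{\varepsilon}$, then $b_i(q)(\underline{\varepsilon}) = b_i(0)(\alpha^q\underline{\varepsilon}) = c$ for every $\underline{\varepsilon}$, so $b_i(q)$ is constant; conversely, if $b_i(q)(\underline{\varepsilon}) = c$ identically, then substituting $\alpha^{-q}\underline{\varepsilon}$ for $\underline{\varepsilon}$ yields $b_i(0)(\underline{\varepsilon}) = c$ identically, so $b_i(0)$ is constant. The main obstacle will be the bookkeeping in the path-sum identity---carefully matching the degree of $m_p$ in the LECs to $d(p)$ and correctly handling the global normalization offset $d'_{\min}$ in a single stroke---but once that step is in place, the remainder of the argument reduces to a short and purely algebraic substitution.
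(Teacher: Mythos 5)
Your proof is correct and follows essentially the same route as the paper: the paper's proof defers to the ``If''/``Only if'' arguments of Theorem \ref{thm_GC}, whose core is exactly your observation that evaluating the transfer functions at $D=\alpha^q$ is simulated by scaling every LEC by $\alpha^q$, so constancy transfers via the invertible substitution $\underline{\varepsilon}\mapsto\alpha^{\pm q}\underline{\varepsilon}$. Your path-sum homogeneity argument and the cancellation of the global $D^{d'_{\min}}$ factor merely make explicit the simulation claim the paper asserts without proof.
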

\begin{proof}
The proof follows using similar arguments as in the ``If Part''  and ``Only If Part'' in the proof of Theorem \ref{thm_GC}.
\end{proof}

The feasibility conditions for PBNA in $3$-S $3$-D MUN-D for the case of zero min-cut between $S_i-T_j$ for some $(i,j)$ are also the same as that for $3$-S $3$-D I-MUN as given in \cite{MRMJ}. For example, when the min-cut between $S_2-T_1$ is zero as considered in Case $2$ of the previous section, re-define $b_1(q)=\frac{M_{31}(\underline\varepsilon,\alpha^q) M_{12}(\underline\varepsilon,\alpha^q)}{M_{11}(\underline\varepsilon,\alpha^q) M_{32}(\underline\varepsilon,\alpha^q)}$. In such a case, PBNA is feasible iff $b_i(0)$ is not a constant for $i = 1,2,3$.

\begin{remark}
The PBNA scheme proposed in this section is different from PBNA using transform approach and time-invariant LECs, and PBNA using time-varying LECs where, the  independent symbols were precoded onto a single block of data which is transmitted after the addition of CP. In PBNA using transform approach and block time-varying LECs, the independent symbols are precoded across multiple blocks of data which are demarcated by separate CPs. A drawback in this scheme is that the decoding delay is higher compared to PBNA using transform approach and time-invariant LECs, and PBNA using time-varying LECs for the same values of $n'$ and large values of $k$. In order to decode the first $p$ symbols, for $p \leq k$, the decoding delay for PBNA using transform approach and block time-varying LECs is equal to $k(2n'+1)$ whereas for both PBNA using transform approach and time-invariant LECs, and PBNA using time-varying LECs (with $n=2n'+1$) the decoding delay is equal $2n'+1$.
\end{remark}

\section{Comparison of Feasibility of the PBNA Schemes in Section \ref{subsec1}, Section \ref{subsec2}, and Section \ref{sec5}} \label{subsec4}
In this section, we tie-up the feasibility of the PBNA schemes in Section \ref{subsec1} and Section \ref{sec5}. We also provide one example each for the cases where 1) the feasibility test fails for all the PBNA schemes proposed, 2) PBNA using time-varying LECs is feasible for some $(n_1,n_2,n_3)$ while the other two proposed PBNA schemes fail.

Consider the feasibility problem of PBNA using transform approach and time-invariant LECs described in Section \ref{subsec1}. Consider the case of non-zero min-cut between every $S_i-T_j$ with $b_i(q)$ as defined in (\ref{eqn_defn_b}) and $\eta(0)$ not being a constant. Using the results of \cite{MRMJ}, we shall show that the conditions of Theorem \ref{thm3} are also necessary for feasibility of PBNA using transform approach and time-invariant LECs, i.e., the conditions are not restricted by the choice of the precoding matrices in (\ref{thm2_pf7})-(\ref{thm2_pf9}). We also show that the derived set of necessary and sufficient conditions for feasibility of PBNA using transform approach and block time-varying LECs, i.e., (\ref{eqn-Meng}) is also a necessary condition for feasibility of PBNA using transform approach and time-invariant LECs.

\begin{proposition}
If the conditions of Theorem \ref{thm3} are not satisfied then, PBNA using transform approach and time-invariant LECs is infeasible for any choice of precoding matrices (i.e., even when the matrices are not restricted to those in (\ref{thm2_pf7})-(\ref{thm2_pf9})).
\end{proposition}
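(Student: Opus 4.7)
My plan is to adapt the reduction argument of Meng et al.\ \cite{MRMJ} to the transform-domain setting with time-invariant LECs. Because the LECs are time-invariant and every source/sink is scalar ($\mu_i=\nu_j=1$), the matrix $\hat M_{ij}$ of (\ref{eqnrefer1unicast}) is $n \times n$ diagonal with $t$-th diagonal entry $M_{ij}(\alpha^{n-1-t})$. Consequently $\hat U, \hat R, \hat S$ in (\ref{mat_T}) are also diagonal, and the Vandermonde-type matrices in (\ref{thm2_pf7})-(\ref{thm2_pf9}) have $t$-th rows forming the usual Vandermonde row in powers of the $t$-th diagonal entry of $\hat U$ (up to the per-row scalings by $r_t$ or $s_t$). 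This structural observation is what will let the one-shot I-MUN argument of \cite{MRMJ} be transplanted here.

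I would prove the contrapositive: if PBNA in the transform approach with time-invariant LECs is feasible with some arbitrary precoding matrices $V'_1, V'_2, V'_3$, then the Vandermonde matrices of (\ref{thm2_pf7})-(\ref{thm2_pf9}) also satisfy the rank conditions of Theorem \ref{thm3}. The first step is to mimic the derivation used in the proof of Theorem \ref{thm4}: the alignment-subset requirements $\text{Span}(\hat M_{23} V'_2) \subset \text{Span}(\hat M_{13} V'_1)$, $\text{Span}(\hat M_{32} V'_3) \subset \text{Span}(\hat M_{12} V'_1)$, and $\text{Span}(\hat M_{31} V'_3) \subset \text{Span}(\hat M_{21} V'_2)$ force $V'_2 = \hat R V'_1 A$, $V'_3 = \hat S V'_1 B$, and $\hat U V'_1 A C = V'_1 B$ for some matrices $A, B, C$ of appropriate sizes. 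Hence $\text{Col}(\hat U V'_1 A C) \subset \text{Col}(V'_1)$, so the column space of $V'_1$ must be essentially $\hat U$-invariant in the Krylov sense.

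The second step is to invoke the diagonal structure of $\hat U$ together with the reduction argument of \cite{MRMJ} to argue that any feasible $V'_1$ can be right-multiplied by an invertible matrix to yield a matrix whose columns are the $\hat U$-iterates $W', \hat U W', \ldots, \hat U^{n'} W'$ of some initial vector $W'$, and that a suitable basis change then reduces $W'$ to the all-ones vector $W$ without altering the rank conditions. This equivalence to (\ref{thm2_pf7})-(\ref{thm2_pf9}) would establish that the rank conditions of Theorem \ref{thm3} are also necessary, thereby proving the proposition.

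The main obstacle I foresee is the degenerate case in which $\hat U$ has repeated diagonal entries, which arises for instance when $\eta(q)$ is a constant over $q$ or takes the same value at two distinct $\alpha^q$. In such a situation, Krylov iterates of a single vector fail to span a sufficiently rich subspace, so the Vandermonde reduction must be carried out blockwise along eigenvalue clusters of $\hat U$. I expect this difficulty can be circumvented either by invoking Theorem \ref{thm_MRMJ} and Lemma \ref{lem2} per DFT index (exploiting the fact that the transform technique under time-invariant LECs decomposes the $3$-S $3$-D MUN-D into $n$ independent scalar instantaneous $3$-S $3$-D I-MUNs whose individual feasibility is characterised by \cite{MRMJ}) or by a direct rank-counting argument that accounts for the eigenvalue multiplicities of $\hat U$ together with the Vandermonde structure of the precoding matrices.
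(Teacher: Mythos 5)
Your structural observations (diagonal $\hat M_{ij}$, $\hat U$, $\hat R$, $\hat S$, and the fact that the diagonal entries of $\hat M_{11}^{-1}\hat M_{21}\hat R$ etc.\ are $b_1(q), b_2(q), b_3(q)$) match the starting point of the paper's proof, and your first step — using the span-containment alignment conditions to force $V'_2=\hat R V'_1A$, $V'_3=\hat S V'_1B$ and $\hat U V'_1AC=V'_1B$ — is sound. The gap is in your second step. The relation $\hat U V'_1AC=V'_1B$ only says that $\hat U$ maps the at most $n'$-dimensional subspace $\text{Col}(V'_1AC)$ into $\text{Col}(V'_1)$; it does not make $\text{Col}(V'_1)$ $\hat U$-invariant, let alone a cyclic (Krylov) subspace generated by a single vector, so the claimed reduction of an arbitrary feasible $V'_1$ to the form $[W'\ \hat U W'\ \cdots\ \hat U^{n'}W']$ is unjustified. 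Even granting such a form, the final move of ``changing $W'$ to the all-ones vector $W$ by a basis change without altering the rank conditions'' does not hold: the rank conditions (\ref{thm2_pf4})--(\ref{thm2_pf6}) involve the fixed diagonal matrices with entries $b_i(q)$, and replacing $W'$ by $W$ is not a transformation that preserves them. More fundamentally, no purely linear-algebraic reduction of the precoder can work here, because feasibility is a statement about the existence of an LEC assignment making certain determinants, which are rational functions in $\underline\varepsilon$, non-zero; the necessity argument must exploit the polynomial structure of $b_i$ and $\eta$ in the LECs.

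The paper argues in the opposite direction: it first shows that the conditions of Theorem \ref{thm3} fail precisely when $b_i(q)=f(\eta(q))/g(\eta(q))$ holds identically (with the \emph{same} $f,g$ for all $0\le q\le 2n'$, $\deg f\le n'$, $\deg g\le n'-1$), i.e.\ condition (\ref{eqn-bi-TIL-cond1}), and then invokes the machinery of \cite{MRMJ} (their Lemmas 5 and 8 and Theorem 2), which uses the graph-theoretic/polynomial structure of the transfer functions to show that such a low-degree rational relation rules out (\ref{thm2_pf4})--(\ref{thm2_pf6}) for \emph{every} choice of precoding matrices. Your fallback for the degenerate case also does not go through: the proposition is stated under the standing assumption that $\eta(0)$ is not a constant, and, more importantly, the time-invariant scheme precodes \emph{across} the DFT indices with a single shared LEC assignment (the $\hat M_{ij}$ entries are evaluations of the same transfer polynomials at the points $\alpha^q$), so the network does not decompose into $n$ independent instantaneous $3$-S $3$-D I-MUNs to which Theorem \ref{thm_MRMJ} or Lemma \ref{lem2} could be applied per index; that per-index viewpoint is only valid for the block time-varying scheme of Section \ref{sec5}, where independent LECs are used in each block.
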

\begin{proof}
Note that the diagonal elements of $\hat{M}^{-1}_{11}\hat{M}_{21}\hat{R}$, $\hat{M}^{-1}_{12}\hat{M}_{22}\hat{R}$, and $\hat{M}^{-1}_{13}\hat{M}_{33}\hat{S}$, where $\hat{R}$ and $\hat{S}$ are defined in (\ref{mat_T}), are given by $b_1(q)$, $b_2(q)$, and $b_3(q)$ respectively, for $0 \leq q \leq 2n'+1$. Therefore, for the conditions of  Theorem \ref{thm3} to be satisfied, the columns of the matrices $[V_1 ~~\hat{M}^{-1}_{11}\hat{M}_{21}V_2]$, $[\hat{M}^{-1}_{12}\hat{M}_{22}V_2 ~~V_1]$, and $[\hat{M}^{-1}_{13}\hat{M}_{33}V_3 ~~V_1]$ must be linearly independent over the field of rational polynomials in $\underline\varepsilon$ where the choice of the precoding matrices are given by (\ref{thm2_pf7})-(\ref{thm2_pf9}). Alternatively, the conditions of  Theorem \ref{thm3} will fail iff for some $i$ and for $0 \leq q \leq 2n'+1$,
{
% \footnotesize \vspace{-0.3cm}
\begin{align}
\label{eqn-bi-TIL-cond}
& b_i(q)  \in  \left\{\left. \frac{f(\eta(q))}{g(\eta(q))} \right|f(x), g(x) \in \mathbb F_{2^m}(\underline \varepsilon)[x],f(x)g(x) \neq 0,\right. \\
\nonumber
&\left. \hspace{5.0cm} gcd(f(x),g(x))=1, ~deg(f) \leq n', ~deg(g) \leq n'-1 \right\}.
\end{align}} Note that the functions $f(x)$ and $g(x)$ must be the same for \mbox{$0 \leq q \leq 2n+1$}. If $b_i(q)  \in  \left\{\frac{f(\eta(q))}{g(\eta(q))}\right\}$, for $0 \leq q \leq 2n'+1$, such that the denominators of $f(x)$ and $g(x)$ are not constants, then the denominators can be subsumed in the numerators of $g(x)$ and $f(x)$ respectively. Hence, (\ref{eqn-bi-TIL-cond}) can be re-stated as
{
% \footnotesize \vspace{-0.3cm}
\begin{align} \label{eqn-bi-TIL-cond1}
& b_i(q)  \in  \left\{\left. \frac{f(\eta(q))}{g(\eta(q))} \right|f(x), g(x) \in \mathbb F_{2^m}[\underline \varepsilon][x],f(x)g(x) \neq 0,\right.\\
\nonumber
&\left. \hspace{5.0cm}  gcd(f(x),g(x))=1, ~deg(f) \leq n', ~deg(g) \leq n'-1 \right\}.
\end{align}} Using (\ref{eqn-bi-TIL-cond1}) and following exactly the same steps as in the proofs of Lemma $5$, Lemma $8$ and Theorem $2$ of \cite{MRMJ}, it can be shown that when (\ref{eqn-bi-TIL-cond1}) is satisfied, choice of any other precoding matrices would still not satisfy (\ref{thm2_pf4})-(\ref{thm2_pf6}) (given in Appendix \ref{appen_thm3}) which are necessary conditions for feasibility of PBNA using transform approach and time-invariant LECs.
\end{proof}

The following proposition states that the necessary and sufficient condition for feasibility of PBNA using transform approach and block time-varying LECs in (\ref{eqn-Meng}) is also a necessary condition for feasibility of PBNA using transform approach and time-invariant LECs.
\begin{proposition} \label{prop-TIL_inf}
 PBNA using transform approach and time-invariant LECs in $3$-S $3$-D MUN-D is infeasible if $b_i{(0)} \in {\cal S}^{(0)}$ for every $i = 1,2,3$ where, ${\cal S}^{(0)}$ is defined in (\ref{eqn-Meng}).
\end{proposition}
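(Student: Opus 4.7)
My plan is to reduce the proposition to the previous proposition of this section, which already established that PBNA using transform approach and time-invariant LECs is feasible if and only if the rank conditions (\ref{thm3_cond1})–(\ref{thm3_cond3}) of Theorem \ref{thm3} hold, equivalently, if and only if condition (\ref{eqn-bi-TIL-cond1}) fails at every $i\in\{1,2,3\}$. My goal is therefore to show that the hypothesis $b_i(0)\in\mathcal{S}^{(0)}$ for every $i$ produces, at some index $i$, a degree-bounded rational function $h_i$ with $b_i(q)=h_i(\eta(q))$ valid at every DFT index $0\le q\le 2n'$, which is exactly the set membership (\ref{eqn-bi-TIL-cond1}).

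First, I would rewrite $\mathcal{S}^{(0)}=\{1,\eta(0),\eta(0)+1,\eta(0)/(\eta(0)+1)\}$ as $\{h(\eta(0)):h\in\mathcal{H}\}$ for the four canonical low-degree rational functions $\mathcal{H}=\{1,\,x,\,x+1,\,x/(x+1)\}$. Each such $h(x)$ has numerator degree at most $1\le n'$ and denominator degree at most $1\le n'-1$ (using $n'\ge 2$ for the last case), which matches the degree budget in (\ref{eqn-bi-TIL-cond1}). The hypothesis then supplies, for each $i$, an $h_i\in\mathcal{H}$ such that $b_i(0)=h_i(\eta(0))$ holds as an identity of rational functions of $\underline\varepsilon$.

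The main obstacle is the second step: lifting the identity $b_i(0)=h_i(\eta(0))$ at the DFT index $q=0$ (equivalently at $D=1$) to the identity $b_i(q)=h_i(\eta(q))$ at every $q$. The lifting exploits the structural form of the transfer polynomials in a $3$-S $3$-D MUN-D with min-cut one on each $S_i$–$T_i$ pair: each $M_{ij}(D)$ is a sum of monomials $\mathbf{m}(P)D^{|P|}$ over source-to-destination paths $P$, with $\mathbf{m}(P)$ the product of LECs along $P$. The identity $b_i(0)=h_i(\eta(0))$, after clearing denominators, becomes a polynomial identity among products of the $M_{ij}(1)$, whose components (indexed by multidegrees in the LECs) must agree term-by-term; exploiting the multiplicative/path-decomposition structure then shows that the corresponding $D$-graded components of the cleared identity match as well, yielding $b_i(D)=h_i(\eta(D))$ as rational functions in $\mathbb{F}_{2^m}(\underline\varepsilon)(D)$. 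Specialising at $D=\alpha^q$ gives $b_i(q)=h_i(\eta(q))$ for every $q$. The cleanest way to carry this step out is to mirror the ``If Part'' argument in the proof of Theorem \ref{thm_GC}, which handles an analogous lifting for the block time-varying setting.

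Finally, with $b_i(q)=h_i(\eta(q))$ for every $q$ at some $i$, setting $f_i(x)$ and $g_i(x)$ to be the numerator and denominator of $h_i(x)$ (in lowest terms, both of degree at most $1$) places $b_i(q)$ inside the set on the right-hand side of (\ref{eqn-bi-TIL-cond1}). Thus (\ref{eqn-bi-TIL-cond1}) is satisfied at this $i$, and by the preceding proposition no choice of precoding matrices satisfies the rank conditions of Theorem \ref{thm3}. Hence PBNA using transform approach and time-invariant LECs is infeasible whenever $b_i(0)\in\mathcal{S}^{(0)}$ for every $i=1,2,3$, as claimed.
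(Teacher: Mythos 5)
Your proposal is correct and follows essentially the same route as the paper: show that $b_i(0)\in{\cal S}^{(0)}$ forces $b_i(q)=h_i(\eta(q))$ for all $q$ (the paper does this lifting exactly by mirroring the ``If'' part of Theorem \ref{thm_GC}, via the substitution that scales every LEC by $\alpha^q$, which is equivalent to your degree-grading argument since each path monomial's LEC-degree equals its delay), so that (\ref{eqn-bi-TIL-cond1}) holds and the preceding proposition yields infeasibility for any choice of precoding matrices. No gaps of substance; your remark that $n'\geq 2$ is needed for the case $h(x)=x/(x+1)$ is a reasonable reading consistent with the paper's convention $n'>1$.
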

\begin{proof}
The proposition is just a re-statement of the fact that if $b_i{(0)} \in {\cal S}^{(0)}$, for instance  $b_i{(0)}=\frac{\eta(0)}{\eta(0)+1}$ then, $b_i{(q)}=\frac{\eta(q)}{\eta(q)+1}$ for $0 \leq q \leq 2n'+1$ which is proved using similar arguments as in the ``If'' part in the proof of Theorem \ref{thm_GC} (given in Appendix \ref{appen_thm_GC}). Thus, if $b_i{(0)} \in {\cal S}^{(0)}$ then (\ref{eqn-bi-TIL-cond1}) will be satisfied which implies that PBNA using transform approach and time-invariant LECs is infeasible.
\end{proof}

Hence, whenever PBNA using transform approach and block time-varying LECs is infeasible, PBNA using transform approach and time-invariant LECs is also infeasible \footnote{This can also be proved for the case of $\eta(0)$ being a constant and for the case of zero min-cut between $S_i-T_j$ for some $(i,j)$. Both follow directly from the fact that $b_i(0)$ is a constant iff $b_i(q)$ is a constant for all $q \in \{0,1,\cdots,2n'+1\}$.}. Conversely, if PBNA using transform approach and time-invariant LECs is feasible then, PBNA using transform approach and block time-varying LECs is feasible because the reduced feasibility conditions of (\ref{eqn-Meng}) will be satisfied. For example, PBNA using transform approach and block time-varying LECs is also feasible for the network considered in Example \ref{eg-PBNA_TINV_LEC}. The sufficiency of the condition in (\ref{eqn-Meng}) for feasibility of PBNA using transform approach and time-invariant LECs remains open. 

So, the next natural question is whether PBNA using time-varying LECs is feasible for some $(n_1,n_2,n_3)$ (where, $n_1,n_2,n_3 \neq 0$) or not when the other two PBNA schemes fail. This question is difficult to answer in generality. However, we show through examples the existence of $3$-S $3$-D MUN-D such that all the three PBNA schemes are infeasible and also the existence of $3$-S $3$-D MUN-D such that PBNA using time-varying LECs is feasible for some $(n_1,n_2,n_3,n)$ while the other two PBNA schemes are infeasible. The following example taken from \cite{ADVJM}, but with delays incorporated, is an instance where all the PBNA schemes described in the previous sections are infeasible.

\begin{example} \label{eg_TVL_NF}
Consider the network shown in Fig. \ref{NA_eg2}. Each link is taken to have unit-delay. 
%%%%%%%%
\begin{figure}[htbp]
\centering
\includegraphics[width=3.4in,height=4.4in]{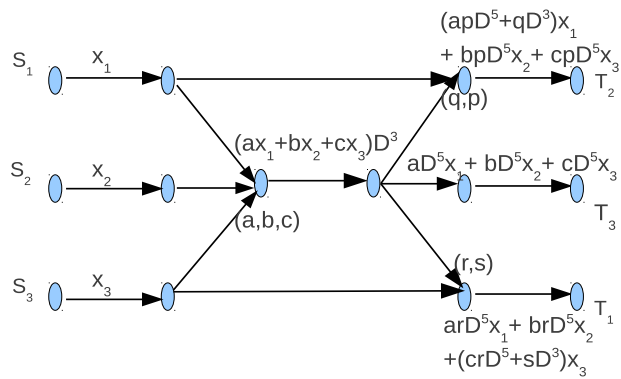}
\vspace{-5.5cm}
\caption{A $3$-S $3$-D MUN-D where, $(1)$ PBNA using transform approach and time-invariant LECs, and PBNA using transform approach and block time-varying LECs are infeasible, and $(2)$ PBNA using time-varying LECs is infeasible for all positive integer-tuples $(n_1,n_2,n_3,n)$.}	
\label{NA_eg2}	
\end{figure}The local encoding coefficients at each node are indicated in the figure. Here, $d_{max}=2$.

Note that $b_1(0)= 1$ and hence, PBNA using transform approach and time-invariant LECs, and PBNA using transform approach and block time-varying LECs are infeasible for all $n'>1$. We shall now show that PBNA using time-varying LECs is infeasible for all $(n_1>0,n_2>0,n_3>0)$ and $n>0$. Let $a_i$ denote the LEC $a$ at time instant $i$ and similarly denote the other LECs. Note that $b_i$ inside a matrix will denote the LEC $b$ at time instant $i$. Now, we have

{\footnotesize \vspace{-0.3cm}
\begin{align*}
&M_{11}=
\begin{bmatrix}
  0 & 0 & a_{n-3}r_{n-1} & 0 & \cdots & 0\\
  0 & 0 & 0 & a_{n-4}r_{n-2} & \cdots & 0\\
  \vdots & \vdots & \vdots & \vdots & \vdots & \vdots \\
  0 & 0 & 0 & 0 & \cdots & a_0r_2\\
  a_{-1}r_{1} & 0 & 0 & 0 & \cdots & 0\\
  0 & a_{-2}r_{0} & 0 & 0 & \cdots & 0\\ 
\end{bmatrix}\\
& M^{-1}_{11}=
\begin{bmatrix}
  0 & 0 & 0 & \cdots & \frac{1}{a_{-1}r_{1}} & 0\\
  0 & 0 & 0 & \cdots & 0 &  \frac{1}{a_{-2}r_{0}}\\
  \frac{1}{a_{n-3}r_{n-1}} & 0 & 0 & \cdots & 0 & 0\\
  0 & \frac{1}{a_{n-4}r_{n-2}} & 0 & \cdots & 0 & 0 \\
  \vdots & \vdots & \vdots & \vdots & \vdots & \vdots \\
  0 & 0 & \cdots & \frac{1}{a_{0}r_{2}} & 0 & 0\\ 
\end{bmatrix}
\end{align*}}Similarly,

{\footnotesize \vspace{-0.3cm}
\begin{align*}
&M^{-1}_{23}=
\begin{bmatrix}
  0 & 0 & 0 & \cdots & \frac{1}{b_{-1}} & 0\\
  0 & 0 & 0 & \cdots & 0 &  \frac{1}{b_{-2}}\\
  \frac{1}{b_{n-3}} & 0 & 0 & \cdots & 0 & 0\\
  0 & \frac{1}{b_{n-4}} & 0 & \cdots & 0 & 0 \\
  \vdots & \vdots & \vdots & \vdots & \vdots & \vdots \\
  0 & 0 & \cdots & \frac{1}{b_{0}} & 0 & 0\\  
\end{bmatrix}.
\end{align*}}The other transfer matrices involved in determining the feasibility of PBNA using time-varying LECs are given by 

{\footnotesize \vspace{-0.3cm}
\begin{align*}
&M_{21}=
\begin{bmatrix}
 0 & 0 & b_{n-3}r_{n-1} & 0 & \cdots & 0\\
  0 & 0 & 0 & b_{n-4}r_{n-2} & \cdots & 0\\
  \vdots & \vdots & \vdots & \vdots & \vdots & \vdots \\
  0 & 0 & 0 & 0 & \cdots & b_0r_2\\
  b_{-1}r_{1} & 0 & 0 & 0 & \cdots & 0\\
  0 & b_{-2}r_{0} & 0 & 0 & \cdots & 0\\ 
\end{bmatrix}\\
&M_{13}=
\begin{bmatrix}
 0 & 0 & a_{n-3} & 0 & \cdots & 0\\
  0 & 0 & 0 & a_{n-4} & \cdots & 0\\
  \vdots & \vdots & \vdots & \vdots & \vdots & \vdots \\
  0 & 0 & 0 & 0 & \cdots & a_0\\
  a_{-1} & 0 & 0 & 0 & \cdots & 0\\
  0 & a_{-2} & 0 & 0 & \cdots & 0\\ 
\end{bmatrix}.
\end{align*}}Hence, \mbox{$M^{-1}_{11}M_{21}M^{-1}_{23}M_{13}=I_n$}. Thus, the matrix
\begin{align*}
 [V_1 ~~M^{-1}_{11}M_{21}V_2]=[V_1 ~~M^{-1}_{11}M_{21}M^{-1}_{23}M_{13}V_1A]=[V_1 ~~V_1A]
\end{align*}is not full-rank. This violates (\ref{thm3_pf4}) (given in Appendix \ref{appen_thm4}) and hence, the condition of Theorem \ref{thm4} is not satisfied.
\end{example}

The following example considers a modified version of the network dealt in Example \ref{eg_TVL_NF} and is an instance where PBNA using time-varying LECs is feasible for some $(n_1,n_2,n_3,n)$ while the other two the PBNA schemes are infeasible.
\begin{example} \label{eg_TVL_F}
This example considers a network whose topology is the same as that in Fig. \ref{NA_eg2} where all the links except the incoming link at $T_3$ have unit delays. The incoming link at $T_3$ is assumed to have a delay of $4$ time units. Hence, $d_{max}$ is equal to $5$.

Note that $b_1(0)= 1$ and hence, PBNA using transform approach and time-invariant LECs, and PBNA using transform approach and block time-varying LECs are infeasible for all $n'>1$. We shall now show that PBNA using time-varying LECs is feasible for some $(n_1>0,n_2>0,n_3>0)$ and $n>0$. The notation used for the LECs is the same as that in Example \ref{eg_TVL_NF}. Now, the network transfer matrices are given by 

{\footnotesize \vspace{-0.3cm}
\begin{align*}
&M_{11}=
\begin{bmatrix}
  0 & 0 & a_{n-3}r_{n-1} & 0 & \cdots & 0\\
  0 & 0 & 0 & a_{n-4}r_{n-2} & \cdots & 0\\
  \vdots & \vdots & \vdots & \vdots & \vdots & \vdots \\
  0 & 0 & 0 & 0 & \cdots & a_0r_2\\
  a_{-1}r_{1} & 0 & 0 & 0 & \cdots & 0\\
  0 & a_{-2}r_{0} & 0 & 0 & \cdots & 0\\ 
\end{bmatrix},
M_{21}=
\begin{bmatrix}
  0 & 0 & b_{n-3}r_{n-1} & 0 & \cdots & 0\\
  0 & 0 & 0 & b_{n-4}r_{n-2} & \cdots & 0\\
  \vdots & \vdots & \vdots & \vdots & \vdots & \vdots \\
  0 & 0 & 0 & 0 & \cdots & b_0r_2\\
  b_{-1}r_{1} & 0 & 0 & 0 & \cdots & 0\\
  0 & b_{-2}r_{0} & 0 & 0 & \cdots & 0\\ 
\end{bmatrix},\\
&M_{31}=
\begin{bmatrix}
  s_7 & 0 & c_{n-3}r_{n-1} & 0 & \cdots & 0\\
  0 & s_6 & 0 & c_{n-4}r_{n-2} & \cdots & 0\\
  \vdots & \vdots & \vdots & \vdots & \vdots & \vdots \\
  0 & 0 & 0 & 0 & \cdots & c_0r_2\\
  c_{-1}r_{1} & 0 & 0 & 0 & \cdots & 0\\
  0 & c_{-2}r_{0} & 0 & 0 & \cdots & s_0\\ 
\end{bmatrix},
M_{12}=
\begin{bmatrix}
  q_7 & 0 & a_{n-3}p_{n-1} & 0 & \cdots & 0\\
  0 & q_6 & 0 & a_{n-4}p_{n-2} & \cdots & 0\\
  \vdots & \vdots & \vdots & \vdots & \vdots & \vdots \\
  0 & 0 & 0 & 0 & \cdots & a_0p_2\\
  a_{-1}p_{1} & 0 & 0 & 0 & \cdots & 0\\
  0 & a_{-2}p_{0} & 0 & 0 & \cdots & q_0\\ 
\end{bmatrix},\\
% \end{align*}}
% 
% {\footnotesize \vspace{-0.3cm}
% \begin{align*}
&M_{22}=
\begin{bmatrix}
  0 & 0 & b_{n-3}p_{n-1} & 0 & \cdots & 0\\
  0 & 0 & 0 & b_{n-4}p_{n-2} & \cdots & 0\\
  \vdots & \vdots & \vdots & \vdots & \vdots & \vdots \\
  0 & 0 & 0 & 0 & \cdots & b_0p_2\\
  b_{-1}p_{1} & 0 & 0 & 0 & \cdots & 0\\
  0 & b_{-2}p_{0} & 0 & 0 & \cdots & 0\\ 
\end{bmatrix},M_{32}=
\begin{bmatrix}
  0 & 0 & c_{n-3}p_{n-1} & 0 & \cdots & 0\\
  0 & 0 & 0 & c_{n-4}p_{n-2} & \cdots & 0\\
  \vdots & \vdots & \vdots & \vdots & \vdots & \vdots \\
  0 & 0 & 0 & 0 & \cdots & c_0p_2\\
  c_{-1}p_{1} & 0 & 0 & 0 & \cdots & 0\\
  0 & c_{-2}p_{0} & 0 & 0 & \cdots & 0\\ 
\end{bmatrix},\\
&M_{13}=
\begin{bmatrix} 
0 & 0 & \cdots & 0 & a_{n-6} & 0 & \cdots & 0\\
0 & 0 & \cdots & 0 & 0 & a_{n-5} & \cdots & 0\\
\vdots & \vdots & \vdots & \vdots & \vdots & \vdots & \vdots\\
0 & 0 & \cdots & 0 & 0 & 0 & \cdots & a_0\\
a_{-1} & 0 & \cdots & 0 & 0 & 0 & 0 & 0\\
0 & a_{-2} & \cdots & 0 & 0 & 0 & 0 & 0\\,
\vdots & \vdots & \vdots & \vdots & \vdots & \vdots & \vdots\\
0 & 0 & \cdots & a_{-5} & 0 & 0 & \cdots & 0
\end{bmatrix},M_{23}=
\begin{bmatrix} 
0 & 0 & \cdots & 0 & b_{n-6} & 0 & \cdots & 0\\
0 & 0 & \cdots & 0 & 0 & b_{n-5} & \cdots & 0\\
\vdots & \vdots & \vdots & \vdots & \vdots & \vdots & \vdots\\
0 & 0 & \cdots & 0 & 0 & 0 & \cdots & b_0\\
b_{-1} & 0 & \cdots & 0 & 0 & 0 & 0 & 0\\
0 & b_{-2} & \cdots & 0 & 0 & 0 & 0 & 0\\,
\vdots & \vdots & \vdots & \vdots & \vdots & \vdots & \vdots\\
0 & 0 & \cdots & b_{-5} & 0 & 0 & \cdots & 0
\end{bmatrix},\\
&M_{33}=
\begin{bmatrix} 
0 & 0 & \cdots & 0 & c_{n-6} & 0 & \cdots & 0\\
0 & 0 & \cdots & 0 & 0 & c_{n-5} & \cdots & 0\\
\vdots & \vdots & \vdots & \vdots & \vdots & \vdots & \vdots\\
0 & 0 & \cdots & 0 & 0 & 0 & \cdots & c_0\\
c_{-1} & 0 & \cdots & 0 & 0 & 0 & 0 & 0\\
0 & c_{-2} & \cdots & 0 & 0 & 0 & 0 & 0\\,
\vdots & \vdots & \vdots & \vdots & \vdots & \vdots & \vdots\\
0 & 0 & \cdots & c_{-5} & 0 & 0 & \cdots & 0
\end{bmatrix}.
\end{align*}}Let $n=8, n_1=5, n_2=3, n_3=3$. Unlike in Example \ref{eg_TVL_NF}, here the matrix $M^{-1}_{11}M_{21}M^{-1}_{23}M_{13}$ is not identity and is equal to
\begin{align*}
\begin{bmatrix} 
1 & 0 & 0 & 0 & 0 & 0 & 0 & 0\\
0 & 1 & 0 & 0 & 0 & 0 & 0 & 0\\
0 & 0 & \frac{a_{-3}b_5}{a_5b_{-3}} & 0 & 0 & 0 & 0 & 0\\
0 & 0 & 0 & \frac{a_{-4}b_4}{a_4 b_{-4}} &  0 & 0 & 0 & 0\\
0 & 0 & 0 & 0 & \frac{a_{-5}b_3}{a_3b_{-5}} & 0 & 0 & 0\\
0 & 0 & 0 & 0 & 0 & 1 & 0 & 0\\
0 & 0 & 0 & 0 & 0 & 0 & 1 & 0\\
0 & 0 & 0 & 0 & 0 & 0 & 0 & 1
\end{bmatrix}.
\end{align*}
\begin{figure*}
\begin{align} \nonumber
&a_{-5}=1, a_{-4}=\beta, a_{-3}=1, a_{-2}=\beta, a_{-1}=\beta^6, a_0=\beta^2, a_1=\beta^{12}, a_2=\beta^7,\\ \nonumber
&a_3 =\beta^{26}, a_4=\beta^3, a_5=\beta^{31},\\ \nonumber
&b_{-5}=\beta^{54}, b_{-4}=\beta^{52}, b_{-3}=\beta^{13}, b_{-2}=\beta^{35}, b_{-1}=\beta^8, b_0=\beta^{48}, b_1=\beta^{27},\\ \nonumber
&b_2=\beta^{18}, b_3=\beta^4, b_4=\beta^{24}, b_5=\beta^{28},\\ \nonumber
&c_{-5}=\beta^{45}, c_{-4}=\beta^{54}, c_{-3}=1, c_{-2}=\beta, c_{-1}=\beta^6, c_0=\beta^2, c_1=\beta^{12}, c_2=\beta^7,\\ \nonumber
&c_3=\beta^{26}, c_4=\beta^3, c_5=\beta^{32},\\ \label{eqn-LEC-TVF}
&p_0=\beta^{45}, p_1=\beta^{49}, p_2=\beta^{38}, p_3=\beta^{28}, p_4=\beta^{41}, p_5=\beta^{19}, p_6=\beta^{56}, p_7=\beta^5,\\ \nonumber
&q_0=\beta^{24}, q_1=\beta^{33}, q_2=\beta^{16}, q_3=\beta^{14}, q_4=\beta^{52}, q_5=\beta^{36}, q_6=\beta^{54}, q_7=\beta^9,\\ \nonumber
&r_0=\beta^{62}, r_1=\beta^{25}, r_2=\beta^{11}, r_3=\beta^{34}, r_4=\beta^{31}, r_5=\beta^{17}, r_6=\beta^{47}, r_7=\beta^{15},\\ \nonumber
&s_0=\beta^{32}, s_1=\beta^{13}, s_2=\beta^{35}, s_3=\beta^8, s_4=\beta^{48}, s_5=\beta^{27}, s_6=\beta^{18}, s_7=\beta^4.
\end{align}
\hrule
\end{figure*}We now, choose $V_1=[W ~UW ~U^2W ~U^3W ~U^4W]$ where, the matrix $U$ is defined in (\ref{eqn-na2_matU}) and $W$ is the all-ones column vector of length $n=8$. We choose the matrices $A$ and $B$ (defined in (\ref{na2_V2_and_V3})) respectively to be equal to the first three columns and the second three columns of the identity matrix $I_5$.  The matrix $C$ is taken to be equal to the identity matrix $I_3$. Now, it can be easily verified that $g_{ij}=0$, where the rational-polynomial $g_{ij}$ is defined in (\ref{eqn-gij}). Now, for PBNA with time-varying LECs to be feasible, it remains to be verified if there exists an assignment to the LECs such that (\ref{thm3_pf4})-(\ref{thm3_pf6}) (given in Appendix \ref{appen_thm4}) are satisfied. The LECs involved are chosen (randomly) as given in (\ref{eqn-LEC-TVF}) where, $\beta$ is a primitive element of $GF(2^6)$ whose minimal polynomial is given by $(1+x+x^6)$. With these choice of LECs, it can be verified using the software {\em Mathematica} that (\ref{thm3_pf4})-(\ref{thm3_pf6}) are satisfied.
\end{example}

In the next section, we shall motivate a discussion on the potential of a non-asymptotic scheme, namely on-off scheme, to achieve a rate of half for every source-destination pair in $3$-S $3$-D MUN-D.
\section{Discussion on On-off Schemes}
\label{sec6}
PBNA for $3$-S $3$-D I-MUN was primarily motivated by the breakthrough result for $K$-user Gaussian interference channel (GIC) in \cite{CaJ} where IA helped achieve a sum-degrees of freedom (DoF) of $\frac{K}{2}$ asymptotically. The presence of diagonal network transfer matrices in $3$-S $3$-D I-MUN and the ability to diagonalize the network transfer matrices in $3$-S $3$-D MUN-D helped in readily adapting the IA problem formulation and the IA precoders proposed for the $K$-user GIC to the $3$-S $3$-D I-MUN and $3$-S $3$-D MUN-D settings. 

We now discuss if some simple on-off schemes can achieve a rate of half for every source-destination pair in $3$-S $3$-D MUN-D. This discussion is motivated by an interesting result for the $K$-user GIC with propagation delays \cite{CaJ2} where, it was shown that by appropriately adjusting the duration of transmission, at every destination all the interference symbols would arrive at even time slots while the desired symbol would arrive at odd time slots. Hence, using a simple on-off signaling each user could achieve a DoF of half almost surely. 

If on-off schemes could achieve a rate of half for every source-destination pair then, PBNA would be unnecessary for $3$-S $3$-D MUN-D. But, as we shall see, there exist networks where the proposed PBNA schemes are feasible while on-off schemes cannot achieve a rate of half for every source-destination pair. Unlike in the $K$-user GIC with propagation delays, the advantage offered by the on-off schemes cannot be completely realized in $3$-S $3$-D MUN-D because of the fundamental difference between the wireline system model for delay networks and the wireless system model involving propagation delays. The model discussed in Section \ref{sec2} assumed that the link delays are positive integer multiples of the symbol duration. This gave rise to the input-output relations in (\ref{Trmx}) and (\ref{trmx_tvarleks}). Whereas in the wireless setting, the symbol duration is independent of the propagation delays between the sources and destinations.

Formally, we define an on-off scheme in a $3$-S $3$-D MUN-D as follows.
\begin{definition}
 A transmission scheme where every source switches itself on and off so that the interference symbols at each of the destinations can be aligned in orthogonal time slots with respect to the desired symbols is defined as an on-off scheme.
\end{definition}

We shall now consider two examples of $3$-S $3$-D MUN-D where, in the first example, an on-off scheme can achieve a rate of half for every source-destination pair, and in the second, it is impossible to achieve a rate of half for every source-destination pair using on-off schemes.
% In both the examples, PBNA using transform approach and block time-varying LECs is feasible.

\begin{example} \label{eg_on_off_feasible}
Consider the $3$-S $3$-D MUN-D in Fig. \ref{On_Off_Feasible} where all the links have unit delay. It can be easily verified that the reduced feasibility condition of Theorem \ref{thm_MRMJ} are satisfied and hence, by Theorem \ref{thm_GC}, PBNA using transform approach and block time-varying LECs is feasible.
\begin{figure}[htbp]
\centering
\includegraphics[width=6.4in,height=4.4in]{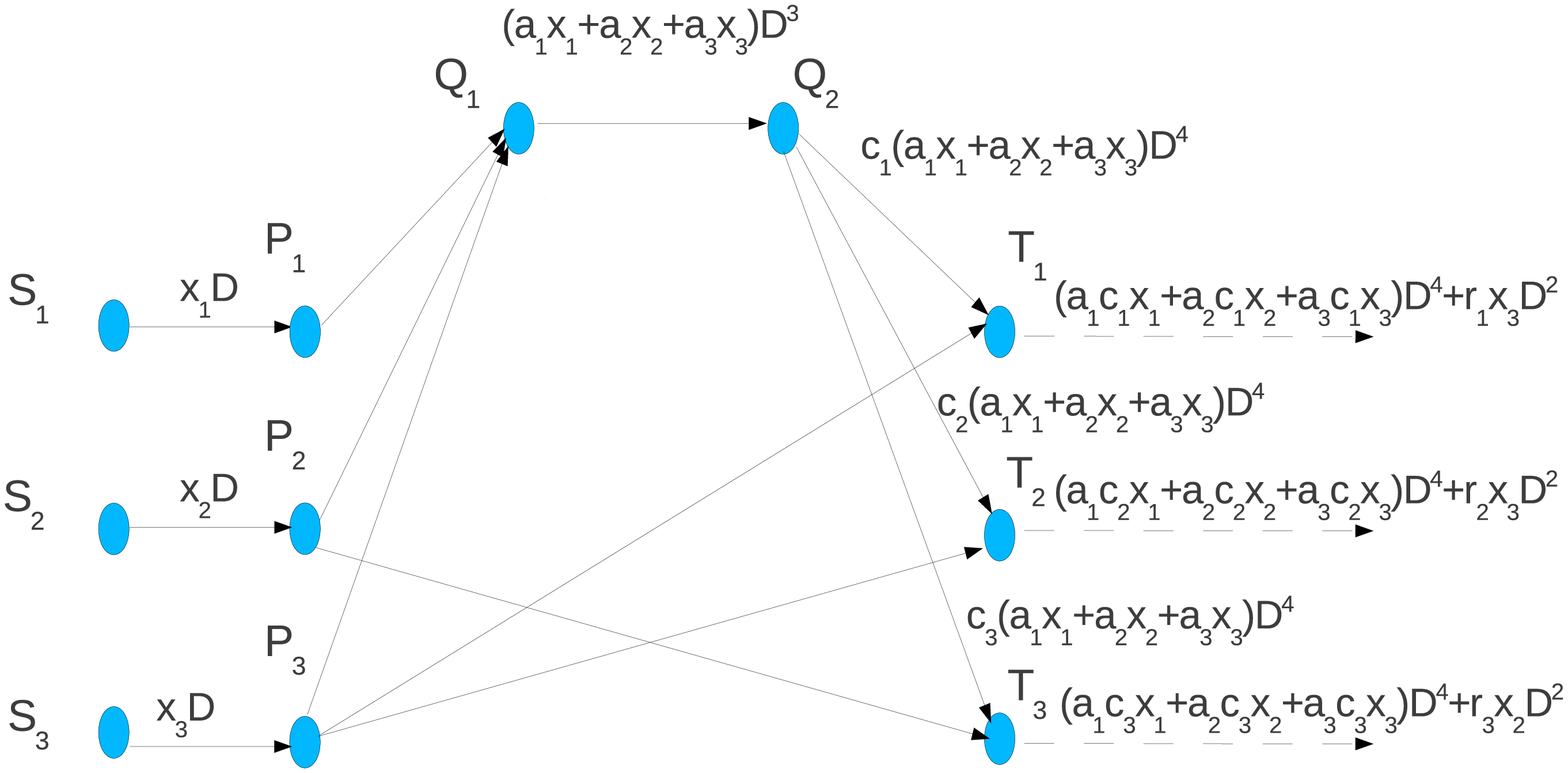}
\vspace{-1.5cm}
\caption{A $3$-S $3$-D MUN-D where, an on-off scheme can achieve a rate of half for every source-destination pair. PBNA using transform approach and block time-varying LEKs is feasible and achieves half-rate asymptotically for every source-destination pair.}
\label{On_Off_Feasible}	
\end{figure}
Now, consider the following on-off scheme. The destinations $T_1$ and $T_2$ delays the incoming symbol from the node $P_3$ by two time slots and then linear combines it with the incoming symbol from the node $Q_2$ so that the random process at the imaginary output links at $T_1$ and $T_2$ are given by $(a_1c_1x_1+a_2c_1x_2+a_3c_1x_3)D^4+r_1x_3D^4$ and $(a_1c_2x_1+a_2c_2x_2+a_3c_2x_3)D^4+r_2x_3D^4$ respectively. Choosing $r_1=-a_3c_1$ and $r_2=-a_3c_2$, the interference from $S_3$ at $T_1$ and $T_2$ are eliminated. Similarly, the interference from $S_2$ is eliminated at $T_3$. Let $a_i=c_i=1$, for all $i$ so that the output symbols at $T_1$, $T_2$, and $T_3$ at every time instant are given by $x_1+x_2$, $x_1+x_2$, and $x_1+x_3$ respectively. Now, the sources $S_1$, $S_2$, and $S_3$ are allowed to transmit only in odd, even, and even time slots respectively. Hence, the on-off scheme achieves a rate of half for every source-destination pair.
\end{example}

\begin{example}\label{eg_on_off_not_feasible}
Consider the $3$-S $3$-D MUN-D in Fig. \ref{On_Off_Not_Feasible} where all the links have unit delay. This network is essentially the same network as that in Example \ref{eg-PBNA_TINV_LEC} but with the last link before each of the destinations removed. Hence,  PBNA using transform approach and time-invariant LECs, and PBNA using transform approach and block time-varying LECs are feasible. 
\begin{figure}[htbp]
\centering
\includegraphics[width=6.4in,height=4.4in]{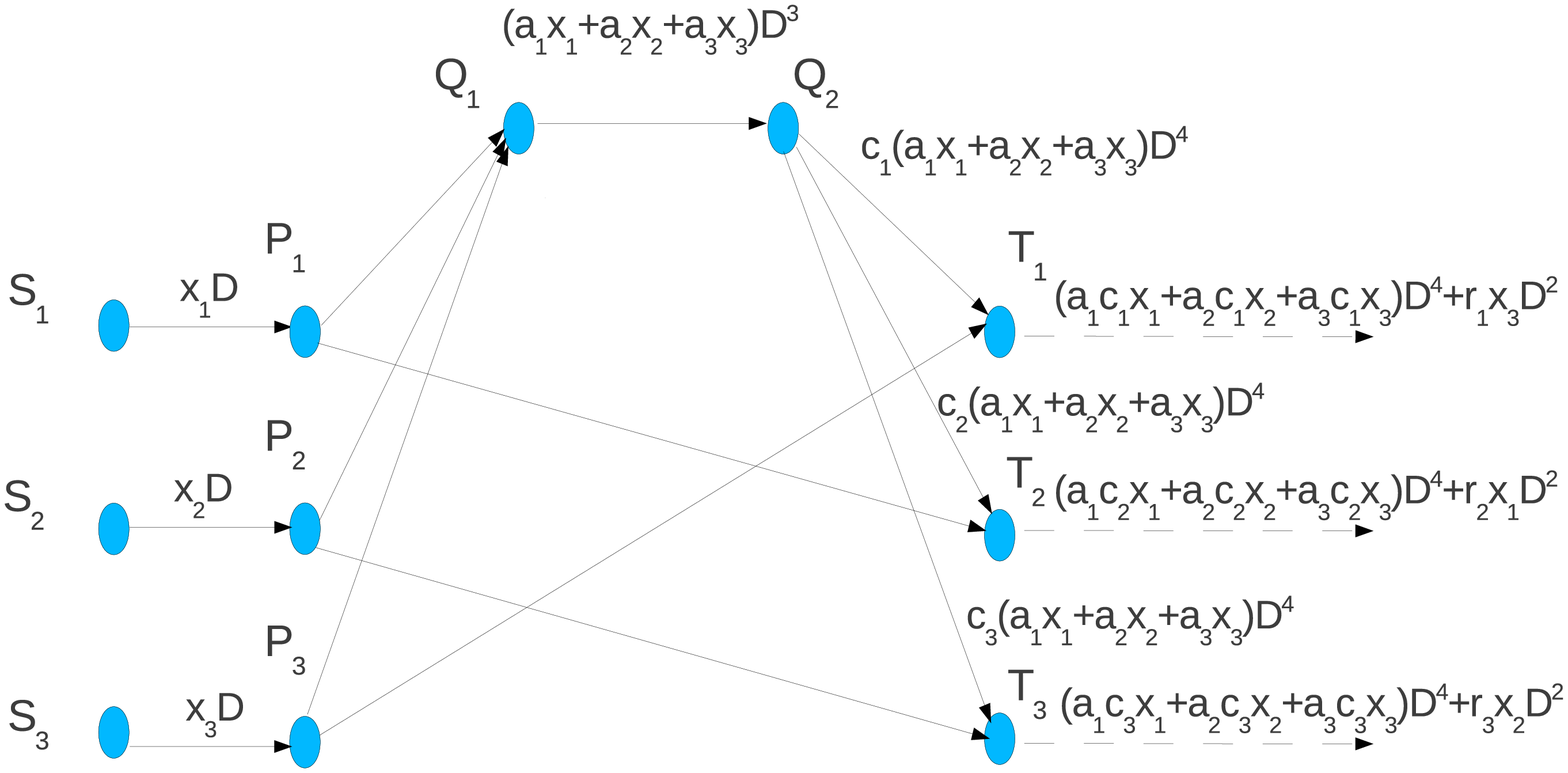}
\vspace{-1.5cm}
\caption{A $3$-S $3$-D MUN-D where, it is impossible to achieve a rate of half for every pair $S_i-T_i$ using on-off schemes while PBNA using transform approach and time-invariant LECs, and PBNA using transform approach and block time-varying LECs are feasible.}
\label{On_Off_Not_Feasible}	
\end{figure}
Now, consider the following on-off scheme. Like in Example \ref{eg_on_off_feasible}, the destinations $T_1$, $T_2$, and $T_3$ cancel the interference from $S_3$, $S_1$, and $S_2$ respectively by delaying the symbols received from the nodes $P_3$, $P_1$, and $P_2$. The output random process at $T_1$, $T_2$, and $T_3$ are now given by $(a_1c_1x_1+a_2c_1x_2)D^4$, $(a_2c_2x_2+a_3c_2x_3)D^4$, and $(a_3c_3x_3+a_1c_3x_1)D^4$ respectively. If $x_1$ and $x_2$ are to be received in orthogonal time slots at $T_1$ then, let $S_1$ transmit at odd time instants and $S_2$ transmit at even time instants. Similarly for $x_2$ and $x_3$ to be received in orthogonal time slots at $T_2$, $S_3$ has to transmit at odd time instants. Now, it is impossible for $x_3$ and $x_1$ to be received in orthogonal time slots at $T_3$. Similarly in the case of interference not being canceled at some destination nodes, an on-off scheme cannot achieve a rate of half for every source-destination pair.

To summarize, Example \ref{eg_on_off_not_feasible} showed that there exists a $3$-S $3$-D MUN-D where a PBNA scheme can achieve a rate of half (asymptotically) for every source-destination pair while on-off schemes cannot achieve a rate of half for every source-destination pair. This strengthens the case for PBNA in $3$-S $3$-D MUN-D. On the other hand,  Example \ref{eg_on_off_feasible} showed that there exists $3$-S $3$-D MUN-D where an on-off scheme can achieve a rate of half for every source-destination pair. Though one of the proposed PBNA schemes is feasible in Example \ref{eg_on_off_feasible}, it is unnecessary because it achieves a rate of half for every source-destination pair only asymptotically ($n'\rightarrow \infty$) unlike the on-off scheme. Nevertheless, identifying the class of $3$-S $3$-D MUN-D or wireline networks in general where on-off schemes can provide a rate guarantee of half for every source-destination pair remains open.

% Now, consider the following on-off scheme. The destinations $T_1$ and $T_2$ delays the incoming symbol from the node $P_3$ by two time slots and then linear combines it with the incoming symbol from the node $Q_2$ so that the symbols are the imaginary output links at $T_1$ and $T_2$ are given by $(a_1c_1x_1+a_2c_1x_2+a_3c_1x_3)D^4+r_1x_3D^4$ and $(a_1c_2x_1+a_2c_2x_2+a_3c_2x_3)D^4+r_2x_3D^4$ respectively. Choosing $r_1=-a_3c_1$ and $r_2=-a_3c_2$, the interference from $S_3$ at $T_1$ and $T_2$ are eliminated. Similarly, the interference from $S_2$ is eliminated at $T_3$. Let $a_i=c_i=1$, for all $i$ so that the output symbols at $T_1$, $T_2$, and $T_3$ at every time instant are given by $x_1+x_2$, $x_1+x_2$, and $x_1+x_3$ respectively. Now, sources $S_1$, $S_2$, and $S_3$ are allowed to transmit only in odd, even, and even time slots respectively. Hence, each source-destination pair achieves a rate of half.
\end{example}

\section{Conclusion}
\label{sec7}
Using DFT, an acyclic network with delay was transformed into $n$-instantaneous networks {\em without making use of memory at the intermediate nodes}. This was then applied to $3$-S $3$-D MUN-D and it was shown that there can exist networks where PBNA is feasible even by using time-invariant LECs which is not possible in the delay-free counterpart. The conditions for feasibility of network alignment were then generalized with time-varying LECs and posed as an algebraic geometry problem in Section \ref{subsec2} under Case $1$. PBNA using transform approach and block time-varying LECs was proposed, and it was shown that its feasibility conditions is the same as the reduced feasibility conditions of $3$-S $3$-D I-MUN. It was also shown that if  PBNA using transform approach and block time-varying LECs is infeasible then, PBNA using transform approach and time-invariant LECs is also infeasible. The complete role of network topology in determining the feasibility of PBNA in $3$-S $3$-D I-MUN appears in \cite{MDRJMV} which is a more recent version of \cite{MRMJ}. Hence, the same is applicable to PBNA using transform approach and time-invariant LECs, and PBNA using transform approach and block time-varying LECs.

It was also shown that there exists a $3$-S $3$-D MUN-D where PBNA using time-varying LECs is feasible while the other two proposed PBNA schemes are infeasible. The following questions however remain open.
 
\begin{enumerate}
\item Under what condition is PBNA using time-varying LECs feasible when the reduced feasibility conditions are not satisfied?
\item The sufficiency of the reduced feasibility conditions for feasibility of PBNA using transform approach and time-invariant LECs remains open, i.e., is PBNA using transform approach and time-invariant LECs feasible whenever PBNA using transform approach and block time-varying LECs is feasible?  
\end{enumerate}

Optimizing Groebner basis algorithms for specific networks is crucial to verifying the condition of Theorem \ref{thm4} and hence, is of significant interest.

Optimality of PBNA, when feasible, for the class of $3$-S $3$-D MUN-D remains to be investigated. PBNA for $3$-S $3$-D \mbox{MUN-D} discussed in this paper as well as that for $3$-S $3$-D I-MUN, could be extended to the case where each source-destination pair has a min-cut greater than one. Another interesting direction of future research is extending PBNA to the case of arbitrary number of sources and destinations with arbitrary message demands.

Though the transform method described was claimed to be applicable for acyclic networks having $M(D)$ whose elements are only polynomial functions in $D$, it can also be applied to networks having $M(D)$ whose elements are rational functions in $D$ by multiplying by the LCM of all the denominators of the rational functions, at all the sinks. This gives a finite $d_{max}$. The same applies to cyclic networks too.

\section*{Acknowledgement}
This work was supported  partly by the DRDO-IISc program on Advanced Research in
Mathematical Engineering through a research grant as well as the INAE Chair
Professorship grant to B.~S.~Rajan.

\appendices
 \section{Proof of Theorem \ref{diagthm}}
\label{appen_thm1}
\begin{IEEEproof}
For $0 \leq j \leq n-1$, we have
\begin{align}
\nonumber A\begin{bmatrix}
   I_{\mu} \\
   \alpha^{j} I_{\mu} \\
   \alpha^{2j} I_{\mu} \\
   \vdots \\
   \alpha^{(n-1)j} I_{\mu} \\
 \end{bmatrix}_{n\mu\times\mu}
 =&\begin{bmatrix}
   \sum_{i=0}^{L}\alpha^{ij}A^{(i)} \\
   \sum_{i=0}^{L}\alpha^{(i+1)j}A^{(i)} \\
   \vdots\\
   \sum_{i=0}^{L}\alpha^{(i+n-1)j}A^{(i)} \\
 \end{bmatrix}_{n\nu\times\mu}\\
\label{diag3} =&\begin{bmatrix}
   I_{\nu} \\
   \alpha^{j}I_{\nu} \\
   \alpha^{2j}I_{\nu} \\
   \vdots \\
   \alpha^{(n-1)j}I_{\nu} \\
 \end{bmatrix}_{n\nu\times\nu}
 \left(\sum_{i=0}^{L}\alpha^{ij}A^{(i)}\right).
\end{align}
%%%%%%%%%%%%
\noindent From (\ref{diag3}), we have
\begin{equation}
\label{eqn-diag-imp}
AQ_{\mu}=Q_{\nu}\hat{A}.
\end{equation}
The matrix $Q_{\mu}$ defined in (\ref{diagqmu}) can also be written as \mbox{$Q_{\mu}=F\otimes I_{\mu}$} ($i.e.$, the Kronecker product of $F$ and $I_{\mu}$). Similarly, $Q_{\nu}=F\otimes I_{\nu}$. The inverse of the matrix $F$ is given by \cite{Bla}
%%%%%%%%%%%%
\begin{align*}
F^{-1}=&n^{-1}\begin{bmatrix}
    1 & 1 & 1 & \cdots & 1 \\
    1 & \alpha^{-1} & \alpha^{-2} & \cdots & \alpha^{-(n-1)} \\
    1 & \alpha^{-2} & \alpha^{-4} & \cdots & \alpha^{-2(n-1)} \\
    \vdots & \ddots & \ddots & \ddots & \ddots \\
    1 & \alpha^{-(n-1)} & \alpha^{-2(n-1)} & \cdots & \alpha^{-(n-1)(n-1)} \\
  \end{bmatrix}.
\end{align*}
%%%%%%%%%%%%
\noindent 
Now, $det(Q_{\mu})=[det(F)]^{\mu}[det(I_{\mu})]^{n}\not=0$ and $Q_{\mu}^{-1}=F^{-1}\otimes I_{\mu}$ ($\because Q_{\mu}Q_{\mu}^{-1}=(F\otimes I_{\mu})(F^{-1}\otimes I_{\mu})=(FF^{-1})\otimes I_{\mu}=I_{n\mu}$). Hence, from (\ref{eqn-diag-imp}) 
\begin{equation*}
A=Q_{\nu}\hat{A}Q_{\mu}^{-1}.
\end{equation*}
\end{IEEEproof}

\section{Proof of Theorem \ref{thm2}}
\label{appen_thm2}
\begin{IEEEproof}
{\em \underline{If Part}: }
If both the conditions are satisfied after the assignment of values to $\underline{\varepsilon}$, then sink-$j$ can invert the matrix  $[\hat{M}_{i_1j}^{(k)}(l_{i_1}) ~\hat{M}_{i_2j}^{(k)}(l_{i_2}) ~\cdots ~\hat{M}_{i_{s'}j}^{(k)}(l_{i_{s'}}) ]$ and decode the required input symbols  without any interference.

{\em\underline{ Only If Part}: }If Condition 1) is not satisfied for some sink-$j$ then, sink-$j$ receives superposition of the required input symbols and interference from other input symbols, from which it cannot decode the required input symbols.

If Condition 2) is not satisfied for some sink-$j$ then, sink-$j$ cannot invert the matrix \\ {$[\hat{M}_{i_1j}^{(k)}(l_{i_1}) ~\hat{M}_{i_2j}^{(k)}(l_{i_2}) ~\cdots ~\hat{M}_{i_{s'}j}^{(k)}(l_{i_{s'}}) ]$} which is necessary for decoding the input symbols.
\end{IEEEproof}

\section{Proof of Lemma \ref{nonzeros}}
\label{appen_nonzeros}
\begin{IEEEproof}
Following the terminology developed so far, for some $n>>d_{max}$ and for $0\leq t \leq n-1,$ let
\[
\underline{X}^{(t)}=
\begin{bmatrix} \underline{X_1}^{(t)}\\
\underline{X_2}^{(t)} \\
\vdots\\
\underline{X_s}^{(t)}\\
\end{bmatrix}.
\]

Then, by (\ref{eqnno3}), (\ref{instant}) and the structure of the $\hat{M}_{ij}^{(t)}$ matrices, we have for $0\leq t \leq n-1,$
%%%%%%
\begin{equation}
\label{eqnno4}
\underline{Y_j}^{(t)} =
\left(\sum_{d=0}^{d_{max}}{\alpha}^{d(n-1-t)}M_j^{(d)}\right)\underline{X}^{(t)}
,
\end{equation}
%%%%%
where $M_j^{(d)}$ is a $\nu_j \times \mu$ matrix over $\mathbb{F}_{p^m}$ (considered as a subfield of $\mathbb{F}_{p^b}$) such that  
%%%%5
\begin{equation}
\label{eqnno5}
M_j(D)=\sum_{d=0}^{d_{max}}M_j^{(d)}D^d.
\end{equation}
%%%%%

We define a collection of ring homomorphisms $\phi_t:\mathbb{F}_{p^m}(D) \rightarrow \mathbb{F}_{p^{b}}$ for $0\leq t \leq n-1,$ given by $\phi_t(D)={\alpha}^t$ and $\phi_t$ as an identity map on $\mathbb{F}_{p^m}$. For some matrix $P(D)$ over $\mathbb{F}_{p^m}(D),$ we also define $\phi_t(P(D))$ to be equal to the matrix $P$ with elements in $\mathbb{F}_{p^b}$ that are the $\phi_t$-images of the corresponding elements of $P(D).$ Then, from (\ref{eqnno4}) and (\ref{eqnno5}), we have 
%%%%%5
\begin{equation}
\label{eqnno6}
\underline{Y_j}^{(n-1-t)} = \phi_t(M_j(D))\underline{X}^{(n-1-t)},
\end{equation}
%%%%%
for $0 \leq t \leq n-1.$ Clearly, the zero-interference conditions satisfied in the $M_j(D)$ matrices continue to hold in the $\phi_t(M_j(D))$ matrices, for $0 \leq t \leq n-1$ and for any sink-$j.$ Having satisfied the zero-interference conditions, to recover the source processes demanded by each sink-$j$ at time instant $n-1-t,$ the invertibility conditions also have to be satisfied, i.e., 
\begin{equation}
\label{eqnno7}
\prod_{j=1}^{r}det\left(\phi_t(M_j'(D))\right)\neq 0,
\end{equation}
%%%%%
where $M_j'(D)$ is the square submatrix of $M_j(D)$ indicating the source processes that are demanded by sink-$j.$ But then, we have
\begin{equation}
\label{eqnno10}
det\left(\phi_t(M_j'(D))\right) = \phi_t(det(M_j'(D)))
\end{equation}
and thus 
\begin{align*}
\prod_{j=1}^{r}det\left(\phi_t(M_j'(D))\right)&=\prod_{j=1}^{r}
\phi_t\left(det(M_j'(D))\right)\\
&=\phi_t\left(\prod_{j=1}^{r}det(M_j'(D))\right)\\
&=\phi_t(f(D))\\
&=f({\alpha}^t),\\
\end{align*}
where $f(D)$ is as defined in (\ref{eqnno1}). Clearly, $f({\alpha}^t)\neq 0$ implies that (\ref{eqnno7}) is satisfied and the source processes demanded at each sink can be recovered at time instant $n-1-t$ in the transform approach. Similarly, if the sink demands are satisfied at time instant $n-1-t$ in the transform approach, clearly we must have $f({\alpha}^t)\neq 0.$ This holds for $0 \leq t \leq n-1,$ thus proving the lemma.
\end{IEEEproof}

\section{Proof of Theorem \ref{thmexistence}}
\label{appen_thmexistence}
\begin{IEEEproof}
\textit{\underline{If part}:} Let $\mathbb{F}_{p^m}$ be the field over which the feasible network code has been obtained for $({\cal G},{\cal C}).$  Consider the polynomial $f(D)$ (given by (\ref{eqnno1})) with coefficients from $\mathbb{F}_{p^m}$. Let $\mathbb{F}_{p^{m'}}$ be the splitting field of this polynomial, i.e., a suitable smallest extension field of  $\mathbb{F}_{p^m}$ in which $f(D)$ splits into linear factors.

Let
\[
p^{m'}-1={\prod_{b=1}^{b=k}}p_{b}^{m_b'},
\]
where each $p_b$ is some prime and $m_b$ is some positive integer. 

By Lemma \ref{nonzeros}, the choice of $\alpha$ to be used for the DFT operations should be such that $f(\alpha^t) \neq 0,$ for any $0 \leq t \leq n-1.$ We now show that such an $\alpha$ exists and can be chosen. 

Let $\mathbb{F}_{p^{{m''}}}$ be an extension field of $\mathbb{F}_{p^{m'}}.$ Clearly, $\left(p^{m'}-1\right)|\left(p^{{m''}}-1\right).$ However, we further demand that $\mathbb{F}_{p^{{m''}}}$  is such that
%%%%
\begin{equation}
\label{eqnno2}
p^{{m''}}-1={\prod_{b=1}^{b=k}}p_b^{m_b''}{\prod_{c=1}^{c=k'}}p_c^{m_c''},
\end{equation}
%%%%%%
where each $p_c$ is some prime  and $m_b''$ and $m_c''$ are some positive integer such that $p_b \neq p_c$ for $1\leq b \leq k$ and $1\leq c \leq k'.$ Note that $m_b'' \geq m_b$ for $1\leq b \leq k.$ Such extensions of $\mathbb{F}_{p^{{m''}}}$ can indeed be obtained. For example, $\mathbb{F}_{p^{{m''}}}$  can be considered to be the smallest field which contains $\mathbb{F}_{p^{m'}}$ and $\mathbb{F}_{p^{\tilde{m}}},$ $\tilde{m}$ being some positive integer coprime with $m^\prime$. Then clearly $\mathbb{F}_{p^{{m''}}}$ is such that (\ref{eqnno2}) holds.

Following the notations of Section \ref{sec3}, we now pick $\alpha \in \mathbb{F}_{p^{{m''}}}$ (where ${m''}$ satisfies (\ref{eqnno2})) such that the following condition holds.
%%%%
\begin{itemize}
\item The cyclic subgroup $\left\{1,\alpha,...,\alpha^{n-1}\right\}$ of
$\mathbb{F}_{p^{{m''}}}\backslash\left\{0\right\}$ with order $n (n>1)$ is such
that $n$ and ${\prod_{b=1}^{b=k}}p_b^{m_b''}$ are coprime.
\end{itemize}
%%%%5
Such an $\alpha$ can be obtained by choosing $\alpha$ from the subgroup of $\mathbb{F}_{p^{m''}}\backslash \left\{0\right\}$ with
$n={\prod_{c=1}^{c=k'}}p_c^{m_c''}$ elements. We now claim that using such an $\alpha$ for the DFT will result in a feasible transform network code for $({\cal G},{\cal C}).$ The proof is as follows. 

We first note that the zero-interference conditions are satisfied irrespective of the choice of $\alpha$ in the DFT operations. As for the invertibility conditions, by Lemma \ref{nonzeros}, it is clear that as long as $f(\alpha^t)\neq 0$ for $0 \leq t \leq n-1,$ we have a feasible transform network code for $({\cal G},{\cal C}).$ Suppose $f(\alpha^t)=0$ for some $1 \leq t \leq n-1.$ Let $n_t$ be the order of $\alpha^t,$ i.e., the number of elements in the cyclic group generated by $\alpha^t.$ Then $n_t|n$ and also $n_t|{\prod_{b=1}^{b=k}}p_b^{m_b''}$ as $\alpha^t \in \mathbb{F}_{p^{m'}}$ is a zero of $f(D).$ However this leads to a contradiction as $n$ shares no common prime factor with ${\prod_{b=1}^{b=k}}p_b^{m_b''}.$ Thus no $\alpha^t,~1 \leq t \leq n-1,$ can be a zero of $f(D).$ This, coupled with the given fact that $f(1)\neq 0,$ proves the claim and hence the if part of the theorem.

\textit{\underline{Only If part}:} Let $\mathbb{F}_{p^m}$ be the field over which a feasible transform network code has been defined for $({\cal G},{\cal C}),$ i.e., there exists a choice of LECs and for $\alpha$ from $\mathbb{F}_{p^m}$ using which the zero-interference and the invertibility constraints have been satisfied in the transform domain. Note that a choice for the LECs implies that the matrices $M_j(D)$ given by (\ref{eqnno3}) are well defined. We will now prove that the invertibility and the zero-interference constraints also hold in these $M_j(D)$ matrices for all sinks, i.e., for $1 \leq j \leq r.$ 

We first prove the invertibility conditions. Towards that end, let $\hat{M}_j^{(n-1)}$ be defined as the $\nu_j \times \mu$ transfer matrix at time instant $n-1$ from all the sources to sink-$j$ in the transform approach, i.e.,
%%%%%
\begin{equation}
\label{eqnno8}
\hat{M}_j^{(n-1)} = \left[\hat{M}_{1j}^{(n-1)} \hat{M}_{2j}^{(n-1)} ...
\hat{M}_{sj}^{(n-1)} \right].
\end{equation}
%%%%
By the structure of the $\hat{M}_{ij}^{(n-1)}$ matrices, we have $\hat{M}_j^{(n-1)} = \sum_{d=0}^{d=d_{max}}M_j^{(d)}=M_j(D)|_{D=1}.$ Let $\hat{M}_j^{'(n-1)}$ be the submatrix of $\hat{M}_j^{(n-1)}$ which is known to be invertible, as it is given that the invertibility conditions for the transform network code are all satisfied. 

The invertibility conditions for sink-$j$ of the usual (non-transform) network code for $({\cal G},{\cal C})$ demand a suitable submatrix $M'_j(D)$ of the matrix $M_j(D)$ to be invertible. Note however that $M'_j(D)|_{D=1} = \hat{M}_j^{'(n-1)},$ by (\ref{eqnno8}). Therefore, we have $det\left(\hat{M}_j^{'(n-1)}\right) = det\left(M'_j(D)|_{D=1}\right)\neq 0.$ As in (\ref{eqnno10}), we have $det\left(M'_j(D)\right)|_{D=1}=det\left(M'_j(D)|_{D=1}\right) \neq 0.$ Therefore, $det\left(M'_j(D)\right) \neq 0,$ i.e., $det\left(M'_j(D)\right)$ is a non-zero polynomial in $D.$ Because the choice of the sink was arbitrary, it is clear that the invertibility conditions hold for each sink in the usual network code for $({\cal G},{\cal C}).$  By (\ref{eqnno1}), we also have $(D-1)\nmid f(D).$

We now prove the zero-interference conditions. The zero-interference conditions in the transform domain can be interpreted as follows. Having ordered the input processes at the source-$i,$ suppose the sink-$j$ does not demand the $k^{th}$ process from the source-$i.$ Then the matrix $\hat{M}_{ij}$ is such that $k^{th}$ column of $\hat{M}_{ij}^{(t)}$ is an all-zero column for all $0 \leq t \leq n-1.$ To prove that the zero-interference conditions continue to hold in the usual network code for $({\cal G},{\cal C}),$ we must then prove that for each source-$i$, each particular sink-$j$  and each $k$ (such that the $k^{th}$ input process at source-$i$ is not demanded at sink-$j$, the $k^{th}$ columns of $M_{ij}^{(d)}$ matrices are all-zero for $0\leq d \leq d_{max}$ where, $M_{ij}^{(d)}, 0 \leq d \leq d_{max}$  are matrices such that 
\[
M_{ij}(D) = \sum_{d=0}^{d_{max}}M_{ij}^{(d)}D^d.
\] 

This is seen by observing the structure of the $M_{ij}$ matrix, which is defined by (\ref{bigeqn2}). Using Theorem \ref{diagthm} and with  $\beta_a = \alpha^{a}$, we have (\ref{eqnno9}). Comparing the submatrices of $M_{ij}$ from (\ref{bigeqn2}) and (\ref{eqnno9}), we see that if the $k^{th}$ column of the $\hat{M}_{ij}^{(t)}$ matrices is all-zero for all $0 \leq t \leq n-1,$ then the $k^{th}$ columns of $M_{ij}^{(d)}$ matrices are all-zero for $0\leq d \leq d_{max}$. As the choice of source-$i$ and sink-$j$ are arbitrary, it is clear that the zero-interference conditions continue to hold in the $M_{ij}(D)$ matrices for all $1 \leq i \leq s$ and $1 \leq j \leq r.$ This proves the only if part of the theorem and hence, the theorem is proved.	 
%%%%%%%%
\begin{figure*}
%%%%%%%%
\small
\begin{align}
\nonumber
&M_{ij}=Q_{\nu_{j}}\hat{M}_{ij}Q_{\mu_{i}}^{-1}\\
\nonumber
&=
\begin{bmatrix}
I_{\nu_j} & I_{\nu_j} & I_{\nu_j} & \cdots & I_{\nu_j} \\
I_{\nu_j} & {\beta}_{1}I_{\nu_j} & {\beta}_{1}^2I_{\nu_j} & \cdots &
{\beta}_{1}^{n-1}I_{\nu_j} \\
\vdots & \vdots & \vdots & \cdots & \vdots \\
I_{\nu_j} & {\beta}_{n-1}I_{\nu_j} & {\beta}_{n-1}^{2}I_{\nu_j} & \cdots &
{\beta}_{n-1}^{n-1}I_{\nu_j} \\
\end{bmatrix}
\begin{bmatrix}
          {\hat{M}_{ij}^{(n-1)}} & 0 & 0 & \cdots & 0 \\
          0 & {\hat{M}_{ij}^{(n-2)}} & 0 & \cdots & 0 \\
          \vdots & \vdots & \vdots & \cdots & \vdots \\
          0 & 0 & 0 & \cdots & {\hat{M}_{ij}^{(0)}} \\
\end{bmatrix}\\
\nonumber
&\hspace{2cm}\times \begin{bmatrix}
I_{\mu_i} & I_{\mu_i} & I_{\mu_i} & \cdots & I_{\mu_i} \\
I_{\mu_i} & {\beta}_{1}^{-1}I_{\mu_i} & {\beta}_{1}^{-2}I_{\mu_i} & \cdots &
{\beta}_{1}^{-(n-1)}I_{\mu_i} \\
\vdots & \vdots & \vdots & \cdots & \vdots \\
I_{\mu_i} & {\beta}_{n-1}^{-1}I_{\mu_i} & {\beta}_{n-1}^{-2}I_{\mu_i} & \cdots &
{\beta}_{n-1}^{-(n-1)}I_{\mu_i} \\
\end{bmatrix} \\ 
\label{eqnno9}
&=
\begin{bmatrix}
\sum_{t=0}^{n-1}\hat{M}_{ij}^{(t)} &
\sum_{t=0}^{n-1}\beta^{-1}_{n-1-t}\hat{M}_{ij}^{(t)} &  \cdots &
\sum_{t=0}^{n-1}\beta^{-(n-1)}_{n-1-t}\hat{M}_{ij}^{(t)} \\
\sum_{t=0}^{n-1}\beta^{n-1-t}_{1}\hat{M}_{ij}^{(t)} &
\sum_{t=0}^{n-1}\hat{M}_{ij}^{(t)} &  \cdots &
\sum_{t=0}^{n-1}\beta^{n-1-t}_{1}\beta_{n-1-t}^{-(n-1)}\hat{M}_{ij}^{(t)} \\
\vdots & \vdots & \cdots & \vdots \\
 \sum_{t=0}^{n-1}\beta^{n-1-t}_{n-1}\hat{M}_{ij}^{(t)} &
\sum_{t=0}^{n-1}\beta^{n-1-t}_{n-1}\beta_{n-1-t}^{-1}\hat{M}_{ij}^{(t)} & \cdots
&
\sum_{t=0}^{n-1}\hat{M}_{ij}^{(t)} \\ 
\end{bmatrix}.
\end{align}
%%%%%%
~\\
\hrule
\end{figure*}
\end{IEEEproof}

\section{Proof of Lemma \ref{lemma2}}
\label{appen_lemma2}
\begin{IEEEproof}
Consider $M_{ij}$ as defined in (\ref{bigeqn2}) which is a circulant matrix of size $(2n'+1)\times(2n'+1)$. The diagonal elements of $\hat{M}_{ij}$, i.e., $\hat{M}_{ij}^{(k)}$, for $k =0,1,\cdots,2n'$, are the eigen values of the matrix $M_{ij}$. Note that the eigen values are equal to $(2n'+1)$-point finite-field DFT of the first row of $M_{ij}$. Since, the min-cut from source-$i$ to sink-$j$ is greater than or equal to $1$, by Menger's Theorem, there exists at least one link-disjoint directed path from source-$i$ to sink-$j$. Consider one such directed path consisting of links $e_1$, $e_2,~.., ~e_t$. Now, we can assign the values $\alpha_{1,e_1}=1$, $\beta_{e_i,e_{i+1}}=1$ $(i \in \{1,2,..,t-1\})$, $\epsilon_{e_t,1}=1$ and assign values of $0$ to all the other LECs. By such an assignment of values to the LECs, exactly one among $M_{ij}^{(0)}$, $M_{ij}^{(1)}$, $.. $, $M_{ij}^{(d_{max})}$ is equal to $1$. This implies that all the eigen values of $M_{ij}$ are non-zero. Hence, the diagonal elements of $\hat{M}
_{ij}$ are non-zero polynomials in $\underline{\varepsilon}$ and so is its determinant.
\end{IEEEproof}

\section{Proof of Lemma \ref{lemma_newly_added}}
\label{appen_lemma_newly_added}
\begin{IEEEproof}
\underline{\em If part:} Euler's theorem \cite{DuF-book} states that if two positive integers $a$ and $b$ are coprime then, $b$ divides $a^{\phi(b)}-1$ where $\phi$ represents the Euler's totient function. If $2n'+1 < p$ then, $2n'+1$ and $p$ are coprime. If $2n'+1 \geq p$ then, $p$ and $2n'+1$ are coprime iff $p$ does not divide $2n'+1$. Hence, by Euler's theorem, $2n'+1|p^{\phi(2n'+1)}-1$ if $p \nmid 2n'+1$. Thus if $p \nmid 2n'+1$ then, $2n'+1|p^{m}-1$, for all $m$ such that $\phi(2n'+1)|m$.\\
\underline{\em Only If part:} If $2n'+1$ divides $p^m-1$ for some positive integer $m$ then, $p^m-1=r(2n'+1)$ for some positive integer $r$. So, $p^m-(2n'+1)r=1$ which means that $p$ and $2n'+1$ must be coprime. Since $p$ is prime, $p \nmid 2n'+1$.
\end{IEEEproof} 

\section{Proof of Theorem \ref{thm3}}
\label{appen_thm3}
\begin{IEEEproof}
To exactly recover ${X_1^\prime}^{n'+1}$, ${X_2^\prime}^{n'}$ and ${X_3^\prime}^{n'}$ at the sinks-$1$, $2$ and $3$ respectively, it is sufficient that the following network alignment conditions are satisfied.
%%%%%%%%%%%%%
\begin{align}
\label{thm2_pf1}
&\hat{M}_{21}V_2 = \hat{M}_{31}V_3\\
\label{thm2_pf2}
&\text{Col}\left(\hat{M}_{32}V_3\right) \subset \text{Col}\left(\hat{M}_{12}V_1\right)\\
\label{thm2_pf3}
&\text{Col}\left(\hat{M}_{23}V_2\right) \subset \text{Col}\left(\hat{M}_{13}V_1\right)\\
\nonumber
&\mbox{Rank}[\hat{M}_{11}V_1 ~~\hat{M}_{21}V_2]=2n'+1 \\
\label{thm2_pf4}
\Leftrightarrow ~&\mbox{Rank}[V_1 ~~\hat{M}_{11}^{-1}\hat{M}_{21}V_2]=2n'+1\\
\nonumber
&\mbox{Rank}[\hat{M}_{22}V_2 ~~\hat{M}_{12}V_1]=2n'+1\\
\label{thm2_pf5}
\Leftrightarrow ~&\mbox{Rank}[\hat{M}_{12}^{-1}\hat{M}_{22}V_2 ~~V_1]=2n'+1\\
\nonumber
&\mbox{Rank}[\hat{M}_{33}V_3 ~~\hat{M}_{13}V_1]=2n'+1\\
\label{thm2_pf6}
\Leftrightarrow ~&\mbox{Rank}[\hat{M}_{13}^{-1}\hat{M}_{33}V_3 ~~V_1]=2n'+1
\end{align}
%%%%%%%%%%%%%
Note that from Lemma \ref{lemma2}, inverse of $\hat{M}_{ij}$ $\forall$ $(i,j)\in \{1,2,3\}$ is well-defined. It is easily seen that the choice of $V_1$, $V_2$ ,and $V_3$ in (\ref{thm2_pf7})-(\ref{thm2_pf9}) satisfy the conditions (\ref{thm2_pf1})-(\ref{thm2_pf3}). Suppose that (\ref{thm2_pf4})-(\ref{thm2_pf6}) are satisfied. Let
%%%%%%%%%%%%%
\begin{align*}
&f_1(\underline{\varepsilon})= det([V_1 ~~\hat{M}_{11}^{-1}\hat{M}_{21}V_2])\\
&f_2(\underline{\varepsilon})= det([\hat{M}_{12}^{-1}\hat{M}_{22}V_2 ~~V_1])\\
&f_3(\underline{\varepsilon})=det([\hat{M}_{13}^{-1}\hat{M}_{33}V_3 ~~V_1])\\
&f_4(\underline{\varepsilon})=\prod_{(i,j)\in\{1,2,3\}}det(M_{ij})\\
&f(\underline{\varepsilon})=\prod_{i=1}^{4}f_i(\underline{\varepsilon}).
\end{align*} Since $f_1(\underline{\varepsilon})$, $f_2(\underline{\varepsilon})$ and $f_3(\underline{\varepsilon})$ are non-zero
polynomials in $\underline{\varepsilon}$, $f(\underline{\varepsilon})$ is also a non-zero polynomial in $\underline{\varepsilon}$. Hence, by Lemma $1$ in \cite{KoM}, for a sufficiently large field size, there exists an assignment of
values to $\underline{\varepsilon}$ such that the network alignment conditions are satisfied. Since $p \nmid 2n'+1$, by Lemma \ref{lemma_newly_added}, for a sufficiently large $m$ (in particular, $m$ such that $\phi(2n'+1)|m$ where $\phi$ represents the Euler's totient function), there exists an assignment of values to $\underline{\varepsilon}$ such that the network alignment conditions are satisfied. Hence, the theorem is proved.
\end{IEEEproof}

\section{Proof of Lemma \ref{lemma3}}
\label{appen_lemma3}
\begin{IEEEproof}
If we assign
$\underline{\varepsilon}^{(-d_{max})}=\underline{\varepsilon}^{(-d_{max}+1)} =\ldots= \underline{\varepsilon}^{(n-1)}=\underline{\varepsilon}$, $M_{ij}$ in (\ref{bigeqn1_unicast}) becomes a circulant matrix. Since, the min-cut from source-$i$ to sink-$j$ is greater than or equal to $1$, by Menger's Theorem, there exists at least one link-disjoint directed path from source-$i$ to sink-$j$. Consider one such directed path consisting of links $e_1$, $e_2,~.., ~e_t$. Now, assign the values $\alpha_{1,e_1}=1$, $\beta_{e_i,e_{i+1}}=1$, for \mbox{$i =1,2,\cdots,t-1$}, $\epsilon_{e_t,1}=1$ and assign values of $0$ to all the other LECs. By such an assignment, $M_{ij}$ becomes a permuted identity matrix whose determinant is non-zero. Hence, the determinant of $M_{ij}$ is a non-zero polynomial in $\underline{\varepsilon}'$.
\end{IEEEproof}

\section{Proof of Theorem \ref{thm4}}
\label{appen_thm4}
\begin{IEEEproof}
To exactly recover ${X_1^\prime}^{n_1}$, ${X_2^\prime}^{n_2}$ and
${X_3^\prime}^{n_3}$ at $T_1$, $T_2$ and $T_3$ respectively, it is sufficient if the following network alignment conditions are satisfied.
%%%%%%%%%%%%%
\begin{align}
\label{thm3_pf1}
&\text{Span}({M}_{31}V_3) \subset \text{Span}({M}_{21}V_2)\\
\label{thm3_pf2}
&\text{Span}({M}_{32}V_3) \subset \text{Span}({M}_{12}V_1)\\
\label{thm3_pf3}
&\text{Span}({M}_{23}V_2) \subset \text{Span}({M}_{13}V_1)\\
\nonumber
&\mbox{Rank}[{M}_{11}V_1 ~~{M}_{21}V_2]=n_1+n_2 \\
\label{thm3_pf4}
\Leftrightarrow ~&\mbox{Rank}[V_1 ~~{M}_{11}^{-1}{M}_{21}V_2]=n_1+n_2\\
\nonumber
&\mbox{Rank}[{M}_{22}V_2 ~~{M}_{12}V_1]=n_1+n_2\\
\label{thm3_pf5}
\Leftrightarrow ~&\mbox{Rank}[{M}_{12}^{-1}{M}_{22}V_2 ~~V_1]=n_1+n_2\\
\nonumber
&\mbox{Rank}[{M}_{33}V_3 ~~{M}_{13}V_1]=n_1+n_3\\
\label{thm3_pf6}
\Leftrightarrow ~&\mbox{Rank}[{M}_{13}^{-1}{M}_{33}V_3 ~~V_1]=n_1+n_3
\end{align}
\noindent Note that  (\ref{thm3_pf4})-(\ref{thm3_pf6}) are also necessary conditions whereas (\ref{thm3_pf1})-(\ref{thm3_pf3}) are necessary when $(n_1+n_2)=(n_1+n_3)=n$ \mbox{($\because$ $n_1 \geq n_2 \geq n_3$)}. Clearly, (\ref{thm3_pf4}) and (\ref{thm3_pf6}) cannot be satisfied when $(n_1+n_2)>n$ and $(n_1+n_3)>n$ respectively. Therefore, $(n_1+n_2) \leq n$. 

The choice of $V_2$ and $V_3$ in (\ref{na2_V2_and_V3}) ensures that the conditions (\ref{thm3_pf2}) and (\ref{thm3_pf3}) are satisfied. To satisfy (\ref{thm3_pf1}), we have to ensure that
%%%%%%%%%%%%%%
\begin{align}
\nonumber
&M_{31}V_3=M_{21}V_2C\\
\nonumber
\Leftrightarrow
~&M_{32}^{-1}M_{12}V_1B=M_{31}^{-1}M_{21}M_{23}^{-1}M_{13}V_1AC\\
\label{matrices_eq}
\Leftrightarrow ~&V_1B=UV_1AC
\end{align}
%%%%%%%%%%%%%%
\noindent is satisfied. In order to satisfy (\ref{matrices_eq}), every element
of $UV_1A$ must be equal to every element of $V_1BC$, i.e.,
%%%%%%%%%%%%%%%
\begin{align*}
 g_{ij}=0, ~\text{for } ~i =1,2,\cdots,n,~j =1,2,\cdots,n_3.
\end{align*}
%%%%%%%%%%%%%%%
\noindent To ensure that (\ref{thm3_pf4}) is satisfied, we require that at least one among $f^{(k)}_1$, $k = 1,2,\ldots, {n \choose n_1+n_2}$, take a non-zero value after some assignment to the variables and LECs. This necessitates that, firstly, $f^{(k)}_1$ should be a non-zero rational polynomial for some $k$. It can be easily seen that $f^{(k)}_1$ is a non-zero rational polynomial for some $k$ iff {\small $f_1=\left(1-\prod_{k=1}^{n \choose n_1+n_2}(1-\delta^{(k)}_1f^{(k)}_1)\right)$} is a non-zero rational polynomial. Similarly, from (\ref{thm3_pf5}) and (\ref{thm3_pf6}), {\small $f_2=\left(1-\prod_{k=1}^{n \choose n_1+n_2}(1-\delta^{(k)}_2f^{(k)}_2)\right)$} and \mbox{{\small $f_3=\left(1-\prod_{k=1}^{n \choose n_1+n_3}(1-\delta^{(k)}_3f^{(k)}_3)\right)$}} must be non-zero rational polynomials. Hence, to satisfy (\ref{thm3_pf4})-(\ref{matrices_eq}) we need to find an assignment to  $V_1,~\underline{\varepsilon}',A$, and $B$, such that $f \neq 0$ and $g_{ij}=0$, for all  $(i,j)$. This means that there 
must exist an assignment such that $f^{(nr)} \neq 0$ and $g^{(nr)}_{ij}=0$. After the assignment to the variables, we require that $f^{(dr)} \neq 0$ and $g^{(dr)}_{ij}\neq 0$ as dividing by zero is prohibited. In order to formulate this as an algebraic problem, introduce a new variable $\delta$ and consider the polynomial $\left(1-\delta f^{(nr)}f^{(dr)} \prod_{(i,j)}g^{(dr)}_{ij}\right)$. From Weak Nullstellensatz \cite{CLO-book}, an
assignment to $\delta, ~V_1,~\underline{\varepsilon}',~A$, $B$,  $C$, and $\delta^{(k)}_i$, for all $(i,k)$, exist such that $g^{(nr)}_{ij}=0$, for all $(i,j)$, and $\left(1-\delta f^{(nr)}f^{(dr)} \prod_{(i,j)}g^{(dr)}_{ij}\right)=0$ iff $1$ does not belong to the ideal generated by the polynomials $g^{(nr)}_{ij}$ for all $(i,j)$ and $\left(1-\delta f^{(nr)}f^{(dr)} \prod_{(i,j)}g^{(dr)}_{ij}\right)$.
\end{IEEEproof}

\section{Proof of Theorem \ref{thm5}}
\label{appen_thm5}. 
\begin{IEEEproof}
To exactly recover ${X_1^\prime}^{n_1}$, ${X_2^\prime}^{n_2}$ and
${X_3^\prime}^{n_3}$ at $T_1$, $T_2$ and $T_3$ respectively, it is sufficient if the following network alignment conditions are satisfied.
%%%%%%%%%%%%%
\begin{align}
\label{thm5_pf1}
&\text{Span}(M_{32}V_3) \subset \text{Span}(M_{12}V_1)\\
\label{thm5_pf2}
&\text{Span}(M_{23}V_2) \subset \text{Span}(M_{13}V_1)\\
\nonumber
&\mbox{Rank}[M_{11}V_1 ~~M_{31}V_3]=n_1+n_3 \\
\label{thm5_pf3}
\Leftrightarrow ~&\mbox{Rank}[V_1 ~~M_{11}^{-1}M_{31}V_3]=n_1+n_3\\
\nonumber
&\mbox{Rank}[M_{22}V_2 ~~M_{12}V_1]=n_1+n_2\\
\label{thm5_pf4}
\Leftrightarrow ~&\mbox{Rank}[M_{12}^{-1}M_{22}V_2 ~~V_1]=n_1+n_2\\
\nonumber
&\mbox{Rank}[M_{33}V_3 ~~M_{13}V_1]=n_1+n_3\\
\label{thm5_pf5}
\Leftrightarrow ~&\mbox{Rank}[M_{13}^{-1}M_{33}V_3 ~~V_1]=n_1+n_3
\end{align}
%%%%%%%%%%%%%
\noindent Note that  (\ref{thm5_pf3})-(\ref{thm5_pf5}) are also necessary conditions whereas (\ref{thm5_pf1}) and (\ref{thm5_pf2}) are necessary when $(n_1+n_2)=(n_1+n_3)=n$. Clearly, (\ref{thm5_pf4}) and (\ref{thm5_pf5}) cannot be satisfied when $(n_1+n_2)>n$ and $(n_1+n_3)>n$ respectively.  Therefore, $(n_1+n_2) \leq n$. 

It is easily seen that the choice of $V_2$ and $V_3$, in (\ref{mincut0_poss1_V2_and_V3}), satisfy the conditions (\ref{thm5_pf1}) and (\ref{thm5_pf2}). If (\ref{thm5_pf3})-(\ref{thm5_pf5}) are satisfied then, the determinants of at least one of the $(n_1+n_3) \times (n_1+n_3)$ submatrices of {\small $[M_{11}V_1 ~~M_{31}V_3]$}, $(n_1+n_2) \times (n_1+n_2)$ submatrices of {\small $[M_{12}^{-1}M_{22}V_2 ~~V_1]$}, and $(n_1+n_3) \times (n_1+n_3)$ submatrices of {\small $[M_{13}^{-1}M_{33}V_3 ~~V_1]$}
will be non-zero rational polynomials. Let $f$ be the product of the numerators and denominators of these non-zero rational polynomials. Hence, by Lemma 1 in \cite{KoM}, for a sufficiently large field size, there exists an assignment of values to  $\underline{\varepsilon}$ and other variables involved such that the network alignment conditions are satisfied. Hence, the theorem is proved.
\end{IEEEproof}

\section{Proof of Theorem \ref{thm_GC}}
\label{appen_thm_GC}. 
\begin{IEEEproof}
{\em If part}: Using the precoding matrices given in (\ref{eqn-precod-mats}), if $X'_i(0)$ can be recovered from $Y^{(0 \oplus k)}_i$ for all $i$ then, the determinants in (\ref{eqn-NA-nec-suff}) are non-zero polynomials in $\left(\underline{\varepsilon_1}, \cdots, \underline{\varepsilon_{2n'+1}}\right)$ for $q=0$. Note that the transfer matrices $M_{ij}\left(\varepsilon,\alpha^q\right)$, for all $q$, in the $3$-S $3$-D MUN-D can also be simulated as the ones obtained from its instantaneous network counterpart (i.e., $q=0$) by multiplying each of the LEC by $\alpha^q$. Suppose that one of the determinants in (\ref{eqn-NA-nec-suff}) is a zero-polynomial for some $q$ (i.e., $X'_i(q)$ cannot be recovered from $Y^{(q \oplus k)}_i$ for at least one $i$). Then, this determinant is also a zero polynomial with $\left(\underline{\varepsilon_1}, \cdots, \underline{\varepsilon_{2n'+1}}\right)$ replaced by $\left({\underline{\varepsilon_1}}/{\alpha^q}, \cdots, {\underline{\varepsilon_{2n'+1}}}/{\alpha^q}\right)$ where, 
${\underline{\varepsilon_{l}}}/{\alpha^q}$ denotes each of the LECs multiplied by the inverse of $\alpha^q$. But this contradicts the fact that all the determinants  in (\ref{eqn-NA-nec-suff}) are non-zero polynomials in $\left(\underline{\varepsilon_1}, \cdots, \underline{\varepsilon_{2n'+1}}\right)$, in the $q=0$ case. Hence, the determinants in (\ref{eqn-NA-nec-suff}) are non-zero polynomials for all $q$ and all $i$. Using a sufficiently large field size such that $k|2^m-1$, by Lemma $1$ in \cite{KoM}, there exists an assignment to $\left(\underline{\varepsilon_1}, \cdots, \underline{\varepsilon_{2n'+1}}\right)$ such that determinants in (\ref{eqn-NA-nec-suff}) are non-zero for all $q$. 

{\em Only-if part}: Using the precoding matrices given in (\ref{eqn-precod-mats}), if $X'_i(q)$ can be recovered from $Y^{(0 \oplus k)}_i$ for all $i$ and for some $q \neq 0$ then, the determinants in (\ref{eqn-NA-nec-suff}) are non-zero polynomials in $\left(\underline{\varepsilon_1}, \cdots, \underline{\varepsilon_{2n'+1}}\right)$ for some $q=q' \neq 0$. Suppose that one of the determinants in (\ref{eqn-NA-nec-suff}) is a zero-polynomial for $q=0$ (i.e., $X'_i(0)$ cannot be recovered from $Y^{(0 \oplus k)}_i$ for at least one $i$). Then, this determinant is also a zero polynomial with $\left(\underline{\varepsilon_1}, \cdots, \underline{\varepsilon_{2n'+1}}\right)$ replaced by $\left({\underline{\varepsilon_1}}{\alpha^{q'}}, \cdots, {\underline{\varepsilon_{2n'+1}}}{\alpha^{q'}}\right)$ where, ${\underline{\varepsilon_{l}}}{\alpha^{q'}}$ denotes each of the LECs multiplied by $\alpha^{q'}$. But this contradicts the fact that all the determinants in  (\ref{eqn-NA-nec-suff}) are non-zero polynomials in the 
$q=q'$ case. Thus, the determinants in (\ref{eqn-NA-nec-suff}) are non-zero polynomials for $q=0$ and all $i$. Hence, using the ``If part'', the determinants in (\ref{eqn-NA-nec-suff}) are non-zero polynomials for all $q$ and all $i$. Using a sufficiently large field size such that $k|2^m-1$, by Lemma $1$ in \cite{KoM}, there exists an assignment to $\left(\underline{\varepsilon_1}, \cdots, \underline{\varepsilon_{2n'+1}}\right)$ such that determinants in (\ref{eqn-NA-nec-suff}) are non-zero for all $q$. 
\end{IEEEproof}

\end{document}